\documentclass[a4paper, 11pt]{article}
\usepackage{srcltx,graphicx}
\usepackage{amsmath, amssymb, amsthm}
\usepackage{color}
\usepackage{lscape}
\usepackage{multirow}
\usepackage{psfrag}
\usepackage[dvips]{hyperref}
\usepackage[hang]{subfigure}

\newtheorem{theorem}{Theorem}
\newtheorem{lemma}{Lemma}
\newtheorem{definition}{Definition}

\newcommand\bbR{\mathbb{R}}
\newcommand\bbN{\mathbb{N}}
\newcommand\bxi{\boldsymbol{\xi}}
\newcommand\bx{\boldsymbol{x}}
\newcommand\bq{\boldsymbol{q}}

\newcommand\be{\boldsymbol{e}}

\newcommand\br{\boldsymbol{r}}

\newcommand\bPhi{\boldsymbol{\Phi}}
\newcommand\dd{\,\mathrm{d}}
\newcommand\He{\mathit{He}}
\newcommand\Kn{\mathit{Kn}}
\newcommand\bw{\boldsymbol{w}}
\newcommand\RM{{\cal R}_M}

\numberwithin{equation}{section}

\setlength{\oddsidemargin}{0cm} 
\setlength{\evensidemargin}{0cm}
\setlength{\textwidth}{150mm}
\setlength{\textheight}{230mm}


\graphicspath{{images/}}

{\theoremstyle{remark} \newtheorem{remark}{Remark}}
\newtheorem{corollary}{Corollary}

\title{Globally Hyperbolic Regularization of Grad's Moment System in
  One Dimensional Space} 

\author{Zhenning Cai\thanks{School of Mathematical Sciences, Peking
    University, Beijing, China, email: {\tt caizn@pku.edu.cn}.},~~
  Yuwei Fan\thanks{School of Mathematical Sciences, Peking University,
    Beijing, China, email: {\tt ywfan@pku.edu.cn}.},~~ Ruo
  Li\thanks{CAPT, LMAM \& School of Mathematical Sciences, Peking
    University, Beijing, China, email: {\tt rli@math.pku.edu.cn}.}}

\begin{document}
\maketitle
\begin{abstract}
  In this paper, we present a regularization to 1D Grad's moment
  system to achieve global hyperbolicity. The regularization is based
  on the observation that the characteristic polynomial of the
  Jacobian of the flux in Grad's moment system is independent of the
  intermediate coefficients in the Hermite expansion. The method is
  not relied on the form of the collision at all, thus this
  regularization is applicable to the system without collision terms.
  Moreover, the proposed approach is proved to be the unique one if
  only the last moment equation is allowed to be alternated to match
  the condition that the characteristic speeds coincide with the
  Gauss-Hermite interpolation points. The hyperbolic structure of the
  regularized system, including the signal speeds, Riemann invariants
  and the properties of the characteristic waves including the
  rarefaction wave, contact discontinuity and shock are provided in
  the perfect formations.

\vspace*{4mm}
\noindent {\bf Keywords:} Grad's moment system; regularization; global
hyperbolicity; characteristic wave
\end{abstract}

\section{Introduction}
Nowadays, the kinetic gas theory is drawing increasing attentions in
the high-tech fields. The kinetic theory is considered as a mesoscopic
description of fluids, which is based on the classical Boltzmann
equation from statistical physics. However, a full accurate mesoscopic
model is still too complex for lots of problems. During a long period
of time, people have been looking for a median model between the
classical macroscopic equations and the Boltzmann equation. This can
be tracked back to the work of Burnett \cite{Burnett}. As is well
known, the Burnett equations are later proved to be linearly unstable
by Bobylev \cite{Bobylev}. Another way leading to linearly stable
intermediate models is the moment method proposed by Grad \cite{Grad}.
Since this method was discarded by Grad himself, very few works
contributed to this area in the last century. However, this field is
becoming active in the recent years, since people find that some
traditional difficulties in the moment equations can be ignored by
some regularizations to these models, e.g. \cite{Levermore, Jin,
  Struchtrup2003, Torrilhon2010}.

This paper is focusing on a major accusation against the moment method
--- the lack of global hyperbolicity for Grad's moment system. This
deficiency directly causes blow-ups when the distribution is far away
from the equilibrium state. It has been reported that increasing the
number of moments shows no improvements in the numerical experiments
\cite{NRxx_new}. Levermore's work \cite{Levermore} gave a theoretical
way to the general globally hyperbolic moment equations, while it is
still far from practical use due to the lack of an analytical form of
his model. Later, using the Pearson-Type-IV distribution, Torrilhon
\cite{Torrilhon2010} also proposed a 13-moment system, which is
globally hyperbolic when reduced to the one-dimensional case, but its
generalization to large number moment systems seems to be difficult.
In this work, we concentrate on the simple 1D case and achieve a
globally hyperbolic regularization to Grad's moment system
successfully.

The first essential observation is that the characteristic polynomial
of Jacobian of the flux of a general Grad's moment system has a simple
expression, which only depends on the macroscopic velocity,
temperature, and two other coefficients in the Hermite expansion of
highest orders. This amazing result directly leads to the possibility
of a globally hyperbolic regularization. It is found that these two
coefficients take the eigenvalues away from the real axis, resulting
in the non-hyperbolicity. We discover an elegant modification to the
last equation of the moment system to eliminate the terms involving
these two terms in the characteristic polynomial and obtain a globally
hyperbolic system. This new hyperbolic system has lots of fascinating
properties. All characteristic fields are either genuinely nonlinear
or linearly degenerate. The investigation into the three kinds of
elementary characteristic waves (rarefaction waves, contact
discontinuities, and shock waves) illustrates substantial similarities
with Euler equations. The regularization proposed is very different
from the classical way which tries to give a reasonable recovery of
the truncated moments, which is justified in the view of
characteristic speeds and order of accuracy. The convergence in the
number of moments is illustrated through the numerical study of a
shock tube problem.

The rest of this paper is arranged as follows: in Section
\ref{sec:NRxx_method}, the Boltzmann equation and the moment method
are revised. In Section \ref{sec:Grad}, a detailed investigation on
the hyperbolicity of 1D Grad's moment system is carried out. The
regularization of the 1D Grad's moment system to achieve global
hyperbolicity is derived in Section \ref{sec:hyperbolic}, with
detailed discussion on its properties. A short discussion on the
moment equations with collision terms is put forward in Section
\ref{sec:col}. Section \ref{sec:num} is devoted to the numerical study
of a shock tube problem. Finally, some concluding remarks are given in
Section \ref{sec:conclusion}.


\section{The moment method in kinetic theory} \label{sec:NRxx_method}
In the kinetic gas theory, the state of a gas on the microscopic level
is described by the velocity distribution function on each spatial
point $\bx \in \Omega \subset \bbR^D$. For a time-evolving problem,
the distribution function can be described as
\begin{equation}
F:\bbR^+ \times \Omega \times \bbR^D \rightarrow \bbR^+ \cup \{0\},
  \quad (t, \bx, \bxi) \mapsto F(t, \bx, \bxi),
\end{equation}
where $t$ is the time and $\bxi$ denotes the velocity of microscopic
gas particles. As in \cite{Grad}, we introduce the mass density
\begin{equation}
f(t, \bx, \bxi) = m F(t, \bx, \bxi),
\end{equation}
$m$ being the mass of the molecule. The physical case is $D = 3$,
while in this paper, we only consider a 1D model problem with $D = 1$.
Thus, $\bx$ and $\bxi$ will be written in plain font as $x$ and $\xi$
later on.

\subsection{The Boltzmann equation and conservation laws}
\label{sec:boltzmann_equations}
The mass density $f$ satisfies the Boltzmann equation, which reads
\begin{equation} \label{eq:Boltzmann}
\frac{\partial f}{\partial t} +
  \xi \frac{\partial f}{\partial x} = Q(f,f),
\end{equation}
where $Q(f,f)$ is the collision term with a complex expression, which
models the interaction between particles. In most part of this paper,
we only consider the collisionless case, thus $Q(f,f) = 0$ will be
assumed if not specified. However, the readers may keep in mind that
our final aim is to provide an improved description of the Boltzmann
equation with collision term using the moment method, and we will
return to this topic in Section \ref{sec:col}.

The basic variables, including the density, the momentum density and
total energy density, are defined as
\begin{equation}
\begin{aligned}
\rho(t,x) &= \int_{\bbR} f(t, x, \xi) \dd \xi,\\
\rho(t,x) u(t,x) &= \int_{\bbR}\xi f(t,x,\xi)\dd\xi,\\
\frac{1}{2} \rho(t,x) |u(t,x)|^2 + \frac{1}{2} \rho(t,x)\theta(t,x)
  &= \int_{\bbR} \frac{1}{2}|\xi|^2 f(t,x,\xi)\dd\xi.
\end{aligned}
\end{equation}
Here $u$ is the macroscopic velocity, and $\theta$ is the
multiplication of gas constant and temperature. Multiplying the
Boltzmann equation \eqref{eq:Boltzmann} by $(1, \xi, \xi^2/2)^T$,
integrating both sides over $\bbR$ with respect to $\xi$, and then
making some simplifications, we get the non-conservative form of the
conservation laws as
\begin{subequations}\label{eq:conservation_laws}
\begin{align}
&\frac{\partial \rho}{\partial t} + u \frac{\partial \rho}{\partial x}
  + \rho \frac{\partial u}{\partial x} = 0, \\
& \rho \frac{\partial u}{\partial t} + \frac{\partial p}{\partial x}
  + \rho u \frac{\partial u}{\partial x} = 0, \\
&\frac{1}{2}\rho \frac{\partial \theta}{\partial t}
  + \frac{1}{2} \rho u \frac{\partial \theta}{\partial x}
  + \frac{\partial q}{\partial x} + p \frac{\partial u}{\partial x}
  = 0,
\end{align}
\end{subequations}
where $p$ is the pressure and $q$ is the heat flux. They are defined
as
\begin{equation}
p = \rho \theta, \quad
  q = \frac{1}{2} \int_{\bbR} (\xi - u)^3 f \dd \xi.
\end{equation}

\subsection{The moment method} \label{sec:moment_method_detail}
The moment method was raised by Grad in \cite{Grad}, where a
thirteen moment system was introduced. However, systems with large
moment numbers are not investigated until recently (e.g. \cite{Au,
Weiss, NRxx, NRxx_new}). Here we use the notations in \cite{NRxx,
NRxx_new}, and expand the $f(t, x, \xi)$ as
\begin{equation} \label{eq:expansion}
f(t,x,\xi) = \sum_{k \in \bbN}f_k(t,x)
  \mathcal{H}_{\theta(t,x),k}
  \left( \frac{\xi - u(t,x)}{\sqrt{\theta(t,x)}} \right),
\end{equation}
where
\begin{equation}
\mathcal{H}_{\theta,k}(v) =
  \frac{1}{\sqrt{2\pi}} \theta^{-\frac{k+1}{2}}
  \He_k(v) \exp \left( -\frac{v^2}{2} \right),
\end{equation}
where $\He_k$ is the $k$-th Hermite polynomial, defined by
\begin{equation}
\He_k(x) = (-1)^k \exp \left( \frac{x^2}{2} \right)
  \frac{\dd^k}{\dd x^k} \exp \left( -\frac{x^2}{2} \right).
\end{equation}
Based on this expansion, some simple properties can be deduced:
\begin{equation}
f_0 = \rho, \quad f_1 = f_2 = 0, \quad q = 3f_3.
\end{equation}
If we put \eqref{eq:expansion} into the Boltzmann equation
\eqref{eq:Boltzmann}, the equation for each moment can be deduced as
\begin{equation} \label{eq:discrete_f}
\begin{split}
\frac{\partial f_k}{\partial t} & 
  - f_{k-1} \frac{\theta}{\rho} \frac{\partial \rho}{\partial x}
  + (k+1) f_k \frac{\partial u}{\partial x} + \left(
    \frac{1}{2}\theta f_{k-3} +
    \frac{k-1}{2} f_{k-1}
  \right) \frac{\partial \theta}{\partial x} \\
& - \frac{3}{\rho} f_{k-2} \frac{\partial f_3}{\partial x}
  + \theta \frac{\partial f_{k-1}}{\partial x}
  + u \frac{\partial f_k}{\partial x}
  + (k+1) \frac{\partial f_{k+1}}{\partial x} = 0,
	\quad \text{for }k \geqslant 3.
\end{split}
\end{equation}
For details, we refer the readers to \cite{NRxx_new}. The conservation
laws \eqref{eq:conservation_laws} together with \eqref{eq:discrete_f}
form a moment system with infinite number of equations. In order to
get a closed system with finite number of equations, one can follow
Grad's idea \cite{Grad} and let $f_{M+1} = 0$ for some $M \geqslant
3$. Thus a closed system with $M+1$ moments is obtained.


\section{Hyperbolicity of Grad's moment systems} \label{sec:Grad}
A 1D quasilinear system
\begin{equation}
\frac{\partial \bq}{\partial t} +
  {\bf A}(\bq) \frac{\partial \bq}{\partial x} = 0
\end{equation}
is pronounced to be hyperbolic for a particular $\bq_0$ if the matrix ${\bf
A}(\bq_0)$ is diagonalizable with real eigenvalues. For Grad's
systems, the hyperbolicity can only be obtained in the vicinity of
Maxwellian \cite{Muller, Brini, Torrilhon2010}. The loss of
hyperbolicity makes the Grad's system overdetermined for strongly
non-equilibrium gases, and severely restrict the application of moment
methods. In this section, we are going to study the 1D model problem
and find the way in which high order moments affect the hyperbolicity
of moment system.

Let $\bw_M = (\rho, u, \theta, f_3,\cdots,f_M)^T \in \bbR^{M+1}$, $M
\in \bbN\text{ and }M \geqslant 2$. The Grad's moment system
\eqref{eq:conservation_laws} and \eqref{eq:discrete_f} with $f_{M+1} =
0$ is then written as
\begin{equation} \label{eq:Grad}
\frac{\partial \bw_M}{\partial t} +
  {\bf A}_M \frac{\partial \bw_M}{\partial x} = 0,
\end{equation}
where ${\bf A}_M$ is a lower Hessenberg matrix defined in
\eqref{eq:A_M}. We write the matrix in a simplified formation with a
translation and similarity transformation. Let us define
\begin{equation}
{\bf \Lambda} = \mathrm{diag} \left\{
  1, \rho \theta^{- 1 / 2}, \frac{1}{2} \rho \theta^{- 1},
  \theta^{- 3 / 2}, \cdots, \theta^{- M / 2}
\right\}, \qquad
  g_j = \frac{f_j}{\rho \theta^{j / 2}}, \quad j = 3, \cdots, M.
\end{equation}
Then 
\begin{equation}
{\bf A}_M = u{\bf I}+ \sqrt{\theta} {\bf \Lambda}^{-1}
   \tilde{\bf A}_M {\bf \Lambda},
\end{equation}
where $\tilde{\bf A}_M$ is defined in \eqref{eq:tilde_A_M}. Thus, if
\begin{displaymath}
\tilde{\lambda}_j, \quad j=1,\cdots,M+1
\end{displaymath}
are all the eigenvalues of $\tilde{\bf A}_M$, then
\begin{displaymath}
u + \tilde{\lambda}_j \sqrt{\theta}, \quad  j=1,\cdots,M+1
\end{displaymath}
are all the eigenvalues of ${\bf A}_M$.

\begin{landscape}
\renewcommand{\arraystretch}{1.5}
\begin{equation} \label{eq:A_M}
{\bf A}_M = \begin{pmatrix}
  u & \rho & 0 & \hdotsfor{6} & 0\\
  \theta / \rho & u & 1 & 0 & \hdotsfor{5} & 0\\
  0 & 2 \theta & u & 6 / \rho & 0 & \hdotsfor{4} & 0\\
  0 & 4 f_3 & \rho \theta / 2 & u & 4 & 0 & \hdotsfor{3} & 0\\
  - \theta f_3 / \rho & 5 f_4 & 3 f_3 / 2 & \theta & u & 5 & 0 & \hdotsfor{2} & 0\\
  \hdotsfor{10} \\
  -\theta f_{M - 2} / \rho & M f_{M - 1} &
    \frac{1}{2}[(M - 2) f_{M - 2} + \theta f_{M - 4}] &
    -3 f_{M - 3} / \rho & 0 & \cdots & 0 &\theta & u & M\\
  -\theta f_{M-1} / \rho & (M+1) f_M &
    \frac{1}{2}[(M-1) f_{M - 1} + \theta f_{M - 3}] &
    -3 f_{M-2} / \rho & 0 & \hdotsfor{2} & 0 & \theta & u
\end{pmatrix}
\end{equation}

\vspace{1cm}

\begin{equation} \label{eq:tilde_A_M}
\tilde{\bf A}_M = \begin{pmatrix}
  0 & 1 & 0 & \hdotsfor{6} & 0\\
  1 & 0 & 2 & 0 & \hdotsfor{5} & 0\\
  0 & 1 & 0 & 3 & 0 & \hdotsfor{4} & 0\\
  0 & 4 g_3 & 1 & 0 & 4 & 0 & \hdotsfor{3} & 0\\
  -g_3 & 5 g_4 & 3 g_3 & 1 & 0 & 5 & 0 & \hdotsfor{2} & 0 \\
  \hdotsfor{10} \\
  -g_{M - 2} & M g_{M - 1} &
    (M - 2) g_{M - 2} + g_{M - 4} &
    -3 g_{M - 3} & 0 & \cdots & 0 & 1 & 0 & M\\
  -g_{M-1} & (M+1) g_M &
    (M-1) g_{M - 1} + g_{M - 3} &
    -3 g_{M-2} & 0 & \hdotsfor{2} & 0 & 1 & 0
\end{pmatrix}
\end{equation}
\end{landscape}

The matrix $\tilde{\bf A}_M$ can be considered as ``simple'' in a
sense. It contains only dimensionless variables $g_3, \cdots, g_M$
with linear dependence. The diagonal elements of $\tilde{\bf A}_M$ are
all vanished, and the subdiagonal entries are all $1$. The
superdiagonal elements are equal to their row numbers. Meanwhile,
apart from the tridiagonal part, only the first four columns are
nonzero. These formation give us possibility to study its eigenvalues.

We first present the main result of this section in Theorem
\ref{thm:cp_Grad}. In this paper, $|\cdot|$ is used to denote the
determinant of a matrix.
\begin{theorem} \label{thm:cp_Grad}
The characteristic polynomial of $\tilde{\bf A}_M$ is
\begin{equation}
\left| \lambda {\bf I} - \tilde{\bf A}_M \right| =
  \He_{M+1}(\lambda) - \frac{1}{2} (M + 1)! \cdot
    [(\lambda^2 - 1) g_{M-1} + 2\lambda g_M].
\end{equation}
\end{theorem}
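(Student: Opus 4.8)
The plan is to compute the determinant $|\lambda\mathbf{I}-\tilde{\mathbf{A}}_M|$ by expanding along a convenient line and exploiting the near-tridiagonal structure. The matrix $\lambda\mathbf{I}-\tilde{\mathbf{A}}_M$ is tridiagonal except for the first four columns in rows $4,\dots,M+1$, where the extra entries $-g_{j-2}$ (column 1), $-j\,g_{j-1}$ (column 2), $-(j-2)g_{j-2}-g_{j-4}$ (column 3) and $3g_{j-3}$ (column 4) appear. Since Theorem~\ref{thm:cp_Grad} claims the answer depends only on $g_{M-1}$ and $g_M$, the natural strategy is to show that all the intermediate $g_j$'s cancel. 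Concretely, I would expand the determinant along the \emph{last row}, which contains only the entries $-g_{M-1}$, $-(M+1)g_M$, $-(M-1)g_{M-1}-g_{M-3}$, $3g_{M-2}$ in columns $1$--$4$ and the tridiagonal pair $-1$ (column $M$) and $\lambda$ (column $M+1$). This produces a combination of minors of size $M$, and the key observation is that the minors multiplying the ``bad'' coefficients $g_{M-3}$ and $g_{M-2}$ will turn out to vanish, while the minor multiplying $\lambda$ (the $(M+1,M+1)$-cofactor) is exactly the characteristic-polynomial-type object for the $M$-dimensional analogue.

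The cleanest route is therefore induction on $M$. For the base case one checks the statement directly on a small matrix (say $M=2$ or $M=3$, where $\tilde{\mathbf{A}}_M$ is genuinely tridiagonal and the claimed formula reduces to $\He_{M+1}(\lambda)$, consistent with the fact that $g_{M-1}, g_M$ do not yet enter nontrivially). For the inductive step I would like to relate $|\lambda\mathbf{I}-\tilde{\mathbf{A}}_{M+1}|$ to $|\lambda\mathbf{I}-\tilde{\mathbf{A}}_M|$. Here a subtlety is that $\tilde{\mathbf{A}}_M$ is \emph{not} a leading principal submatrix of $\tilde{\mathbf{A}}_{M+1}$ in the naive sense, because the entries in columns $1$--$4$ of the last two rows change with $M$. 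So rather than a pure submatrix recursion I would expand $|\lambda\mathbf{I}-\tilde{\mathbf{A}}_{M+1}|$ along its last row, write the two surviving size-$(M+1)$ minors in terms of determinants of matrices that \emph{are} of $\tilde{\mathbf{A}}_M$-type (after a single further row/column expansion to strip the now-decoupled last tridiagonal entry), and collect. The recursion that emerges should have the same shape as the three-term recursion $\He_{k+1}(\lambda)=\lambda\He_k(\lambda)-k\He_{k-1}(\lambda)$ for Hermite polynomials, which is exactly why $\He_{M+1}(\lambda)$ appears; the inhomogeneous part, carrying $(\lambda^2-1)g_{M-1}+2\lambda g_M$, will be generated by the column-$1$ and column-$2$ entries $-g_{M-1}$ and $-(M+1)g_M$ in the last row, with the $\lambda^2-1$ and $2\lambda$ factors coming from expanding the $2\times 2$ tridiagonal tail $\left(\begin{smallmatrix}\lambda & -2\\ -1 & \lambda\end{smallmatrix}\right)$-type blocks and using $\He_2(\lambda)=\lambda^2-1$, $\He_1(\lambda)=\lambda$.

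I expect the main obstacle to be bookkeeping: proving that the minors attached to $g_{M-3}$, $g_{M-2}$, and $g_{M-3}$ (from column $3$) genuinely vanish, and more generally that the contributions of all the ``middle'' $g_j$'s telescope away. The vanishing should follow because, after deleting the last row and the relevant column, the resulting matrix has a full zero column (the structural zeros in columns $5,\dots,M-2$ of $\tilde{\mathbf{A}}_M$ are what make this work) — but one has to track the bordering carefully, since deleting column $1$, $2$, $3$, or $4$ shifts which columns become all-zero. A clean way to organize this is to first perform column operations using the subdiagonal $1$'s to clear the extra entries in columns $1$--$4$ of the lower rows, reducing to an honestly tridiagonal-plus-two-corner matrix before expanding; I would check whether such column operations can be done without introducing $\lambda$-dependence into the corner entries, which would keep the final expansion clean. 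The Hermite three-term recurrence and the identities $\He_0=1$, $\He_1(\lambda)=\lambda$, $\He_2(\lambda)=\lambda^2-1$ are the only external facts needed, and the factor $\tfrac12(M+1)!$ will drop out of matching the normalization of the recursion with the monic-times-$(k)$ structure of $\He_k$.
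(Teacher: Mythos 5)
Your overall strategy (expand along the last row, induct on $M$, and let the tridiagonal tail produce the Hermite three-term recurrence) can in fact be pushed through, and it is genuinely different from the paper's argument, which instead differentiates $|\lambda {\bf I}-\tilde{\bf A}_M|$ with respect to each $g_j$, uses the five cofactor positions of $g_j$ together with a recurrence for the tail tridiagonal determinants to show the derivative vanishes for the intermediate $g_j$'s, computes the $g_{M-1}$, $g_M$ derivatives explicitly, and finally evaluates the $g$-independent remainder at $g_3=\dots=g_M=0$. However, the mechanism you rely on is wrong at its key step: the minors attached to the last-row entries containing $g_{M-3}$ (column $3$) and $g_{M-2}$ (column $4$) do \emph{not} vanish. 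Deleting the last row and column $3$ (resp.\ column $4$) leaves a block-triangular matrix whose determinant is $\pm\tfrac{M!}{2}(\lambda^2-1)$ (resp.\ $\pm\tfrac{M!}{6}(\lambda^3-3\lambda)$); there is no all-zero column, because every retained column $k\geqslant 5$ keeps its superdiagonal entry $-(k-1)$ in row $k-1$. Likewise your fallback of clearing the column-$1$--$4$ entries by column operations with the subdiagonal $-1$'s does not stay clean: eliminating a $g_j$ in column $1$ using column $j+1$ drags $\lambda$ and the superdiagonal entry into column $1$, so the reduction to ``tridiagonal plus corners'' with $\lambda$-free corners fails.

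What actually closes the induction is different from what you describe. Since the leading principal blocks of $\tilde{\bf A}_M$ are exactly the lower-order matrices $\tilde{\bf A}_{M-1}$, $\tilde{\bf A}_{M-2}$ (so your worry about the last rows changing with $M$ is unfounded), the last-row expansion gives the second-order recursion $|\lambda{\bf I}-\tilde{\bf A}_M|=\lambda|\lambda{\bf I}-\tilde{\bf A}_{M-1}|-M|\lambda{\bf I}-\tilde{\bf A}_{M-2}|+T$, where $T$ collects the column-$1$--$4$ cofactor terms, whose minors are $g$-free constants times $1$, $\lambda$, $\lambda^2-1$, $\lambda^3-3\lambda$. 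The unwanted $g_{M-3}$ and $g_{M-2}$ contributions in $T$ cancel not because their minors vanish but against the $g_{M-3}$, $g_{M-2}$ terms already carried by the induction hypothesis inside $|\lambda{\bf I}-\tilde{\bf A}_{M-1}|$ and $|\lambda{\bf I}-\tilde{\bf A}_{M-2}|$; after that cancellation the surviving terms assemble into $\He_{M+1}(\lambda)-\tfrac{(M+1)!}{2}[(\lambda^2-1)g_{M-1}+2\lambda g_M]$. You would need to carry out this bookkeeping explicitly, and, because the recursion is second order, verify two base cases; note also that your stated base case for $M=3$ is false, since there the characteristic polynomial is $\He_4(\lambda)-24\lambda g_3$, not $\He_4(\lambda)$ (only the $(\lambda^2-1)g_{M-1}$ part drops out, because $g_2=0$). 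As written, with the vanishing-minor claim and the incorrect base case, the proposal does not constitute a proof.
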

The result is incredibly simple, and therefore gives us a realistic
possibility to make some kind of regularization to gain global
hyperbolicity, which will be discussed in the next section. To proof
this theorem, we need the following two lemmas.

\begin{lemma} \label{lem:det_diff}
Suppose that a square matrix ${\bf A} = (a_{ij})$ depends on $N$
variables $x_1, \cdots, x_N$. Then the partial derivatives of $|\bf
A|$ can be calculated as
\begin{equation} \label{eq:det_diff}
\frac{\partial |{\bf A}|}{\partial x_k} =
  \sum_{i,j} (-1)^{i+j} \frac{\partial a_{ij}}{\partial x_k} A^{ij},
  k = 1, \cdots, N.
\end{equation}
where $A^{ij}$ is the $(i,j)$-th minor of matrix $\bf A$, which is
defined to be the determinant of the submatrix obtained by removing
from $\bf A$ its $i$-th row and $j$-th column.
\end{lemma}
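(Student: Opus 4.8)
The plan is to reduce the statement to the classical Jacobi formula for the entrywise derivative of a determinant, and then finish by the chain rule. Concretely, I would first regard the $N^2$ entries $a_{ij}$ as independent scalar variables and establish the identity
\[
\frac{\partial |{\bf A}|}{\partial a_{ij}} = (-1)^{i+j} A^{ij}
\]
for each fixed pair $(i,j)$. Once this is available, the lemma is immediate: each $a_{ij}$ is a function of $x_1,\dots,x_N$, so the ordinary chain rule gives
\[
\frac{\partial |{\bf A}|}{\partial x_k} = \sum_{i,j} \frac{\partial |{\bf A}|}{\partial a_{ij}} \cdot \frac{\partial a_{ij}}{\partial x_k} = \sum_{i,j} (-1)^{i+j} A^{ij} \frac{\partial a_{ij}}{\partial x_k},
\]
which is exactly \eqref{eq:det_diff}.

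For the key identity I would invoke the Laplace (cofactor) expansion of the determinant along the $i$-th row,
\[
|{\bf A}| = \sum_{l=1}^{N} (-1)^{i+l} a_{il} A^{il}.
\]
The crucial observation is that every minor $A^{il}$ occurring here is the determinant of a submatrix that retains none of the entries of the $i$-th row of ${\bf A}$; in particular $A^{il}$ is independent of $a_{ij}$ for every $l$. Hence, differentiating the right-hand side with respect to $a_{ij}$, only the $l=j$ term survives and it contributes $(-1)^{i+j} A^{ij}$, proving the claim. The same conclusion can be reached directly from the Leibniz formula $|{\bf A}| = \sum_{\sigma} \operatorname{sgn}(\sigma) \prod_{l} a_{l\sigma(l)}$ by differentiating termwise and collecting, for each fixed $(i,j)$, exactly those permutations $\sigma$ with $\sigma(i)=j$; that partial sum is precisely the signed minor $(-1)^{i+j}A^{ij}$.

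There is no genuine obstacle here, since the statement is elementary; the only point requiring care is the bookkeeping. One must verify carefully that the minors $A^{ij}$ really are independent of all entries of row $i$ (so that differentiating the row-$i$ Laplace expansion with respect to a single entry leaves only one surviving term), and that when $x_k$ enters ${\bf A}$ through several entries simultaneously every such contribution is picked up by the sum over $(i,j)$. If one prefers to avoid treating the $a_{ij}$ as independent variables, the whole argument can instead be phrased intrinsically through the multilinearity of the determinant in its rows: expanding the $i$-th row in the standard basis and using linearity in that slot isolates the coefficient of $a_{ij}$, which is again the cofactor $(-1)^{i+j}A^{ij}$, and then differentiating along the curve $x_k \mapsto {\bf A}(x_1,\dots,x_N)$ yields the stated formula.
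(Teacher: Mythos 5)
Your argument is correct: the chain rule reduces everything to the identity $\partial |{\bf A}|/\partial a_{ij} = (-1)^{i+j}A^{ij}$, which you establish properly via the row-$i$ Laplace expansion (noting that the minors $A^{il}$ contain no entries of row $i$), or equivalently from the Leibniz formula. The paper itself offers no proof, dismissing the lemma as a familiar fact of linear algebra, so your write-up simply supplies the standard argument the authors had in mind; there is nothing to add or correct.
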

This is a familiar result in linear algebra, and will not be proved
here.

\begin{lemma} \label{lem:recur}
Define tridiagonal matrices
\begin{equation} \label{eq:D}
\renewcommand{\arraystretch}{1.2}
{\bf D}_j = \begin{pmatrix}
\lambda & -(j+1) & 0 & \hdotsfor{3} & 0 \\
-1 & \lambda & -(j+2) & 0 & \hdotsfor{2} & 0 \\
0 & -1 & \lambda & -(j+3) & 0 & \cdots & 0 \\
\hdotsfor{7}\\
0 & \hdotsfor{2} & 0 & -1 & \lambda & -M \\
0 & \hdotsfor{3} & 0 & -1 & \lambda \\
\end{pmatrix}, \quad 0 \leqslant j \leqslant M.
\end{equation}
The following relations for the determinants of ${\bf D}_j$ hold:
\begin{equation}
|{\bf D}_j| = \lambda |{\bf D}_{j+1}| - (j+1) |{\bf D}_{j+2}|,
  \quad 0 \leqslant j \leqslant M - 2.
\end{equation}
\end{lemma}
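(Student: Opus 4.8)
The plan is to obtain the recurrence directly from a Laplace (cofactor) expansion of $|{\bf D}_j|$ along its first row. First I would record the exact size and shape of ${\bf D}_j$: it is the $(M-j+1)\times(M-j+1)$ tridiagonal matrix with all diagonal entries $\lambda$, all subdiagonal entries $-1$, and superdiagonal entries $-(j+1), -(j+2), \ldots, -M$ read from top to bottom. In particular its first row is $(\lambda,\, -(j+1),\, 0, \ldots, 0)$, so the expansion along that row has only two nonzero contributions, coming from columns $1$ and $2$.

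The term from column $1$ is $\lambda$ times the minor obtained by deleting the first row and first column of ${\bf D}_j$. That minor is again tridiagonal with diagonal $\lambda$, subdiagonal $-1$, and superdiagonal $-(j+2), \ldots, -M$, which is exactly ${\bf D}_{j+1}$ by definition; its cofactor sign is $(-1)^{1+1}=+1$, so this term equals $\lambda\,|{\bf D}_{j+1}|$. For the term from column $2$, the cofactor of the entry $-(j+1)$ is $(-1)^{1+2}$ times the minor obtained by deleting row $1$ and column $2$. The first column of that $(M-j)\times(M-j)$ submatrix is $(-1,\,0,\ldots,0)^T$, its only nonzero entry being the former $(2,1)$ entry of ${\bf D}_j$; a further expansion of this submatrix along its own first column therefore gives $-\,|{\bf D}_{j+2}|$, the remaining block being precisely ${\bf D}_{j+2}$ (diagonal $\lambda$, subdiagonal $-1$, superdiagonal $-(j+3),\ldots,-M$). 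Collecting signs,
\begin{equation*}
|{\bf D}_j| = \lambda\,|{\bf D}_{j+1}| + \bigl(-(j+1)\bigr)\cdot(-1)^{1+2}\cdot\bigl(-|{\bf D}_{j+2}|\bigr) = \lambda\,|{\bf D}_{j+1}| - (j+1)\,|{\bf D}_{j+2}|,
\end{equation*}
which is the claimed identity. The constraint $0 \leqslant j \leqslant M-2$ is exactly what makes ${\bf D}_{j+2}$ a genuine matrix (size at least $1$), so no degenerate cases arise.

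There is no serious obstacle here; this is a routine three-term tridiagonal determinant recursion, and it may also be invoked as an instance of Lemma \ref{lem:det_diff}-style cofactor reasoning. The only point requiring a little care is the index bookkeeping: one must verify that deleting the leading row and column of ${\bf D}_j$ shifts the superdiagonal labels from ``starting at $j+1$'' to ``starting at $j+2$'' so that the sub-block is genuinely ${\bf D}_{j+1}$ (and similarly ${\bf D}_{j+2}$ after the double deletion), together with keeping the cofactor signs $(-1)^{i+j}$ straight. Equivalently the same computation can be carried out by expanding along the first column of ${\bf D}_j$, namely $(\lambda,-1,0,\ldots,0)^T$, which is marginally cleaner.
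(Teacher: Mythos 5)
Your proof is correct, but it follows a different route from the paper's. You use a straightforward Laplace expansion along the first row of ${\bf D}_j$ (two nonzero entries), followed by an expansion of the $(1,2)$-minor along its first column, which yields $|{\bf D}_j| = \lambda|{\bf D}_{j+1}| - (j+1)|{\bf D}_{j+2}|$ in one pass; your sign bookkeeping and the identification of the two sub-blocks with ${\bf D}_{j+1}$ and ${\bf D}_{j+2}$ are accurate, and your remark that $j\leqslant M-2$ is precisely what keeps ${\bf D}_{j+2}$ nonempty is the right boundary check. The paper instead writes ${\bf D}_j$ in the block form $\begin{pmatrix}\lambda & -(j+1)\be_1^T\\ -\be_1 & {\bf D}_{j+1}\end{pmatrix}$, performs a block elimination (multiplying by a unit lower triangular block matrix) to get $|{\bf D}_j| = \lambda\,|{\bf D}_{j+1} - (j+1)\lambda^{-1}\be_1\be_1^T|$, applies the block form once more, and splits the result by multilinearity of the determinant in the first row; because this argument divides by $\lambda$, the case $\lambda = 0$ has to be recovered separately by continuity of $|{\bf D}_j|$ in $\lambda$. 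Your cofactor argument buys a shorter and more elementary derivation that works uniformly in $\lambda$ with no special case, while the paper's block manipulation has the mild advantage of reusing the same partitioned-matrix identity \eqref{eq:recur} that it also exploits elsewhere in the proof of Theorem \ref{thm:cp_Grad}. Either proof establishes the lemma.
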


\begin{proof}
For $0 \leqslant j \leqslant M - 1$, $D_j$ can be written as
\begin{equation} \label{eq:recur}
{\bf D}_j = \begin{pmatrix}
  \lambda & -(j+1) \be_1^T \\
  -\be_1 & {\bf D}_{j+1}
\end{pmatrix},
\end{equation}
where $\be_1$ is the unit vector $(1, 0, \cdots, 0)^T$. When $\lambda
\neq 0$, since
\begin{equation} \label{eq:mat_prod}
\begin{pmatrix}
{\bf I} & {\bf 0} \\ \lambda^{-1} \be_1 & {\bf I}
\end{pmatrix}
\begin{pmatrix}
\lambda & -(j+1)\be_1^T \\ -\be_1 & {\bf D}_{j+1}
\end{pmatrix} = 
\begin{pmatrix}
\lambda & -(j+1)\be_1^T \\
{\bf 0} & {\bf D}_{j+1} - (j+1) \lambda^{-1} \be_1 \be_1^T
\end{pmatrix},
\end{equation}
the equality
\begin{equation}
|{\bf D}_j| = \lambda \left|
  {\bf D}_{j+1} - (j+1) \lambda^{-1} \be_1 \be_1^T
\right|
\end{equation}
is obtained by taking determinants on both sides of
\eqref{eq:mat_prod}. When $0 \leqslant j \leqslant M - 2$, we use
\eqref{eq:recur} again and get
\begin{equation}
\begin{split}
|{\bf D}_j| & = \lambda \left|
  {\bf D}_{j+1} - (j+1) \lambda^{-1} \be_1 \be_1^T
\right| = \lambda \begin{vmatrix}
  \lambda - (j+1) \lambda^{-1} & -(j+2) \be_1^T \\
  -\be_1 & {\bf D}_{j+2}
\end{vmatrix} \\
&= \lambda \left(
  \begin{vmatrix}
    \lambda & -(j+2)\be_1^T \\ -\be_1 & {\bf D}_{j+2}
  \end{vmatrix} +
  \begin{vmatrix}
    -(j+1) \lambda^{-1} & {\bf 0} \\ -\be_1 & {\bf D}_{j+2}
  \end{vmatrix}
\right) = \lambda |{\bf D}_{j+1}| - (j+1) |{\bf D}_{j+2}|.
\end{split}
\end{equation}
If $\lambda = 0$, the continuity of $|{\bf D}_j|$ with respect to
$\lambda$ gives the same result.
\end{proof}

Now we prove Theorem \ref{thm:cp_Grad}.
{
\renewcommand\proofname{Proof of Theorem \ref{thm:cp_Grad}}
\begin{proof}
We start the proof by calculating $\partial |\lambda {\bf I} -
\tilde{\bf A}_M| / \partial g_j$ for $3 \leqslant j \leqslant M - 3$.
From \eqref{eq:tilde_A_M}, one may find that $g_j$ only appears in five
entries of the matrix. Their positions are
\begin{displaymath}
(j+2,1), \quad (j+1,2), \quad (j+2,3), \quad (j+4,3), \quad (j+3,4),
\end{displaymath}
which are illustrated in Figure \ref{fig:entries}. Thus, according to
Lemma \ref{lem:det_diff}, only five terms appear in the right hand
side of \eqref{eq:det_diff}. Now we will consider them one by one.
Below we denote $\lambda {\bf I} - \tilde{\bf A}_M = (c_{ij})$, and
use $C^{i,j}$ to denote the $(i,j)$-th minor of $\lambda {\bf I} -
\tilde{\bf A}_M$.
\begin{enumerate}
\item As in Figure \ref{fig:minor_j+2_1}, $C^{j+2,1}$ is
presented as the product of the determinants of two matrices. One is a
lower triangular matrix whose diagonal elements are
$-1,\cdots,-(j+1)$, and the other is a lower right block of $\lambda
{\bf I} - \tilde{\bf A}_M$, which is actually ${\bf D}_{j+2}$ defined
in \eqref{eq:D}. Therefore, we obtain
\begin{equation}
C^{j+2,1} = (-1)^{j+1} (j+1)! \cdot |{\bf D}_{j+2}|.
\end{equation}
Since $c_{j+2,1} = g_j$, one has
\begin{equation} \label{eq:term1}
(-1)^{j+2+1} \frac{\partial c_{j+2,1}}{\partial g_j} C^{j+2,1}
  = (-1)^{j+1} \cdot 1 \cdot (-1)^{j+1} (j+1)! \cdot |{\bf D}_{j+2}|
  = (j+1)! \cdot |{\bf D}_{j+2}|.
\end{equation}
\item Figure \ref{fig:minor_j+1_2} shows that $C^{j+1,2}$ is
factorized into three parts: the first part is $\lambda$, the second
is a lower triangular matrix with diagonal elements $-2,\cdots,-j$,
and the third one is ${\bf D}_{j+1}$. Since $c_{j+1,2} = -(j+1)g_j$,
we get
\begin{equation} \label{eq:term2}
\begin{split}
(-1)^{j+1+2} \frac{\partial c_{j+1,2}}{\partial g_j} C^{j+1,2}
  & = (-1)^{j+1} \cdot [-(j+1)] \cdot
    (-1)^{j-1} j! \lambda \cdot |{\bf D}_{j+1}| \\
  & = -(j+1)! \cdot \lambda |{\bf D}_{j+1}|.
\end{split}
\end{equation}
\item $C^{j+2,3}$ is illustrated in Figure \ref{fig:minor_j+2_3}, from
which one finds $C^{j+2,3}$ is the product of the determinants of
three matrices. The first matrix is a $2\times 2$ upper left block of
$\lambda {\bf I} - \tilde{\bf A}_M$, for which we have
\begin{equation}
\begin{vmatrix}
\lambda & -1 \\ -1 & \lambda
\end{vmatrix} = \lambda^2 - 1.
\end{equation}
And the other two blocks are similar as the last case. Using
$c_{j+2,3} = -j g_j$, we have
\begin{equation} \label{eq:term3}
\begin{split}
(-1)^{j+2+3} \frac{\partial c_{j+2,3}}{\partial g_j} C^{j+2,3}
  & = (-1)^{j+1} \cdot (-j) \cdot
    (-1)^{j-1} \frac{1}{2} (j+1)! (\lambda^2-1) \cdot |{\bf D}_{j+2}| \\
  & = -\frac{j}{2} (j+1)! \cdot (\lambda^2 - 1) |{\bf D}_{j+2}|.
\end{split}
\end{equation}
\item The structure of $C^{j+4,3}$ is plotted in Figure
\ref{fig:minor_j+4_3}, which is very similar as $C^{j+2,3}$. Therefore
we directly write the result:
\begin{equation} \label{eq:term4}
\begin{split}
(-1)^{j+4+3} \frac{\partial c_{j+4,3}}{\partial g_j} C^{j+4,3}
  & = (-1)^{j+1} \cdot (-1) \cdot
    (-1)^{j+1} \frac{1}{2} (j+3)! (\lambda^2-1) \cdot |{\bf D}_{j+4}| \\
  & = -\frac{1}{2} (j+3)! \cdot (\lambda^2 - 1) |{\bf D}_{j+4}|,
\end{split}
\end{equation}
where we have used $c_{j+4,3} = -(j+2) g_{j+2} - g_j$. Note that we
define $|{\bf D}_{M+1}| = 1$ in order that \eqref{eq:term4} is correct
for $j = M - 3$.
\item Similar as $C^{j+2,3}$ and $C^{j+4,3}$, the minor $C^{j+3,4}$
is also factorized into the determinants of three matrices as in
Figure \ref{fig:minor_j+3_4}, while the first matrix is the $3\times
3$ upper left block of $\lambda {\bf I} - \tilde{\bf A}_M$, whose
determinant is
\begin{equation}
\begin{vmatrix}
\lambda & -1 & 0 \\ -1 & \lambda & -2 \\ 0 & -1 & \lambda
\end{vmatrix} = \lambda^3 - 3 \lambda.
\end{equation}
Thus the last term becomes
\begin{equation} \label{eq:term5}
\begin{split}
(-1)^{j+3+4} \frac{\partial c_{j+3,4}}{\partial g_j} C^{j+3,4}
  & = (-1)^{j+1} \cdot 3 \cdot (-1)^{j-1} \frac{1}{6} (j+2)!
    (\lambda^3-3 \lambda) \cdot |{\bf D}_{j+3}| \\
  & = \frac{1}{2} (j+2)! \cdot (\lambda^3 - 3 \lambda) |{\bf D}_{j+3}|.
\end{split}
\end{equation}
\end{enumerate}
Collecting \eqref{eq:term1}, \eqref{eq:term2}, \eqref{eq:term3},
\eqref{eq:term4} and \eqref{eq:term5}, we finally get
\begin{equation} \label{eq:cp_diff}
\begin{split}
\frac{\partial |\lambda {\bf I} - \tilde{\bf A}_M|}{\partial g_j} &=
  (j+1)! \cdot \bigg[ |{\bf D}_{j+2}| - \lambda |{\bf D}_{j+1}| -
    \frac{j}{2} (\lambda^2 - 1) |{\bf D}_{j+2}| \\
&\qquad -\frac{(j+3)(j+2)}{2} (\lambda^2 - 1) |{\bf D}_{j+4}| +
    \frac{j+2}{2} (\lambda^3 - 3 \lambda) |{\bf D}_{j+3}| \bigg].
\end{split}
\end{equation}
This expression will be further simplified using Lemma
\ref{lem:recur}.  Since \eqref{eq:recur} also holds for $j = M-1$ if
we define $|{\bf D}_{M+1}| = 1$, the following relation is deduced:
\begin{equation}
\begin{split}
& |{\bf D}_{j+2}| - \lambda |{\bf D}_{j+1}| -
  \frac{j}{2} (\lambda^2 - 1) |{\bf D}_{j+2}| \\
={} & |{\bf D}_{j+2}| -
  \lambda (\lambda |{\bf D}_{j+2}| - (j+2) |{\bf D}_{j+3}|) -
  \frac{j}{2} (\lambda^2 - 1) |{\bf D}_{j+2}| \\
={} & - \frac{j+2}{2} (\lambda^2 - 1) |{\bf D}_{j+2}|
  + (j+2)\lambda |{\bf D}_{j+3}| \\
={} & - \frac{j+2}{2} (\lambda^2 - 1)
    [\lambda |{\bf D}_{j+3}| - (j+3) |{\bf D}_{j+4}|]
  + (j+2)\lambda |{\bf D}_{j+3}| \\
={} & - \frac{j+2}{2} (\lambda^3 - 3\lambda) |{\bf D}_{j+3}|
  + \frac{(j+2)(j+3)}{2} (\lambda^2 - 1) |{\bf D}_{j+4}|.
\end{split}
\end{equation}
Substituting this equation into \eqref{eq:cp_diff}, we conclude
\begin{equation}
\frac{\partial |\lambda {\bf I} - \tilde{\bf A}_M|}{\partial g_j} = 0,
  \quad 3\leqslant j \leqslant M - 3.
\end{equation}
It is clear that $g_3, \cdots, g_{M-3}$ do not appear in the
characteristic polynomial of $\tilde{\bf A}_M$.

For $j = M-2, M-1, M$, the entries containing $g_j$ still locate in
the matrix as Figure \ref{fig:entries}, while some items are missing
due to the cut-off. Therefore, if we define $|{\bf D}_j| = 0$ for $j >
M+1$, then \eqref{eq:cp_diff} still applies for $j = M-2, M-1, M$.
Note that such definition leads to
\begin{equation}
|{\bf D}_M| = \lambda |{\bf D}_{M+1}| - (M+1) |{\bf D}_{M+2}|,
\end{equation}
therefore $g_{M-2}$ does not appear in $|\lambda {\bf I} - \tilde{\bf
A}_M|$ either. Moreover, we have
\begin{gather}
\label{eq:partial_g_M-1}
\begin{split}
\frac{\partial |\lambda {\bf I} - \tilde{\bf A}_M|}{\partial g_{M-1}}
&= M! \cdot \left[ |{\bf D}_{M+1}| - \lambda |{\bf D}_M|
  - \frac{M-1}{2} (\lambda^2 - 1) |{\bf D}_{j+2}| \right] \\
&= -\frac{(M+1)!}{2} (\lambda^2 - 1),
\end{split} \\
\label{eq:partial_g_M}
\frac{\partial |\lambda {\bf I} - \tilde{\bf A}_M|}{\partial g_{M}}
= (M+1)! \cdot (-\lambda |{\bf D}_{M+1}|) = -(M+1)! \cdot \lambda.
\end{gather}
Since \eqref{eq:partial_g_M-1} and \eqref{eq:partial_g_M} hold for any
$g_j$, $3 \leq j \leq M$, we write the characteristic polynomial of
$\tilde{\bf A}_M$ as
\begin{equation}
|\lambda {\bf I} - \tilde{\bf A}_M| = C(\lambda) -
  \frac{(M+1)!}{2} [ (\lambda^2 - 1) g_{M-1} + 2\lambda g_M],
\end{equation}
where $C(\lambda)$ is a function of $\lambda$.

Now it only remains to determine $C(\lambda)$, which is done by
assigning $g_3, \cdots, g_M$ to be zero, and then calculating the
characteristic polynomial of $\tilde{\bf A}_M$. In this case, it is
easy to find
\begin{equation}
|\lambda {\bf I} - \tilde{\bf A}_M| = C(\lambda) = |{\bf D}_0|,
  \qquad \text{if } g_3 = \cdots = g_M = 0.
\end{equation}
Meanwhile, the following relation between $|{\bf D}_j|$ and Hermite
polynomials is discovered:
\begin{equation}
\begin{gathered}
|{\bf D}_{M+1}| = \He_0(\lambda) = 1, \qquad
|{\bf D}_M| = \He_1(\lambda) = \lambda, \\
|{\bf D}_{j}| = \lambda |{\bf D}_{j+1}| - (j+1) |{\bf D}_{j+2}|, \\
\He_j(\lambda) = \lambda \He_{j-1}(\lambda) - (j-1) \He_{j-2}(\lambda).
\end{gathered}
\end{equation}
This reveals that
\begin{equation}
|{\bf D}_j| = \He_{M+1-j}(\lambda), \quad 0 \leqslant j \leqslant M+1.
\end{equation}
Hence $C(\lambda) = |{\bf D}_0| = \He_{M+1}(\lambda)$. This completes
the proof of Theorem \ref{thm:cp_Grad}.
\end{proof}
}

\begin{figure}[p]
\centering
\subfigure[Entries containing $g_j$]{
  \label{fig:entries}
  \includegraphics[width=.4\textwidth]{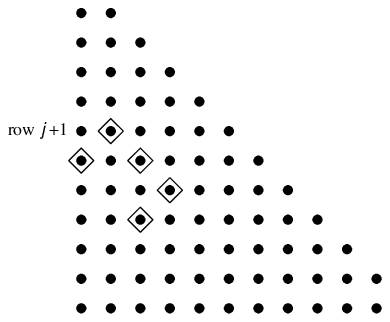}
}
\hspace{40pt}
\subfigure[The $(j+2,1)$-th minor of $\lambda {\bf I} - \tilde{\bf A}_M$]{
  \label{fig:minor_j+2_1}
  \includegraphics[width=.4\textwidth]{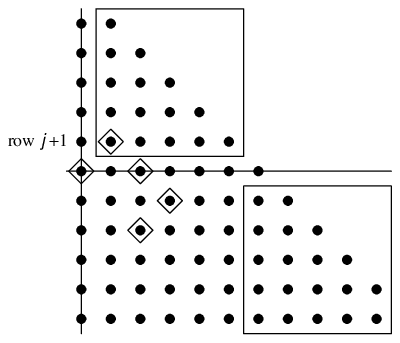}
}\\[20pt]
\subfigure[The $(j+1,2)$-th minor of $\lambda {\bf I} - \tilde{\bf A}_M$]{
  \label{fig:minor_j+1_2}
  \includegraphics[width=.4\textwidth]{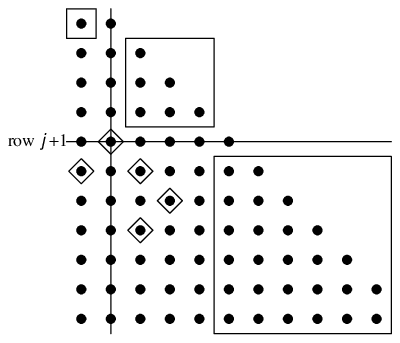}
}
\hspace{40pt}
\subfigure[The $(j+2,3)$-th minor of $\lambda {\bf I} - \tilde{\bf A}_M$]{
  \label{fig:minor_j+2_3}
  \includegraphics[width=.4\textwidth]{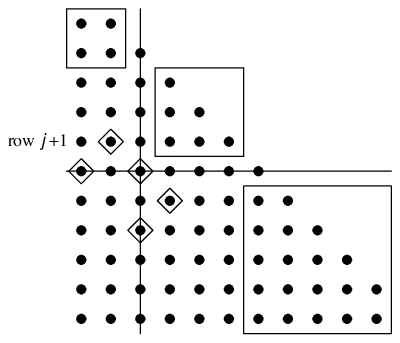}
}\\[20pt]
\subfigure[The $(j+4,3)$-th minor of $\lambda {\bf I} - \tilde{\bf A}_M$]{
  \label{fig:minor_j+4_3}
  \includegraphics[width=.4\textwidth]{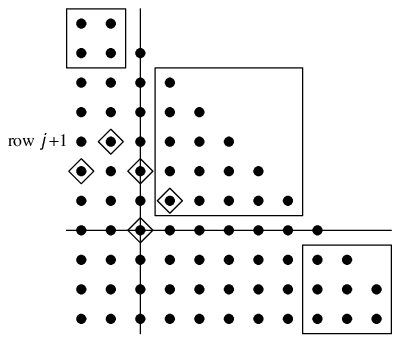}
}
\hspace{40pt}
\subfigure[The $(j+3,4)$-th minor of $\lambda {\bf I} - \tilde{\bf A}_M$]{
  \label{fig:minor_j+3_4}
  \includegraphics[width=.4\textwidth]{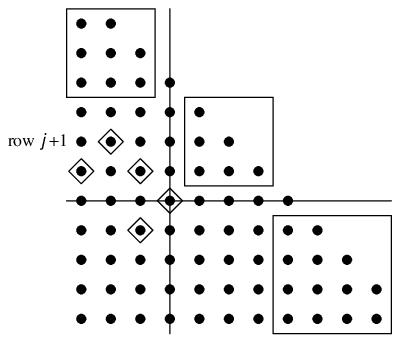}
}
\caption{The Hessenberg matrix $\tilde{A}_M$ and its minors of the
elements containing $g_j$.}
\end{figure}

Theorem \ref{thm:cp_Grad} reveals that the hyperbolicity can only be
obtained in a particular region $(g_{M-1}, g_M) \in \Omega_M$ for the
$(M+1)$-moment sytem. Since the roots of Hermite polynomials are all
real, the origin must lie in $\Omega_M$. The region $\Omega_M$ for
$M=4$ to $9$ are plotted in Figure \ref{fig:hyp_reg}, among which the
result for $M = 4$ has been obtained in \cite{Torrilhon2010}, agreeing
with ours with proper scaling and translation.

\begin{figure}[p] 
\centering
\subfigure[$M = 4$]{
   \includegraphics[scale=.44]{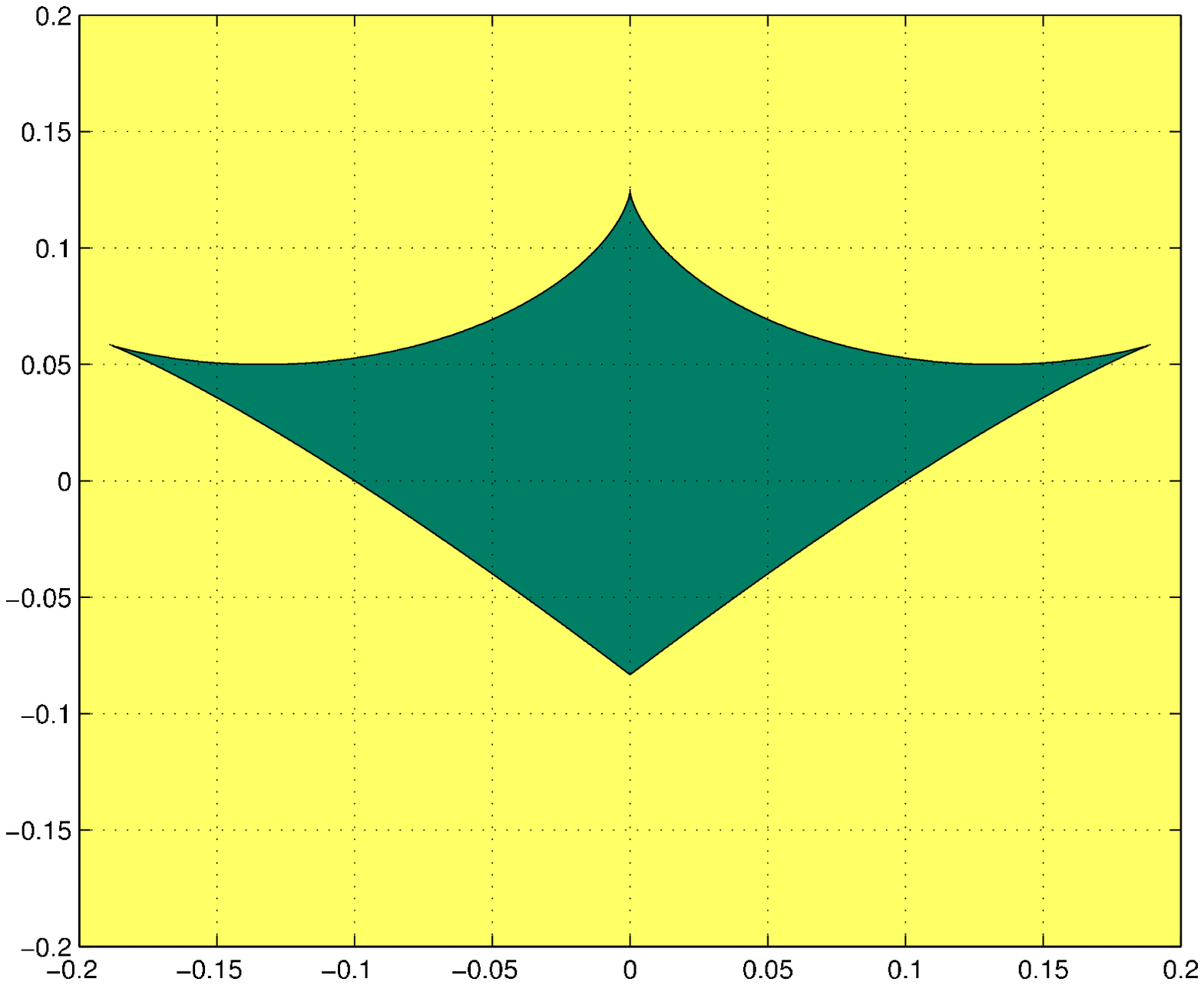}
}
\subfigure[$M = 5$]{
   \includegraphics[scale=.44]{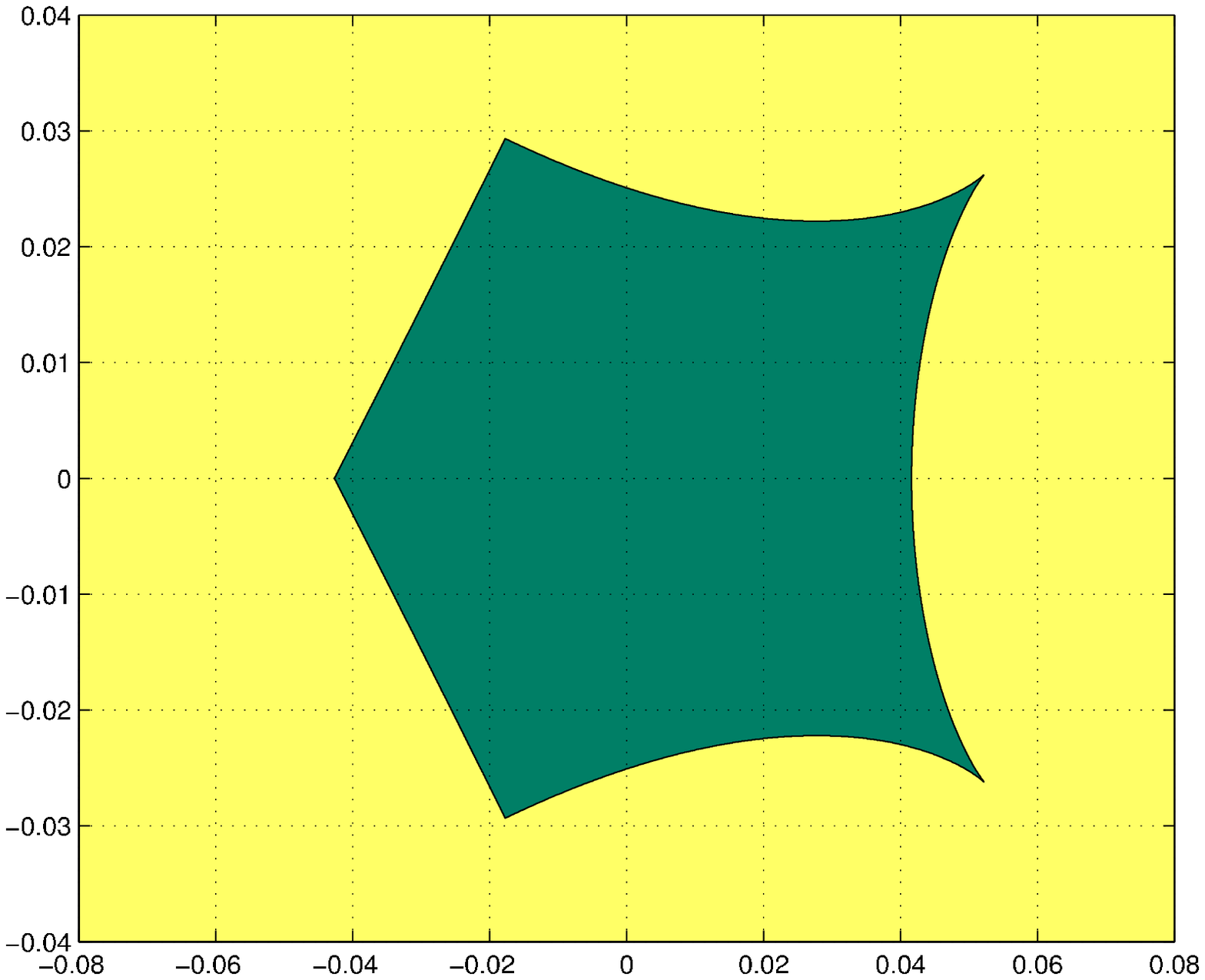}
}\\
\subfigure[$M = 6$]{
   \includegraphics[scale=.44]{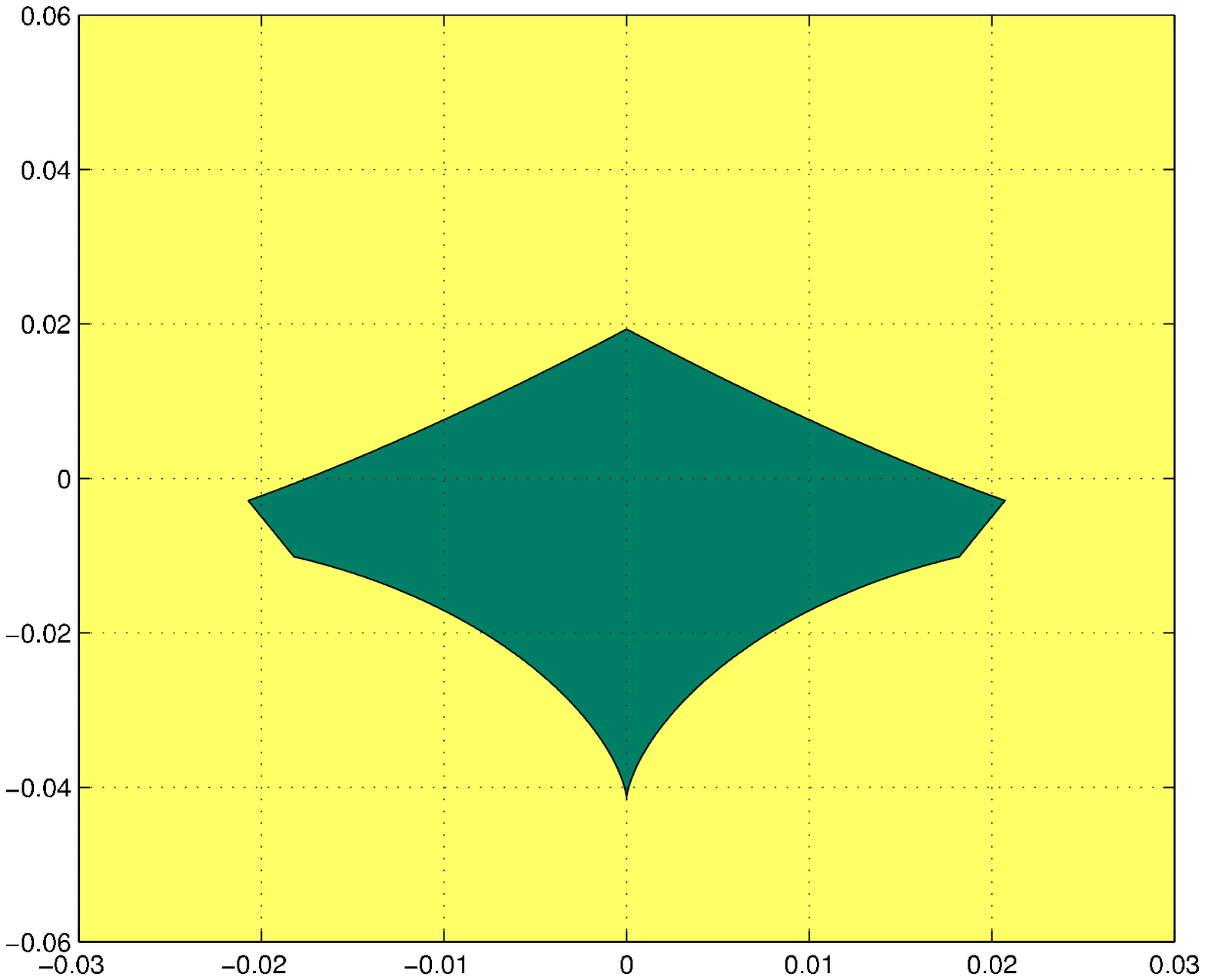}
}
\subfigure[$M = 7$]{
   \includegraphics[scale=.44]{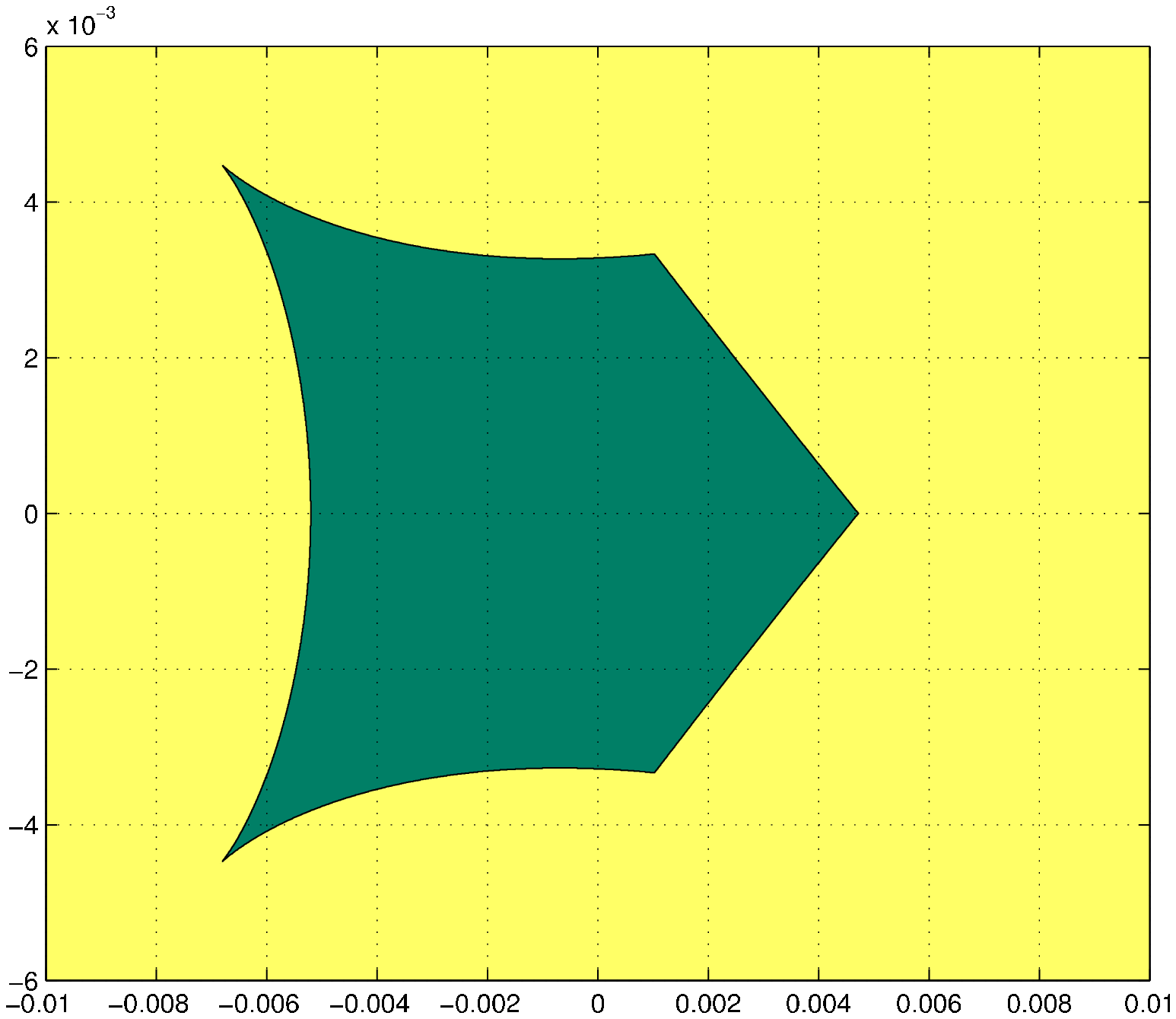}
}\\
\subfigure[$M = 8$]{
   \includegraphics[scale=.44]{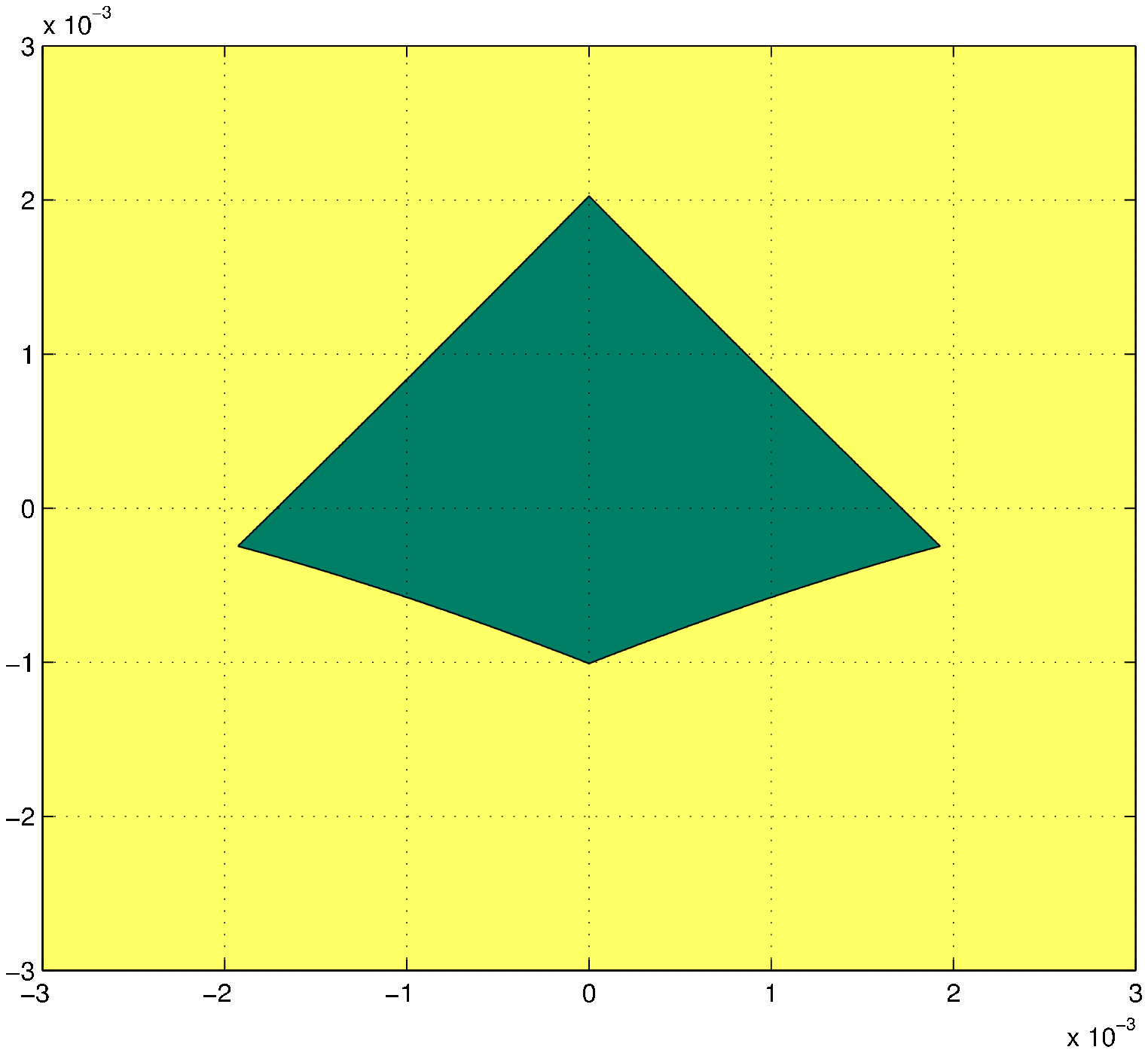}
}
\subfigure[$M = 9$]{
   \includegraphics[scale=.44]{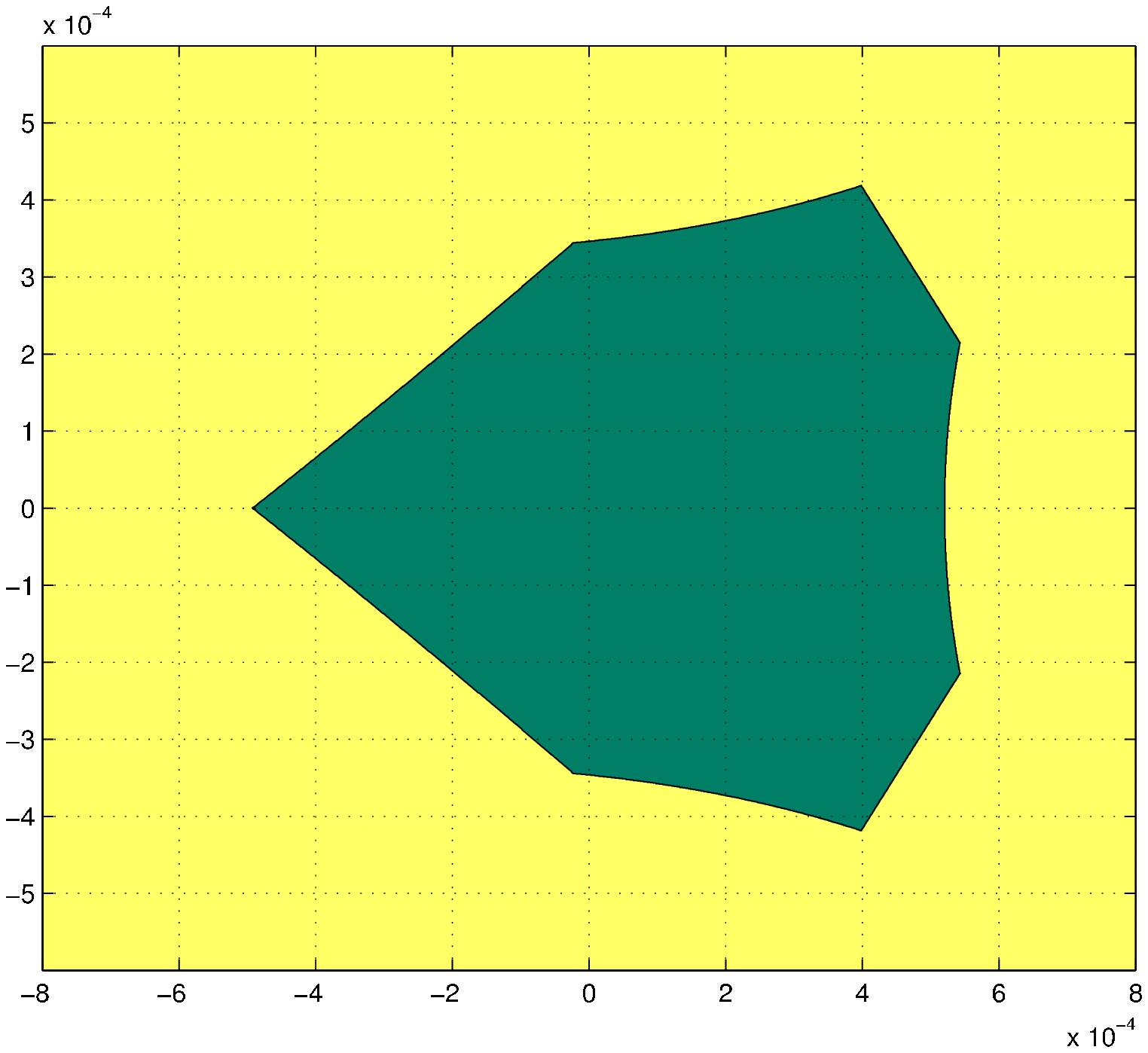}
}
\caption{Hyperbolicity region of Grad's $(M+1)$-moment system. The
$x$-axis is $g_{M-1}$ and the $y$-axis is $g_M$.}
\label{fig:hyp_reg}
\end{figure}

As a reference, the following corollary gives the characteristic
polynomial of the original matrix ${\bf A}_M$:
\begin{corollary} \label{cor:cp}
The characteristic polynomial of ${\bf A}_M$ is
\begin{equation} \label{eq:Grad_cp}
\theta^{\frac{M+1}{2}}
  \He_{M+1} \left( \frac{\lambda - u}{\sqrt{\theta}} \right)
  - \frac{(M+1)!}{2\rho} \left[
    \left( (\lambda - u)^2 - \theta \right) f_{M-1}
    + 2(\lambda - u) f_M
  \right].
\end{equation}
\end{corollary}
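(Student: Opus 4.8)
\textbf{Proof proposal for Corollary \ref{cor:cp}.}

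The plan is to deduce the corollary directly from Theorem \ref{thm:cp_Grad} by tracking how the characteristic polynomial transforms under the translation-and-scaling relation ${\bf A}_M = u{\bf I} + \sqrt{\theta}\,{\bf \Lambda}^{-1} \tilde{\bf A}_M {\bf \Lambda}$ established just before the theorem. First I would set $\mu = (\lambda - u)/\sqrt{\theta}$ and write $\lambda{\bf I} - {\bf A}_M = \sqrt{\theta}\,{\bf \Lambda}^{-1}\bigl(\mu{\bf I} - \tilde{\bf A}_M\bigr){\bf \Lambda}$. Taking determinants, using multiplicativity together with $|{\bf \Lambda}^{-1}|\,|{\bf \Lambda}| = 1$, and pulling the scalar $\sqrt{\theta}$ out of each of the $M+1$ rows gives $|\lambda{\bf I} - {\bf A}_M| = \theta^{(M+1)/2}\,|\mu{\bf I} - \tilde{\bf A}_M|$. (Equivalently, one may just invoke that the characteristic polynomial is a similarity invariant and that rescaling the spectral variable by $\sqrt{\theta}$ introduces the prefactor $\theta^{(M+1)/2}$.)

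Next I would substitute the formula of Theorem \ref{thm:cp_Grad}, namely $|\mu{\bf I} - \tilde{\bf A}_M| = \He_{M+1}(\mu) - \tfrac{1}{2}(M+1)!\,[(\mu^2-1)g_{M-1} + 2\mu g_M]$, and undo the nondimensionalization via $\mu = (\lambda - u)/\sqrt{\theta}$ and $g_j = f_j/(\rho\theta^{j/2})$. The Hermite term becomes $\theta^{(M+1)/2}\He_{M+1}\bigl((\lambda-u)/\sqrt{\theta}\bigr)$ at once. For the other two terms the only real work is the bookkeeping of powers of $\theta$: in $\theta^{(M+1)/2}(\mu^2-1)g_{M-1}$, the $\mu^2-1$ contributes a factor $\theta^{-1}$ times $(\lambda-u)^2 - \theta$, and $g_{M-1}$ contributes $\theta^{-(M-1)/2}/\rho$, so the $\theta$ exponents cancel ($\tfrac{M+1}{2} - 1 - \tfrac{M-1}{2} = 0$) leaving $\rho^{-1}\bigl[(\lambda-u)^2 - \theta\bigr]f_{M-1}$; likewise $\theta^{(M+1)/2}\cdot 2\mu\, g_M$ has $\theta$ exponent $\tfrac{M+1}{2} - \tfrac12 - \tfrac{M}{2} = 0$ and collapses to $2\rho^{-1}(\lambda-u)f_M$. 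Collecting the three contributions reproduces \eqref{eq:Grad_cp} exactly.

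The main obstacle is essentially clerical rather than conceptual: getting the exponents of $\theta$ to cancel correctly in the last two terms and keeping the overall sign of the $\tfrac{1}{2}(M+1)!$ factor consistent through the substitution. No new ideas beyond Theorem \ref{thm:cp_Grad} are needed, since the corollary is just a change of variables applied to the already-computed characteristic polynomial of $\tilde{\bf A}_M$.
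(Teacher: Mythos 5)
Your proposal is correct and follows essentially the same route as the paper: factor $\lambda{\bf I}-{\bf A}_M$ through the translation and similarity relation to get $\theta^{(M+1)/2}\bigl|\tfrac{\lambda-u}{\sqrt{\theta}}{\bf I}-\tilde{\bf A}_M\bigr|$, then substitute Theorem \ref{thm:cp_Grad} and the definition $g_j = f_j/(\rho\theta^{j/2})$. Your power-of-$\theta$ bookkeeping matches the paper's computation exactly, so no changes are needed.
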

\begin{proof}
This can be shown by direct calculation:
\begin{equation}
\begin{split}
& |\lambda {\bf I} - {\bf A}_M| = \left|
  \lambda {\bf I} - (u{\bf I} + \sqrt{\theta} {\bf \Lambda}^{-1}
   \tilde{\bf A}_M {\bf \Lambda})
\right| \\
={} & \theta^{\frac{M+1}{2}} \left|
  \frac{\lambda - u}{\sqrt{\theta}} {\bf I} - {\bf \Lambda}^{-1}
   \tilde{\bf A}_M {\bf \Lambda}
\right| = \theta^{\frac{M+1}{2}} \left|
  \frac{\lambda - u}{\sqrt{\theta}} {\bf I} - \tilde{\bf A}_M
\right| \\
={} & \theta^{\frac{M+1}{2}} \Bigg\{
  \He_{M+1}\left( \frac{\lambda - u}{\sqrt{\theta}} \right) \\
& \qquad - \frac{(M+1)!}{2} \left[
    \left( \frac{(\lambda - u)^2}{\theta} - 1 \right)
    \frac{f_{M-1}}{\rho \theta^{(M-1)/2}} +
    \frac{2(\lambda - u)}{\sqrt{\theta}} \frac{f_{M}}{\rho \theta^{M/2}}
  \right]
\Bigg\} \\
={} & \theta^{\frac{M+1}{2}}
  \He_{M+1} \left( \frac{\lambda - u}{\sqrt{\theta}} \right)
  - \frac{(M+1)!}{2\rho} \left[
    \left( (\lambda - u)^2 - \theta \right) f_{M-1}
    + 2(\lambda - u) f_M
  \right].
\end{split}
\end{equation}
\end{proof}

\section{Hyperbolic moment system} \label{sec:hyperbolic} The loss of
global hyperbolicity of Grad's moment system has long been considered
as a failure of moment method. Recently, some encouraging progresses
are made in this direction \cite{Levermore, Torrilhon2010}.  However,
in the case that the number of moments is greater than $10$,
Levermore's method leads to great difficulties for the numerical
implementation, since the moments cannot be analytically solved from
the Lagrange multipliers\footnote{A local system is required to be
  solved by Newton iteration on each grid for every time step. We
  refer the readers to \cite{Tallec} for details. There is no report
  indicating that such a system has a fast solver.}; and it has been
demonstrated by Junk \cite{Junk} that the domain of definition for a
realizable distribution is not convex. Torrilhon's method mainly
focuses on 13-moment case in one space dimension, which seems not
trivial to be extended to the general cases. To the best of our
knowledge, no results for general moment system have been published.

In this section, we provide the method to regularize moment system
based on the results in Section \ref{sec:Grad} to achieve global
hyberbolicity. We discuss only 1D case here and the multi-dimensional
problems will be reported soon in later papers.

\subsection{Construction of hyperbolic moment system}
For an $(M+1)$-moment system containing quantities $\{\rho, u, \theta,
f_3, \cdots, f_M\}$, the Grad's moment system gives \emph{accurate}
evolution equations for most variables expect for $f_M$, since
$f_{M+1}$ appears in the accurate equation of $f_M$, and is forced to
be zero in Grad's closure. Almost all the regularization methods in
references are focused on the reconstruction of $f_{M+1}$, trying to
express $f_{M+1}$ as a function of the $M+1$ known variables in some
possible ways such as Chapman-Enskog expansion or realizing a positive
distribution \cite{Levermore, Struchtrup2003, NRxx_new,
  Torrilhon2010}. In this paper, we also limit our regularization to
the modification of the equation of $f_M$. However, since $f_{M+1}$
exists in this equation only in the form of its derivative, here we
directly substitute $\partial f_{M+1} / \partial x$ with some other
expression to gain global hyperbolicity.

Corollary \ref{cor:cp} shows that the characteristic polynomial of
${\bf A}_M$ is independent of $f_3$, $\cdots$, $f_{M-2}$, and its
dependence of $f_{M-1}$ and $f_M$ can be regarded as the result of
truncation. That is, if a Grad's system with $M+3$ or more variables
is considered, then $f_{M-1}$ and $f_M$ do not affect the
characteristic polynomial, either. Thus, it is reasonable to modify
the matrix ${\bf A}_M$ such that its characteristic polynomial is a
function only of $u$ and $\theta$. More precisely, the characteristic
polynomial of the modified matrix should always be
\begin{equation} \label{eq:modified_cp}
\theta^{\frac{M+1}{2}} \He_{M+1} \left(
  \frac{\lambda - u}{\sqrt{\theta}}
\right),
\end{equation}
which is obtained by substituting $f_{M-1} = f_M = 0$ into
\eqref{eq:Grad_cp}. Recalling that only the equation of $f_M$ is
allowed to be changed, we summarize all the requirements and raise the
following problem:
\begin{quote} \it%
Find $M+1$ functions $a_j = a_j(\bw_M)$, $j = 1,\cdots,M+1$, such that
\begin{displaymath}
\left|
  \lambda {\bf I} - {\bf A}_M - \sum_{j=1}^{M+1} a_j {\bf E}_{M+1,j}
\right| = \theta^{\frac{M+1}{2}} \He_{M+1} \left(
  \frac{\lambda - u}{\sqrt{\theta}}
\right), \quad \forall \rho,u,\theta,f_3,\cdots,f_M,
\end{displaymath}
where ${\bf E}_{ij}$ denotes the matrix $\be_i \be_j^T$, and $\be_j$
is the unit vector whose $j$-th component is equal to $1$.
\end{quote}
If $a_j = a_j(\bw_M)$, $j = 1,\cdots,M+1$ is the solution of this
problem, then a globally hyperbolic system can be obtained by
substituting the matrix ${\bf A}_M$ in \eqref{eq:Grad} with
\begin{equation} \label{eq:modified_matrix}
\hat{\bf A}_M := {\bf A}_M + \sum_{j=1}^{M+1} a_j {\bf E}_{M+1,j}.
\end{equation}
The rest part of this section will be devoted to tackling this
problem.

In order to simplify the notation, we use $S^{i,j}$ to denote the
$(i,j)$-th minor of the matrix $\lambda {\bf I} - {\bf A}_M$, and
define $S(k)$ as its \emph{$k$-th order leading principal minor},
which is the determinant of the upper-left part of $\lambda
{\bf I} - {\bf A}_M$ with $k$ rows and $k$ columns. According to the
expression of ${\bf A}_M$ \eqref{eq:A_M}, it is not difficult to find
\begin{subequations} \label{eq:S}
\begin{gather}
\label{eq:S12}
S^{M+1,1} = (-1)^M M!, \quad
  S^{M+1,2} = (-1)^{M-1} \frac{M!}{\rho} (\lambda - u), \\
\label{eq:S3}
S^{M+1,3} = (-1)^{M-2} \frac{M!}{\rho} [(\lambda - u)^2 - \theta], \\
S^{M+1,j} = (-1)^{M+1-j} \frac{M!}{(j-1)!} S(j-1),
  \quad j=4, \cdots, M+1.
\end{gather}
\end{subequations}
Now we expand the characteristic polynomial of the matrix \eqref
{eq:modified_matrix} as
\begin{equation}
\left|
  \lambda {\bf I} - {\bf A}_M - \sum_{j=1}^{M+1} a_j {\bf E}_{M+1,j}
\right| = |\lambda {\bf I} - {\bf A}_M| -
  \sum_{j=1}^{M+1} (-1)^{M+1+j} a_j S^{M+1,j}.
\end{equation}
In order that the above expression equals to \eqref{eq:modified_cp},
according to Corollary \ref{cor:cp}, we may choose $a_j$ such that
\begin{equation} \label{eq:equiv}
\frac{(M+1)!}{2\rho} \left[
  \left( (\lambda - u)^2 - \theta \right) f_{M-1}
  + 2(\lambda - u) f_M
\right] + \sum_{j=1}^{M+1} (-1)^{M+1-j} a_j S^{M+1,j} \equiv 0.
\end{equation}
The leading principal minor $S(k)$ is a polynomial in $\lambda$ of
degree $k$, since it is the characteristic polynomial of the $k\times
k$ upper-left block of ${\bf A}_M$. Hence, $S^{M+1,j}$ is a polynomial
in $\lambda$ of degree $j-1$, which can be observed from \eqref{eq:S}.
Such observation directly leads to
\begin{equation} \label{eq:a_j}
a_j \equiv 0, \quad j = 4,\cdots,M+1,
\end{equation}
since the first term in \eqref{eq:equiv} is a quadratic polynomial in
$\lambda$. Then, we put \eqref{eq:S12} and \eqref{eq:S3} into
\eqref{eq:equiv} and some simplification gives
\begin{equation}
[(\lambda - u)^2 - \theta] \left( \frac{M+1}{2} f_{M-1} + a_3 \right)
  + (\lambda - u) [(M+1) f_M + a_2] + a_1 \equiv 0.
\end{equation}
Now, the choices of $a_1$, $a_2$ and $a_3$ are naturally given as
\begin{equation} \label{eq:a1-3}
a_1 = 0, \quad a_2 = -(M+1) f_M, \quad a_3 = -\frac{M+1}{2} f_{M-1}.
\end{equation}
For simplicity, the notation $\RM$ is introduced as follows:
\begin{definition}
  The regularization term based on the characteristic speed correction
  is denoted as
  \begin{equation} \label{R_M} 
    \RM \triangleq \frac{M+1}{2} \left(2
      f_M \frac{\partial u}{\partial x} + f_{M-1} \frac{\partial
        \theta}{\partial x} \right).
  \end{equation}
\end{definition}
\noindent Then we have the following theorem:
\begin{theorem} \label{thm:hyperbolic}
The moment system
\begin{equation} \label{eq:modified_moment_system} 
  \frac{\partial \bw_M}{\partial t} + {\bf A}_M \frac{\partial 
    \bw_M}{\partial x} - \RM \be_{M+1} = 0
\end{equation}
is strictly hyperbolic if $\theta > 0$, and its characteristic speeds
are
\begin{equation} \label{eq:s_j}
s_j = u + c_j \sqrt{\theta}, \quad j = 1,\cdots, M+1,
\end{equation}
where $c_j$ is the $j$-th root of $\He_{M+1}(x)$.
\end{theorem}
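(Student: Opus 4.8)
The plan is to reduce the statement to a spectral fact about the single matrix $\hat{\bf A}_M$ from \eqref{eq:modified_matrix}. First I would verify that \eqref{eq:modified_moment_system} is precisely $\partial\bw_M/\partial t + \hat{\bf A}_M\,\partial\bw_M/\partial x = 0$. Since $\bw_M = (\rho,u,\theta,f_3,\dots,f_M)^T$, the correction term expands as
\begin{equation*}
\Bigl(\sum_{j=1}^{M+1} a_j {\bf E}_{M+1,j}\Bigr)\frac{\partial\bw_M}{\partial x} = \be_{M+1}\Bigl( a_1\frac{\partial\rho}{\partial x} + a_2\frac{\partial u}{\partial x} + a_3\frac{\partial\theta}{\partial x} + \sum_{j=4}^{M+1} a_j\frac{\partial f_{j-1}}{\partial x}\Bigr),
\end{equation*}
and substituting \eqref{eq:a_j} and \eqref{eq:a1-3} collapses the right-hand side to $-\RM\be_{M+1}$. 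This bookkeeping step — matching indices, the components of $\bw_M$, and signs — is the one place a slip could occur, so I would carry it out explicitly.

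Second, I would invoke the construction that produced the $a_j$: they were chosen so that \eqref{eq:equiv} holds identically in $\lambda$, whence, using Corollary \ref{cor:cp} together with the minor formulas \eqref{eq:S},
\begin{equation*}
\bigl|\lambda{\bf I} - \hat{\bf A}_M\bigr| = \bigl|\lambda{\bf I} - {\bf A}_M\bigr| - \sum_{j=1}^{M+1}(-1)^{M+1+j} a_j S^{M+1,j} = \theta^{\frac{M+1}{2}}\He_{M+1}\!\left(\frac{\lambda-u}{\sqrt{\theta}}\right).
\end{equation*}
Hence the eigenvalues of $\hat{\bf A}_M$ are exactly the $\lambda$ solving $\He_{M+1}((\lambda-u)/\sqrt{\theta}) = 0$, i.e. $\lambda = u + c_j\sqrt{\theta}$ with $c_j$ the $j$-th root of $\He_{M+1}$, which is \eqref{eq:s_j}.

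Finally, strict hyperbolicity follows from the classical fact that $\He_{M+1}$ has $M+1$ simple real zeros $c_1 < \cdots < c_{M+1}$ (they are the Gauss--Hermite quadrature nodes). Because $\theta > 0$, the $s_j = u + c_j\sqrt{\theta}$ are then $M+1$ pairwise distinct real numbers, so $\hat{\bf A}_M$ has a full set of distinct real eigenvalues, is therefore diagonalizable over $\bbR$, and \eqref{eq:modified_moment_system} is strictly hyperbolic. I do not anticipate any real obstacle beyond the index/sign verification in the first step; the reality and simplicity of the Hermite roots is standard, and distinctness of the eigenvalues makes a separate discussion of eigenvectors unnecessary.
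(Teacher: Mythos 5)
Your proposal is correct and follows essentially the same route as the paper: rewrite the system with the matrix $\hat{\bf A}_M$ of \eqref{eq:modified_matrix}, use the construction of the $a_j$ (via \eqref{eq:equiv}, Corollary \ref{cor:cp} and the minors \eqref{eq:S}) to identify the characteristic polynomial with $\theta^{\frac{M+1}{2}}\He_{M+1}\bigl((\lambda-u)/\sqrt{\theta}\bigr)$, and conclude strict hyperbolicity from the $M+1$ simple real zeros of $\He_{M+1}$. The explicit bookkeeping check that the correction term equals $-\RM\be_{M+1}$ is a sound addition but not a departure from the paper's argument.
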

\begin{proof}
The equations \eqref{eq:modified_moment_system} can be rewritten as
\begin{equation}
\frac{\partial \bw_M}{\partial t} +
  \left( {\bf A}_M + \sum_{j=1}^{M+1} a_j {\bf E}_{M+1,j} \right)
  \frac{\partial \bw_M}{\partial x} = 0,
\end{equation}
where $a_j$, $j=1,\cdots,M+1$ are defined in \eqref{eq:a_j} and
\eqref{eq:a1-3}. As we have discussed above, \eqref{eq:modified_cp}
gives the characteristic polynomial of the matrix in the parentheses,
which will be denoted by $\hat{\bf A}_M$ below as in
\eqref{eq:modified_matrix}. If $\theta > 0$, one has
\begin{equation}
|\lambda {\bf I} - \hat{\bf A}_M|
= \theta^{\frac{M+1}{2}} \He_{M+1}
  \left( \frac{s_j - u}{\sqrt{\theta}} \right)
= \theta^{\frac{M+1}{2}} \He_{M+1}(c_j) = 0.
\end{equation}
Therefore, \eqref{eq:s_j} gives all eigenvalues of $\hat{\bf A}_M$.
Since the Hermite polynomial $\He_{M+1}(x)$ has $M+1$ different zeros
in $\bbR$ \cite{Shen}, all $c_j$'s are distinct. Thus, the matrix
$\hat{\bf A}_M$ has no duplicate eigenvalues, hence is
diagonalizable. This indicates that \eqref{eq:modified_moment_system}
is a strictly hyperbolic system.
\end{proof}
Comparing with the exact moment system \eqref{eq:discrete_f}, the
hyperbolic system \eqref{eq:modified_moment_system} replaces $\partial
f_{M+1} / \partial x$ by
\begin{equation}
  -\frac{1}{M+1}\RM = -f_M \frac{\partial u}{\partial x} -
  \frac{1}{2} f_{M-1} \frac{\partial \theta}{\partial x}.
\end{equation}
This is a totally new way to regularize Grad's moment system.

\begin{remark}
  By modifying the last row of the matrix ${\bf A}_M$, the
  characteristic speeds can be appointed. Our regularization
  \eqref{eq:modified_moment_system} selects a special set of
  characteristic speeds \eqref{eq:s_j} such that they coincide with
  the Gauss-Hermite interpolation points. As discussed in
  \cite{Torrilhon2006}, the characteristic speeds can be viewed as a
  sort of discretization of the distribution function. Therefore, the
  system \eqref{eq:modified_moment_system} is similar as the ``shifted
  and scaled discrete velocity model'', with the expectation of
  spectral convergence when $M$ goes to infinity. Meanwhile, unlike
  the ordinary discrete velocity model, the nonlinearity of Grad's
  moment systems introduced by shifting and scaling of the basis
  functions is preserved. Additionally, such regularization is only a
  slight modification based on the original Grad's moment system, and
  we will find in the next subsection that a number of interesting
  properties can be obtained.
\end{remark}

\subsection{Characteristic waves of hyperbolic moment system}
In this part, we will focus on the Riemann problem of
\eqref{eq:modified_moment_system}. First, we claim that all
characteristic fields of \eqref{eq:modified_moment_system} is either
genuinely nonlinear or linearly degenerate. To verify this, we write
the right eigenvectors of $\hat{\bf A}_M$ in the following theorem:
\begin{theorem} \label{thm:eigenvector}
The right eigenvector of $\hat{\bf A}_M$ with eigenvalue $u + c_j
\sqrt{\theta}$ is
\begin{equation}
\br_j = (r_{j,1}, \cdots, r_{j,M+1})^T, \quad j=1,\cdots,M+1,
\end{equation}
where $c_j$ is the $j$-th root of Hermite polynomial $\He_{M+1}(x)$,
and $r_{j,k}$ is defined as
\begin{equation}\label{eq:eigenvector}
\begin{gathered}
r_{j,1} = \rho, \quad r_{j,2} = c_j \sqrt{\theta},
  \quad r_{j,3} = (c_j^2 - 1) \theta, \\
r_{j,k} = \frac{\He_{k-1}(c_j)}{(k-1)!} \rho \theta^{\frac{k-1}{2}}
  - \frac{c_j^2 - 1}{2} \theta f_{k-3} - c_j \sqrt{\theta} f_{k-2},
  \quad k = 4, \cdots, M+1.
\end{gathered}
\end{equation}
\end{theorem}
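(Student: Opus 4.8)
The plan is to verify directly that $\hat{\bf A}_M\br_j = (u + c_j\sqrt{\theta})\br_j$ for each $j$; once this is done, the fact that $\He_{M+1}$ has $M+1$ simple real roots (so that the eigenvalues $u + c_j\sqrt{\theta}$ are pairwise distinct when $\theta>0$) makes the $\br_j$ automatically linearly independent, so nothing more needs to be said about them forming a basis. To keep the algebra readable I would run the computation through the similarity already used in Section~\ref{sec:Grad}. Writing $\hat{\bf A}_M = u{\bf I} + \sqrt{\theta}\,{\bf\Lambda}^{-1}\hat{\tilde{\bf A}}_M{\bf\Lambda}$, where $\hat{\tilde{\bf A}}_M$ is $\tilde{\bf A}_M$ with only the entries in positions $(M+1,2)$ and $(M+1,3)$ altered (they become $0$ and $g_{M-3}-2g_{M-1}$ respectively, which follows from \eqref{eq:a1-3} after conjugation by ${\bf\Lambda}$), it is enough to prove $\hat{\tilde{\bf A}}_M\tilde{\br}_j = c_j\tilde{\br}_j$ with $\tilde{\br}_j := {\bf\Lambda}\br_j$. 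A short computation using $g_\ell = f_\ell/(\rho\theta^{\ell/2})$ and $f_1 = f_2 = 0$ turns \eqref{eq:eigenvector} into $\tilde{\br}_j = \rho\,\bu_j$, where $\bu_j$ has entries $u_{j,k} = \He_{k-1}(c_j)/(k-1)!$ for $k = 1,2,3$ and $u_{j,k} = \He_{k-1}(c_j)/(k-1)! - \tfrac{1}{2}(c_j^2-1)g_{k-3} - c_j g_{k-2}$ for $k = 4,\dots,M+1$; so the theorem reduces to the single identity $\hat{\tilde{\bf A}}_M\bu_j = c_j\bu_j$.

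I would then check this identity row by row. For a row $i$ with $1 \leqslant i \leqslant M$, substituting the ansatz and reading off the entries of $\tilde{\bf A}_M$ (subdiagonal $1$, superdiagonal $i$, zero diagonal, together with the four dense columns) produces on the left the band contribution $u_{j,i-1} + i\,u_{j,i+1}$ plus terms proportional to $g_{i-4}, g_{i-3}, g_{i-2}, g_{i-1}$; collecting coefficients, these $g$-terms match exactly those carried inside $u_{j,i-1}$, $i\,u_{j,i+1}$ and $c_j u_{j,i}$ and cancel, leaving precisely the Hermite three-term recurrence $\He_i(c_j) = c_j\He_{i-1}(c_j) - (i-1)\He_{i-2}(c_j)$, which is valid for every value of $c_j$. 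The clean way to package all of this at once is to show that, for an arbitrary scalar $c$, the vector $\bu(c)$ defined by the same formulas obeys $\hat{\tilde{\bf A}}_M\,\bu(c) = c\,\bu(c) - \tfrac{1}{M!}\He_{M+1}(c)\,\be_{M+1}$: the agreement of the first $M$ components of the two sides is exactly the family of non-final row identities just described.

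The last row is where the regularization enters. Here the same cancellation of the $g_\ell$-terms occurs — and the altered entries $(M+1,2)$, $(M+1,3)$ are precisely what makes it go through — and the leftover identity is $\He_{M-1}(c_j)/(M-1)! = c_j\He_M(c_j)/M!$, i.e.\ $c_j\He_M(c_j) = M\He_{M-1}(c_j)$; this is the three-term recurrence at index $M+1$ combined with $\He_{M+1}(c_j)=0$, which holds because $c_j$ is a root of $\He_{M+1}$. In the packaged form, the $(M+1)$-st component of the right-hand side is $c\,u_{M+1}(c) - \He_{M+1}(c)/M!$, which coincides with $c\,u_{M+1}(c)$ exactly when $c = c_j$.

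I expect the main difficulty to be organizational rather than conceptual: one must track the four dense columns of $\tilde{\bf A}_M$ alongside the tridiagonal band while gathering the coefficient of each $g_\ell$, and separately dispose of the small-index rows $i = 1,2,3,4,5$, where the band indices collide with columns $1$--$4$ and where one must invoke $f_0 = \rho$, $f_1 = f_2 = 0$. Once the bookkeeping is arranged, every row collapses to the Hermite recurrence, and only the final row additionally uses that $c_j$ annihilates $\He_{M+1}$.
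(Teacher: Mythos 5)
Your proposal is correct and follows essentially the same route as the paper: a direct row-by-row verification that $\hat{\bf A}_M\br_j=(u+c_j\sqrt{\theta})\br_j$, in which every row collapses to the Hermite three-term recurrence and only the last row invokes $\He_{M+1}(c_j)=0$ through $c_j\He_M(c_j)=M\He_{M-1}(c_j)$. The only differences are organizational — you conjugate by ${\bf\Lambda}$ to work with the dimensionless matrix (whose modified last-row entries you compute correctly) and package the computation as the identity $\hat{\tilde{\bf A}}_M\bu(c)=c\,\bu(c)-\tfrac{1}{M!}\He_{M+1}(c)\be_{M+1}$ for arbitrary $c$, which is a clean restatement of the same calculation rather than a different argument.
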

\begin{proof}
To prove this theorem, we need only to prove 
\begin{equation}\label{eq:eigen_eq}
\hat{\bf A}_M\br_j = (u+c_j\sqrt{\theta})\br_j.
\end{equation}
Split $\hat{\bf A}_M$ by row as 
$\hat{\bf A}_M = (\boldsymbol{a}_1^T,
            \boldsymbol{a}_2^T,\cdots,\boldsymbol{a}_{M+1}^T)^T$,
where $\boldsymbol{a}_k$ is the $k$-th row of $\hat{\bf A}_M$, $k =
1,2,\cdots,M+1$. Thus (\ref{eq:eigen_eq}) can be written as
\begin{equation}\label{eq:eigen_split}
(\boldsymbol{a}_1\br_j, \boldsymbol{a}_2\br_j,\cdots,
            \boldsymbol{a}_{M+1}\br_j)^T =
(u+c_j\sqrt{\theta})\br_j,\quad \text{for } j = 1,2,\cdots,M+1.
\end{equation}
With the expression of $\hat{\bf A}_M$, the first four rows of
\eqref{eq:eigen_split} can be verified directly:
\begin{equation} \label{eq:a1-4_rj}
\begin{split}
\boldsymbol{a}_1\br_j &= u r_{j,1} + \rho r_{j,2} =
    \rho(u+c_j\sqrt{\theta}) = r_{j,1}(u+c_j\sqrt{\theta}),\\
\boldsymbol{a}_2\br_j &= \theta/\rho \cdot r_{j,1}+u r_{j,2} + r_{j,3} =
    c_j\sqrt{\theta}(u+c_j\sqrt{\theta}) = r_{j,2}(u+c_j\sqrt{\theta}),\\
\boldsymbol{a}_3\br_j &= 2\theta r_{j,2}+u r_{j,3} + 6/\rho \cdot r_{j,4} =
    (c_j^2-1)\theta(u+c_j\sqrt{\theta}) = r_{j,3}(u+c_j\sqrt{\theta}),\\
\boldsymbol{a}_4\br_j &= 4f_3 r_{j,2} + \rho\theta/2 \cdot r_{j,3} + 
    u r_{j,4} + 4 r_{j,5} \\
        &= \He_3(c_j)\rho\theta^{3/2}/6\cdot (u+c_j\sqrt{\theta})
= r_{j,4}(u+c_j\sqrt{\theta}), \qquad \text{(only when $M \geqslant 4$)}.
\end{split}
\end{equation}
For $5 \le k \le M$,
\begin{equation} \label{eq:proof-eigenvector-j}
\begin{split}
\boldsymbol{a}_k\br_j = -\frac{\theta f_{k-2}}{\rho} r_{j,1} &+ k f_{k-1}
r_{j,2} + \frac{1}{2}[(k-2) f_{k-2} + \theta f_{k-4}] r_{j,3} \\
&-\frac{3 f_{k-3}}{\rho} r_{j,4} 
    + \theta r_{j,j-1} + u r_{j,j} + k r_{j,j+1}.
\end{split}
\end{equation}
Then, we substitute \eqref{eq:eigenvector} into
\eqref{eq:proof-eigenvector-j}, and get
\begin{equation} \label{eq:ak_rj}
\begin{split}
\boldsymbol{a}_k\br_j & = \frac{\He_{k-2}(c_j)}{(k-2)!}\theta^{k/2} 
  + u \frac{\He_{k-1}(c_j)}{(k-1)!}\theta^{(k-1)/2}
  + \frac{\He_{k}(c_j)}{(k-1)!}\theta^{k/2} \\
  & \qquad + (-c_j\theta^{1/2})(c_j\sqrt{\theta} + u )f_{k-2} 
  + [- \theta (c_j^2-1) (u + c_j \sqrt{\theta}) / 2] f_{k-3} \\
  & = u \frac{\He_{k-1}(c_j)}{(k-1)!}\theta^{(k-1)/2}
    + c_j\frac{\He_{k-1}(c_j)}{(k-1)!}\theta^{k/2}\\
  & \qquad + (-c_j\theta^{1/2})(c_j\sqrt{\theta} + u )f_{k-2} 
    + [- \theta (c_j^2-1) (u + c_j \sqrt{\theta}) / 2] f_{k-3} \\
  & = (u+c_j \sqrt{\theta}) r_{j,k}.
\end{split}
\end{equation}
For $k=M+1$, the situation is similar as $5 \le k \le M$. We expand
$\boldsymbol{a}_{M+1}\br_j$ as
\begin{equation}\label{eq:proof-eigenvector-M+1}
\begin{split}
\boldsymbol{a}_{M+1}\br_j = {}&
  \frac{\He_{M-1}(c_j)}{(M-1)!}\theta^{(M+1)/2} 
    + u \frac{\He_{M}(c_j)}{M!}\theta^{M/2}\\
    &+ (-c_j\theta^{1/2})(c_j\sqrt{\theta} + u )f_{M-1} 
    + [- \theta (c_j^2-1) (u + c_j \sqrt{\theta}) / 2] f_{M-2}.
\end{split}
\end{equation}
Here, $c_j$ is the $j$-th root of Hermite polynomial $\He_{M+1}(x)$.
Hence, the recursion relation of Hermite polynomials gives
\begin{equation}
\He_{M-1}(c_j) = \frac{c_j \He_M(c_j)}{M}.
\end{equation}
Substituting this equation into \eqref{eq:proof-eigenvector-M+1}, we
get 
\begin{equation} \label{eq:aM+1_rj}
\boldsymbol{a}_{M+1} \br_j = (u+c_j\sqrt{\theta})r_{j,M+1}.
\end{equation}
Collecting \eqref{eq:a1-4_rj}, \eqref{eq:ak_rj} and
\eqref{eq:aM+1_rj}, we finally arrive at \eqref{eq:eigen_split}. This
completes the proof of the theorem.
\end{proof}

\begin{corollary} \label{cor:gn_or_ld}
Each characteristic field of the hyperbolic system
\eqref{eq:modified_moment_system} is either genuinely nonlinear or
linearly degenerate.
\end{corollary}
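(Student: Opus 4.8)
The plan is to apply the standard criterion: the $j$-th characteristic field of \eqref{eq:modified_moment_system} is genuinely nonlinear if $\nabla_{\bw_M} s_j \cdot \br_j$ never vanishes, and linearly degenerate if it vanishes identically, where $s_j$ and $\br_j$ are the eigenvalue and (smooth, nonvanishing) right eigenvector furnished by Theorem \ref{thm:hyperbolic} and Theorem \ref{thm:eigenvector}. Since these notions are insensitive to the normalization of the eigenvector, the explicit $\br_j$ from \eqref{eq:eigenvector} --- which is smooth in $\bw_M$ for $\theta>0$ and has first component $r_{j,1}=\rho>0$, hence is nonzero --- is a legitimate choice.

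First I would compute $\nabla_{\bw_M} s_j$. Because $s_j = u + c_j\sqrt{\theta}$ depends only on the components $u$ and $\theta$ of $\bw_M = (\rho,u,\theta,f_3,\dots,f_M)^T$, all partial derivatives vanish except $\partial s_j/\partial u = 1$ and $\partial s_j/\partial\theta = c_j/(2\sqrt{\theta})$. Dotting this gradient with $\br_j$ therefore only involves $r_{j,2}$ and $r_{j,3}$; substituting $r_{j,2}=c_j\sqrt{\theta}$ and $r_{j,3}=(c_j^2-1)\theta$ from \eqref{eq:eigenvector} gives
\[
\nabla_{\bw_M} s_j \cdot \br_j = c_j\sqrt{\theta} + \frac{c_j}{2\sqrt{\theta}}(c_j^2-1)\theta = \frac{c_j(c_j^2+1)}{2}\sqrt{\theta}.
\]

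It then only remains to split on whether $c_j$ --- the $j$-th root of $\He_{M+1}$ --- is zero. If $c_j\neq 0$, then since $c_j^2+1>0$ and $\theta>0$ the quantity above is nonzero for every admissible state, so the $j$-th field is genuinely nonlinear. If $c_j=0$ (which occurs precisely for the single middle root, i.e.\ when $M$ is even), the quantity vanishes identically, so that field is linearly degenerate. This exhausts all $j$ and proves the corollary. I do not expect any genuine obstacle here; the only point deserving a line of care is the remark that genuine nonlinearity and linear degeneracy are independent of the scaling of the eigenvector, which is what legitimizes computing with the particular representative \eqref{eq:eigenvector}.
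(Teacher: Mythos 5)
Your proposal is correct and follows essentially the same route as the paper: compute $\nabla_{\bw_M} s_j \cdot \br_j$ using the explicit eigenvector of Theorem \ref{thm:eigenvector}, obtain $\tfrac{1}{2}c_j(c_j^2+1)\sqrt{\theta}$, and conclude genuine nonlinearity for $c_j\neq 0$ (since $\theta>0$) and linear degeneracy for $c_j=0$. Your added remarks on normalization independence and on the middle root for even $M$ are correct but only make explicit what the paper leaves implicit.
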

\begin{proof}
Let $s_j = u + c_j \sqrt{\theta}$, and we only need to verify that
either $\nabla_{\bw_M} s_j \cdot \br_j \equiv 0$ or $\nabla_{\bw_M}
s_j \cdot \br_j \not\equiv 0$ holds. Since
\begin{equation}
\nabla_{\bw_M} s_j = \left(
  0, 1, \frac{1}{2} \frac{c_j}{\sqrt{\theta}}, 0, \cdots, 0
\right)^T,
\end{equation}
we have
\begin{equation}
\nabla_{\bw_M} s_j \cdot \br_j =
  c_j \sqrt{\theta} + \frac{1}{2} c_j (c_j^2 - 1) \sqrt{\theta}
= \frac{1}{2} c_j (c_j^2 + 1) \sqrt{\theta}.
\end{equation}
If $c_j$ is zero, the right hand side vanishes, while if $c_j$ is
nonzero, it is clear that $\nabla_{\bw_M} s_j \cdot \br_j \not\equiv
0$.
\end{proof}

This corollary indicates the simplicity of characteristic waves in the
solution of Riemann problems. Consider the following Riemann problem:
\begin{equation}
\begin{split}
& \frac{\partial \bw_M}{\partial t}
  + \hat{\bf A}_M \frac{\partial \bw_M}{\partial x} = 0, \\
& \bw_M(0,x) = \left\{ \begin{array}{ll}
  \bw_M^L, & x < 0, \\ [2mm]
  \bw_M^R, & x > 0. \\
\end{array} \right.
\end{split}
\end{equation}
A typical solution of this problem is the composition of at most $M+2$
intermediate states
\begin{displaymath}
\bw_M^0 = \bw_M^L, \quad \bw_M^1, \quad \cdots, \quad \bw_M^N, \quad
\bw_M^{N+1} = \bw_M^R, \qquad N \leqslant M,
\end{displaymath}
which are connected by $N+1$ elementary waves: rarefaction waves,
contact discontinuities, or shock waves. In order to get a full
understanding of the hyperbolic moment system, these waves will be
studied respectively below.

\subsubsection{Rarefaction waves}
As all hyperbolic systems, the integral curves and the Riemann
invariants are the major objects for the investigation of rarefaction
waves. The parameterization of an integral curve of the vector field
$\br_j$ satisfies
\begin{equation} \label{eq:int_curve}
\tilde{\bw}_M'(\zeta) = \br_j(\tilde{\bw}_M(\zeta)),
\end{equation}
where $\zeta$ is the parameter, and
\[
\tilde{\bw}_M(\zeta) =
\left(\tilde{\rho}(\zeta), \tilde{u}(\zeta), \tilde{\theta}(\zeta),
  \tilde{f}_3(\zeta), \cdots, \tilde{f}_M(\zeta)\right)^T
\]
denotes the integral curve in the $(M+1)$-dimensional phase space. For
a given point $\bw_M^0$ in the phase space, the integral curve through
$\bw_M^0$ can actually be analytically solved. Here we do not intend
to write down the complete expressions, while the analytical solutions
of $\rho(\zeta)$, $u(\zeta)$ and $\theta(\zeta)$ are given as
\begin{subequations} \label{eq:int_curve_pv}
\begin{align}
\label{eq:rarefactionwave_rho}
\tilde{\rho}(\zeta) & = \rho^0 \exp(\zeta), \\
\label{eq:rarefactionwave_u}
\tilde{u}(\zeta) & = u^0 + \frac{2c_j}{c_j^2-1} \sqrt{\theta^0} \left[
  \exp\left( \frac{c_j^2 - 1}{2} \zeta \right) - 1
\right],\\ 
\label{eq:rarefactionwave_theta}
\tilde{\theta}(\zeta) & = \theta^0 \exp\left( (c_j^2-1) \zeta \right).
\end{align}
\end{subequations}
It is easy to verify that \eqref{eq:int_curve_pv} satisfies the first
three equations of \eqref{eq:int_curve}. Note that in
\eqref{eq:int_curve}, only $\rho$, $\theta$ and $f_{j-2}$, $f_{j-1}$
appear in the right hand side of $f_j$'s equation, $j = 3,\cdots,M$.
Therefore, if the complete solution of $\tilde{\bw}_M(\zeta)$ is
needed, one can solve $f_j(\zeta)$ by explicit integration. Now we use
\eqref{eq:int_curve_pv} to give the $j$-th eigenvalue of ${\bf
  A}_M(\tilde{\bw}_M(\zeta))$ as
\begin{equation}\label{eq:rarefactionwave_sj}
s_j(\tilde{\bw}_M(\zeta)) =
  \tilde{u}(\zeta) + c_j \sqrt{\tilde{\theta}(\zeta)} =
  u^0 + c_j \sqrt{\theta^0} +
    \frac{c_j^2 + 1}{c_j^2 - 1} c_j \sqrt{\theta^0} \left[
      \exp\left( \frac{c_j^2 - 1}{2} \zeta \right) - 1
    \right].
\end{equation}
It is not difficult to prove that $s_j(\tilde{\bw}_M(\zeta)) \gtrless
s_j(\tilde{\bw}_M^0)$ if and only if $c_j \zeta \gtrless 0$, which is
helpful to predicate which part of the integral curve satisfies the
entropy condition. And substitution of \eqref{eq:rarefactionwave_u}
into \eqref{eq:rarefactionwave_sj} gives
\begin{equation}
s_j(\tilde{\bw}_M(\zeta)) - s_j(\tilde{\bw}_M^0) = 
    \frac{c_j^2+1}{2}(\tilde{u}(\zeta) - u^0).
\end{equation}
Hence, $s_j(\tilde{\bw}_M(\zeta)) \gtrless s_j(\tilde{\bw}_M^0)$ holds
if and only if $\tilde{u}(\zeta) \gtrless u^0$. Therefore, if the left
state $\bw_M^L$ and the right state $\bw_M^R$ are connected by a
single rarefaction wave, $u^L < u^R$ has to be satisfied, since the
entropy condition requires $s_j(\bw_M^L) < s_j(\bw_M^R)$. Now let us
turn to the pressure $p$. Equations
\eqref{eq:rarefactionwave_rho} and \eqref{eq:rarefactionwave_theta}
show that
\begin{equation}
\tilde{p}(\zeta) = \tilde{\rho}(\zeta)\tilde{\theta}(\zeta) = p^0
    \exp(c_j^2\zeta).
\end{equation}
Therefore, the pressures on both sides of a rarefaction wave should
satisfy
\[\begin{array}{l} 
p^L < p^R,\quad \mathrm{if~} c_j > 0; \\
p^L > p^R,\quad \mathrm{if~} c_j <0.
\end{array}\]
Here we point out that the sign of $c_j$ is as
\begin{equation}
c_j \left\{ \begin{array}{ll}
  > 0, & \quad \text{if~} j > (M+1)/2, \\
  = 0, & \quad \text{if~} j = (M+1)/2, \\
  < 0, & \quad \text{if~} j < (M+1)/2.
\end{array} \right.
\end{equation}

It is interesting that Riemann invariants exist for all genuinely
nonlinear fields, and the following theorem gives its expressions.
\begin{theorem} 
For hyperbolic moment system \eqref{eq:modified_moment_system}, the
Riemann invariants for the $j$-family are
\begin{equation} \label{eq:Riemann_invariants}
\begin{split}
R_1 &= \rho \theta^{-1/(c_j^2-1)}, \quad
  R_2 = u - \frac{2c_j}{c_j^2-1} \sqrt{\theta}, \\
R_k &= C_{k,0} \rho \theta^{k/2} +
  \sum_{i=3}^k C_{k,i} f_i \theta^{(k-i)/2}, \quad k = 3, \cdots, M.
\end{split}
\end{equation}
where $C_{k,i}$ is defined recursively as
\begin{subequations}
\begin{align}
\label{eq:C_kk}
& C_{k,k} = 1, \quad C_{k,k-1} = \frac{2c_j}{c_j^2-1}, \\
\label{eq:C_ki}
& C_{k,i} = \frac{1}{k-i} \left(
  C_{k,i+2} + C_{k,i+1} \frac{2c_j}{c_j^2-1}
\right), \qquad i = 3,\cdots,k-2, \\
\label{eq:C_k0}
& C_{k,0} = \frac{2}{(1-c_j^2)k - 2} \sum_{i=3}^k
  \frac{\He_i(c_j)}{i!} C_{k,i}.
\end{align}
\end{subequations}
\end{theorem}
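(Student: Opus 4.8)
The plan is to verify directly that each $R_\ell$ in \eqref{eq:Riemann_invariants} is annihilated by the vector field $\br_j$, i.e. that $\nabla_{\bw_M} R_\ell \cdot \br_j \equiv 0$ for $\ell = 1, \dots, M$, and then to check that the $M$ functions $R_1, \dots, R_M$ are functionally independent. Since the $j$-family is one-dimensional, a full set of Riemann invariants consists of $M$ independent functions constant along the integral curves of $\br_j$; establishing these two facts is exactly what is needed. Throughout I would use the explicit components of $\br_j$ from \eqref{eq:eigenvector}: $r_{j,1} = \rho$, $r_{j,2} = c_j\sqrt{\theta}$, $r_{j,3} = (c_j^2-1)\theta$, and $r_{j,k} = \frac{\He_{k-1}(c_j)}{(k-1)!}\rho\theta^{(k-1)/2} - \frac{c_j^2-1}{2}\theta f_{k-3} - c_j\sqrt{\theta} f_{k-2}$ for $k \geq 4$.

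First I would dispatch $R_1$ and $R_2$, which involve only $\rho, u, \theta$. Writing $\alpha = 2c_j/(c_j^2-1)$, one computes $\nabla R_2 \cdot \br_j = r_{j,2} - \frac{\alpha}{2\sqrt{\theta}} r_{j,3} = c_j\sqrt{\theta} - \frac{\alpha}{2\sqrt\theta}(c_j^2-1)\theta = c_j\sqrt\theta - \frac{c_j}{1}\sqrt\theta = 0$, using $\alpha(c_j^2-1) = 2c_j$. Similarly $\nabla R_1 \cdot \br_j$ picks up the $\rho$- and $\theta$-derivatives of $\rho\theta^{-1/(c_j^2-1)}$; plugging in $r_{j,1} = \rho$ and $r_{j,3} = (c_j^2-1)\theta$ one gets $\theta^{-1/(c_j^2-1)}\bigl(\rho - \frac{1}{c_j^2-1}\cdot\frac{\rho}{\theta}(c_j^2-1)\theta\bigr) = 0$. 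These are short and can be alternatively verified against the explicit parametrization \eqref{eq:int_curve_pv}: $R_1$ and $R_2$ are manifestly constant in $\zeta$ along $(\tilde\rho, \tilde u, \tilde\theta)(\zeta)$.

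The substantive step is $R_k$ for $3 \leq k \leq M$. Here I would compute $\nabla_{\bw_M} R_k \cdot \br_j$ by differentiating $R_k = C_{k,0}\rho\theta^{k/2} + \sum_{i=3}^k C_{k,i} f_i \theta^{(k-i)/2}$ with respect to $\rho, u, \theta, f_3, \dots, f_M$ and contracting with the components of $\br_j$. The $u$-derivative is zero; the $\rho$- and $f_i$-derivatives contract against $r_{j,1}$ and $r_{j,k}$; the $\theta$-derivative is the messy one, contracting against $r_{j,3}$. After substituting the expression for $r_{j,k}$ (which itself contains a $\rho\theta^{(k-1)/2}$ term, an $f_{k-3}$ term, and an $f_{k-2}$ term) and collecting by powers $\rho\theta^{(k-1)/2}$ and $f_i\theta^{(k-i-1)/2}$, the vanishing of each coefficient should reduce precisely to the recursions \eqref{eq:C_kk}, \eqref{eq:C_ki}, \eqref{eq:C_k0}. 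Concretely: the coefficient of $f_i$ (for $i$ in the relevant range) gives the three-term recursion \eqref{eq:C_ki} with the boundary terms \eqref{eq:C_kk} coming from the top-index contributions, and the coefficient of the pure $\rho\theta^{(k-1)/2}$ term — which collects $\sum_i C_{k,i}\He_i(c_j)/i!$ against $C_{k,0}$ — yields \eqref{eq:C_k0}. I expect the main obstacle to be bookkeeping: correctly tracking which term of $r_{j,k}$ and $r_{j,k+1}$ feeds into which power of $\theta$, keeping the shifted indices ($f_{k-3}$ vs.\ $f_{k-2}$ vs.\ $f_i$) aligned, and matching signs on the $(c_j^2-1)$ factors so that the combinatorial identity for Hermite polynomials closes. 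Once the contraction is shown to vanish, functional independence is immediate: $R_k$ depends on $f_k$ with coefficient $C_{k,k} = 1 \neq 0$ while $R_1, R_2$ and $R_3, \dots, R_{k-1}$ do not involve $f_k$, so the Jacobian $\partial(R_1,\dots,R_M)/\partial(\rho,u,f_3,\dots,f_M)$ is triangular with nonzero diagonal. This gives $M$ independent Riemann invariants for the $j$-family, completing the proof.
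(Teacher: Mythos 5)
Your proposal follows essentially the same route as the paper: verify $\nabla_{\bw_M} R_\ell \cdot \br_j \equiv 0$ directly for $\ell=1,2$ and, for $k\geqslant 3$, contract the gradient of $R_k$ with $\br_j$, group the result by the monomials $\rho\theta^{k/2}$ and $f_i\theta^{(k-i)/2}$, and observe that the recursions \eqref{eq:C_kk}--\eqref{eq:C_k0} annihilate each group — which is exactly the computation carried out in the paper (your quoted powers of $\theta$ are off by a harmless uniform $\theta^{-1/2}$, and the bookkeeping closes as you anticipate). The only addition is your functional-independence check via the triangular dependence on $f_k$, which the paper omits but which is correct and immediate.
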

\begin{proof}
We only need to prove
\begin{equation} \label{eq:GradR_rj_eq_0}
\nabla_{\bw_M} R_k \cdot \br_j \equiv 0, \quad \forall k = 1,\cdots,M.
\end{equation}
The verification in the cases $k=1$ and $k=2$ is straightforward:
\begin{align}
\nabla_{\bw_M} R_1 \cdot \br_j & = \theta^{-1/(c_j^2-1)} \cdot \rho
  - \frac{1}{c_j^2 - 1} \rho \theta^{-1/(c_j^2-1)-1}
  \cdot (c_j^2-1) \theta = 0, \\
\nabla_{\bw_M} R_2 \cdot \br_j & = 1 \cdot c_j \sqrt{\theta} -
  \frac{c_j}{(c_j^2 - 1) \sqrt{\theta}} \cdot (c_j^2-1) \theta = 0.
\end{align}
If $k \geqslant 3$, the gradient of $R_k$ is
\begin{equation}
\begin{split}
\nabla_{\bw_M} R_k &= \Bigg(
  C_{k,0} \theta^{k/2}, \; 0, \;
  \frac{k}{2} C_{k,0} \rho \theta^{(k-1)/2} + 
    \sum_{i=3}^{k-1} \frac{k-i}{2} C_{k,i} f_i \theta^{(k-i)/2-1}, \\
& \qquad \qquad \qquad \qquad
  C_{k,3} \theta^{(k-3)/2}, \; C_{k,4} \theta^{(k-4)/2}, \;
  \cdots, \; C_{k,k} \theta^{(k-k)/2}
\Bigg)^T.
\end{split}
\end{equation}
With some rearrangement, $\nabla_{\bw_M} R_k \cdot \br_j$ is
simplified as
\begin{equation} \label{eq:GradR_rj}
\begin{split}
\nabla_{\bw_M} R_k \cdot \br_j & = \left[
  \left( 1 + \frac{k}{2} (c_j^2 - 1) \right) C_{k,0} +
  \sum_{i=3}^k C_{k,i} \frac{\He_i(c_j)}{i!}
\right] \rho \theta^{k/2} \\
& \quad + \sum_{i=3}^{k-2} \left[
  \frac{c_j^2 - 1}{2} (k-i) C_{k,i} -
  c_j C_{k,i+1} - \frac{c_j^2-1}{2} C_{k,i+2}
\right] f_i \theta^{(k-i)/2} \\
& \quad + \left(
  \frac{c_j^2-1}{2} C_{k,k-1} - c_j C_{k,k}
\right) f_{k-1}.
\end{split}
\end{equation}
We have that 
\begin{itemize}
\item \eqref{eq:C_kk} indicates that the last line of \eqref{eq:GradR_rj} is zero; 
\item \eqref{eq:C_ki} indicates that the second line of \eqref{eq:GradR_rj} is zero; 
\item \eqref{eq:C_k0} indicates that the first line of \eqref{eq:GradR_rj} is zero. 
\end{itemize}
Thus \eqref{eq:GradR_rj_eq_0} is proved.
\end{proof}

\subsubsection{Contact discontinuities}
According to the proof of Corollary \ref{cor:gn_or_ld}, the contact
discontinuities can only be found in the case of $c_j = 0$. Thus, if
$M$ is odd, no contact discontinuities exist in the characteristic
waves. For contact discontinuities, the discussion on integral curves
and Riemann invariants above is still valid. If we substitute $c_j =
0$ into \eqref{eq:Riemann_invariants}, $u$, $p$ and $f_3$ can be found
to be invariant acrossing the contact discontinuity.

\subsubsection{Shock waves}
Discussion of the shock waves requires additional scrupulosity. As been
well known, the jump condition on the shock wave is sensitive to the
form of the hyperbolic equations. Therefore, before we give the
Rankine-Hugoniot condition, it is necessary to rewrite
\eqref{eq:modified_moment_system} in an appropriate form. Though a
conservative form is desired, the whole system can no longer be
written as a conservation law since two terms are added to the last
equation. Nevertheless, the conservative form of the first $M$
equations remains. Thus \eqref{eq:modified_moment_system} can actually
be reformulated by $M$ conservation laws and a single non-conservative
equation. Precisely, if we let
\begin{equation}
\bq = (q_0, \cdots, q_M)^T, \qquad
  q_j = \frac{1}{j!} \int_{\bbR} \xi^j f(\xi) \dd \xi, \quad
  j = 0, \cdots, M,
\end{equation}
\eqref{eq:modified_moment_system} is reformulated as
\begin{equation} \label{eq:con_eq}
\begin{gathered}
\frac{\partial q_j}{\partial t} +
  (j+1) \frac{\partial q_{j+1}}{\partial x} = 0,
  \quad j = 0,\cdots,M-1, \\
\frac{\partial q_M}{\partial t} +
  \frac{\partial F(\bq)}{\partial x}
  - \RM = 0.
\end{gathered}
\end{equation}
The relation between $\bq$ and $\bw_M$ is
\begin{equation}
f_j = \sum_{k=0}^j (-1)^{j-k}
  \frac{\He_{j-k}(u/\sqrt{\theta})}{(j-k)!} \theta^{\frac{j-k}{2}} q_k,
\quad u = q_1 / q_0, \quad \theta = 2q_2 / q_0 - (q_1 / q_0)^2,
\end{equation}
and $F(\bq)$ is defined as
\begin{equation}
F(\bq) = (M+1) \sum_{k=0}^M
  (-1)^{M-k} \frac{\He_{M+1-k}(u/\sqrt{\theta})}{(M+1-k)!}
  \theta^{\frac{M+1-k}{2}} q_k.
\end{equation}
For convenience, we write \eqref{eq:con_eq} in the following form:
\begin{equation} \label{eq:noncon_system}
\frac{\partial \bq}{\partial t} +
  {\bf B}(\bq) \frac{\partial \bq}{\partial x} = 0,
\end{equation}
where ${\bf B}(\bq)$ is an $(M+1)\times (M+1)$ matrix.

Since \eqref{eq:noncon_system} is still a non-conservative system, the
DLM theory \cite{Maso} is introduced when discussing the shock wave. A
shock wave is a single jump discontinuity connecting two constant
states $\bq^L$ and $\bq^R$ in a genuinely nonlinear field $j$, and
$\bq^L$, $\bq^R$ and the propagation speed of the shock wave $S_j$
should satisfy the following conditions:
\begin{itemize}
\item
Generalized Rankine-Hugoniot condition:
\begin{equation}\label{eq:RHcondition}
\int_0^1 \left[
  S_j {\bf I} - {\bf B} \left( \bPhi(\nu; \bq^L, \bq^R) \right)
\right] \frac{\partial \bPhi}{\partial
            \nu}(\nu;\bq^L,\bq^R) \dd \nu = 0,
\end{equation}
where ${\bf I}$ is the identity matrix of order $M+1$, and $\bPhi(\nu;
\bq^L, \bq^R)$ is a locally Lipschitz mapping satisfying
\begin{equation}
\bPhi(0; \bq^L, \bq^R) = \bq^L, \quad \bPhi(1; \bq^L, \bq^R) = \bq^R.
\end{equation}
We refer the readers to \cite{Maso} for details. In Section
\ref{sec:col}, we will point out that the setup of $\bPhi$ is not
crucial if the collision term presents.
\item
Entropy condition
\begin{equation}\label{eq:shockwave_entropy}
s_j(\bq^L) > S_j > s_j(\bq^R).
\end{equation}
\end{itemize}
It is obvious that the first $M$ rows of \eqref{eq:RHcondition} are
independent of $\bPhi$; they are the same as the classical
Rankine-Hugoniot conditions. This allows us to analyze the properties
of the shock waves without regarding the form of $\bPhi$.

The first and second equations of \eqref{eq:RHcondition} can be
written as
\begin{align}
\rho^L u^L - \rho^R u^R &= S_j(\rho^L - \rho^R),
    \label{eq:shockwave_first}\\
\rho^L (u^L)^2 + \rho^L \theta^L - \rho^R (u^R)^2 - \rho^R\theta^R &=
S_j(\rho^L u^L - \rho^R u^R).\label{eq:shockwave_second}
\end{align}
Since $\rho^L \neq \rho^R$ and $\rho^L u^L \neq \rho^R u^R$ (otherwise
$\bq^L = \bq^R$), one has
\begin{subequations} \label{eq:shockwave_Si}
\begin{align}
S_j &= \frac{\rho^L u^L - \rho^R u^R}{\rho^L -
    \rho^R}\label{eq:shockwave_si1}\\
    & = \frac{\rho^L (u^L)^2 + \rho^L \theta^L - \rho^R (u^R)^2 -
        \rho^R\theta^R}{\rho^L u^L - \rho^R
            u^R}.\label{eq:shockwave_si2}
\end{align}
\end{subequations}
Putting \eqref{eq:shockwave_si1} into \eqref{eq:shockwave_entropy},
and multiplying both sides with $(\rho^L - \rho^R)^2$, we get
\begin{subequations} \label{eq:entropy_inequality}
\begin{align}
\rho^L(u^L-u^R)(\rho^L-\rho^R) &>
c_j(\rho^L-\rho^R)^2\sqrt{\theta^R},\label{eq:shockwave_entropy_R}\\
\rho^R(u^L-u^R)(\rho^L-\rho^R) &<
c_j(\rho^L-\rho^R)^2\sqrt{\theta^L}\label{eq:shockwave_entropy_L}.
\end{align}
\end{subequations}
If $c_j>0$, \eqref{eq:shockwave_entropy_R} gives
\begin{equation} \label{eq:shockwave_rhou}
(u^L-u^R)(\rho^L-\rho^R) > 0.
\end{equation}
Thus, we can divide both sides of \eqref{eq:entropy_inequality} by
$(u^L - u^R)(\rho^L - \rho^R)$ without changing the inequality sign,
and the result is:
\begin{equation}
\frac{\rho^L}{\sqrt{\theta^R}} >
  \frac{c_j (\rho^L - \rho^R)}{u^L - u^R} >
  \frac{\rho^R}{\sqrt{\theta^L}},
\end{equation}
from which one directly has
\begin{equation}\label{eq:shockwave_rhomultp1}
(\rho^L)^2 \theta^L - (\rho^R)^2 \theta^R > 0.
\end{equation}
Similarly, if $c_j<0$, we have
\begin{equation}\label{eq:shockwave_rhomultp2}
(u^L - u^R)(\rho^L - \rho^R) < 0,
\quad \text{and} \quad
(\rho^L)^2 \theta^L - (\rho^R)^2 \theta^R < 0.
\end{equation}

\begin{lemma}\label{eq:shockwave_u}
For hyperbolic moment system \eqref{eq:noncon_system}, if $\bq^L$
and $\bq^R$ are connected by a $j$-shock wave, then the following
inequalities hold:
\begin{align}
u^L &> u^R, \qquad \text{and } \qquad \left\{
    \begin{array}{ll}
        p^L > p^R, & \quad \text{if } c_j > 0,\\
        p^L < p^R, & \quad \text{if } c_j < 0.
    \end{array}\right.
\end{align}

\end{lemma}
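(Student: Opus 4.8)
The plan is to push everything through the first two scalar Rankine--Hugoniot relations \eqref{eq:shockwave_first}--\eqref{eq:shockwave_second}, which (as already observed) are the first two rows of \eqref{eq:RHcondition} and hence independent of the path $\bPhi$, so nothing about the choice of $\bPhi$ is needed. The first step is to eliminate the shock speed $S_j$ between these two equations: solving \eqref{eq:shockwave_first} for $S_j$ is legitimate because $\rho^L \neq \rho^R$, and substituting into \eqref{eq:shockwave_second}, clearing the denominator and expanding, the quadratic cross terms collapse and one is left with the single algebraic identity
\begin{equation}
\rho^L \rho^R (u^L - u^R)^2 = (\rho^L - \rho^R)(p^L - p^R),
\end{equation}
where $p = \rho\theta$ has been used. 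This is the familiar Rayleigh-line relation from gas dynamics, and it is the heart of the argument.

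Next I would read off sign information. The left-hand side is nonnegative, and by \eqref{eq:shockwave_rhou} when $c_j > 0$ (resp.\ \eqref{eq:shockwave_rhomultp2} when $c_j < 0$) the product $(u^L-u^R)(\rho^L-\rho^R)$ is strictly nonzero, so $u^L \neq u^R$ and the identity forces $(\rho^L - \rho^R)(p^L - p^R) > 0$; that is, $\rho^L - \rho^R$ and $p^L - p^R$ share the same nonzero sign. Combining this with the two inequalities derived just before the lemma then pins down every sign. For $c_j > 0$ we have $(u^L - u^R)(\rho^L - \rho^R) > 0$ from \eqref{eq:shockwave_rhou} and $\rho^L p^L > \rho^R p^R$ (i.e.\ $(\rho^L)^2\theta^L > (\rho^R)^2\theta^R$) from \eqref{eq:shockwave_rhomultp1}; if $\rho^L < \rho^R$ then the first inequality gives $u^L < u^R$ and the Rayleigh identity gives $p^L < p^R$, whence $\rho^L p^L < \rho^R p^R$, contradicting \eqref{eq:shockwave_rhomultp1}. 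Hence $\rho^L > \rho^R$, and then $u^L > u^R$ and $p^L > p^R$ follow at once.

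The case $c_j < 0$ is symmetric, now using \eqref{eq:shockwave_rhomultp2}: assuming $\rho^L > \rho^R$ yields $u^L < u^R$ and $p^L > p^R$, hence $\rho^L p^L > \rho^R p^R$, contradicting $(\rho^L)^2\theta^L < (\rho^R)^2\theta^R$; therefore $\rho^L < \rho^R$, $u^L > u^R$ and $p^L < p^R$. The computation is short, and I do not anticipate a genuine obstacle; the only care needed is the bookkeeping of which sign of $c_j$ activates which previously derived inequality, and checking that the degenerate configurations $u^L = u^R$ and $\rho^L = \rho^R$ are genuinely excluded (they are, by the strict entropy inequalities \eqref{eq:shockwave_entropy_R}--\eqref{eq:shockwave_entropy_L} already used to obtain \eqref{eq:shockwave_rhou} and \eqref{eq:shockwave_rhomultp2}). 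If one prefers to avoid eliminating $S_j$, the same conclusions can be reached by working in the shock frame, where \eqref{eq:shockwave_first}--\eqref{eq:shockwave_second} become conservation of mass flux $\rho^L v^L = \rho^R v^R$ and of momentum flux $\rho^L (v^L)^2 + p^L = \rho^R (v^R)^2 + p^R$ with $v^{L} = u^{L} - S_j$ and $v^{R} = u^{R} - S_j$, and the entropy condition fixes the sign of the mass flux; the signs of $u^L-u^R$ and $p^L-p^R$ then drop out in the same way.
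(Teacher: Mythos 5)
Your proposal is correct and follows essentially the same route as the paper: eliminating $S_j$ from \eqref{eq:shockwave_first}--\eqref{eq:shockwave_second} yields exactly the identity \eqref{eq:shockwave_rhop}, whose positivity, combined with \eqref{eq:shockwave_rhou}, \eqref{eq:shockwave_rhomultp1} and \eqref{eq:shockwave_rhomultp2}, fixes the signs of $u^L-u^R$ and $p^L-p^R$. Your case analysis is merely phrased as a contradiction on the sign of $\rho^L-\rho^R$ rather than the paper's direct selection between the two admissible sign patterns, which is an immaterial difference.
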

\begin{proof}
With some rearrangement, \eqref{eq:shockwave_Si} can be reformulated
as
\begin{equation}\label{eq:shockwave_rhop}
(\rho^L - \rho^R)(\rho^L\theta^L - \rho^R\theta^R) =
\rho^L \rho^R (u^L-u^R)^2.
\end{equation}
Since the right hand side of \eqref{eq:shockwave_rhop} is positive,
one but only one of the following two statements is true:
\begin{enumerate}
\item $\rho^L > \rho^R$ and $\rho^L \theta^L > \rho^R \theta^R$;
\item $\rho^L < \rho^R$ and $\rho^L \theta^L < \rho^R \theta^R$.
\end{enumerate}
If $c_j > 0$, equation \eqref{eq:shockwave_rhomultp1} indicates that
the first statement is true. Then, we can use
\eqref{eq:shockwave_rhou} to conclude $u^L > u^R$. The conclusion for
the case $c_j < 0$ can be proved in the same way.
\end{proof}

Now, we summarize all our discussions on the entropy conditions of
three types of waves in the following theorem:
\begin{theorem}
  For hyperbolic moment system \eqref{eq:modified_moment_system},
  if the wave of the $j$-th family is elementary, then its type can be
  determined by the value of $c_j$ and the macroscopic velocities or
  pressures on both sides of the wave:
\begin{center}
\begin{tabular}{|l|c|c|}\hline
                                     &   \rm{Velocity}                        &   \rm{Pressure}      \\ \hline
\rm{Contact discontinuity}           &   $c_j = 0$, $u^L = u^R$               &   $c_j=0$, $p^L=p^R$ \\ \hline
\multirow{2}*{\rm{Rarefaction wave}} &  \multirow{2}*{$c_j\neq 0$, $u^L<u^R$} &   $c_j>0$, $p^L<p^R$ \\ \cline{3-3}
                                     &                                        &   $c_j<0$, $p^L>p^R$ \\ \hline
\multirow{2}*{\rm{Shock wave}}       &  \multirow{2}*{$c_j\neq 0$, $u^L>u^R$} &   $c_j>0$, $p^L>p^R$ \\ \cline{3-3}
                                     &                                        &   $c_j<0$, $p^L<p^R$ \\ \hline
\end{tabular}
\end{center}
\end{theorem}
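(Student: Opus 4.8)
The plan is to read this theorem as a bookkeeping summary of the three preceding subsections, organized by the dichotomy of Corollary~\ref{cor:gn_or_ld}: every characteristic field of \eqref{eq:modified_moment_system} is genuinely nonlinear when $c_j\neq 0$ and linearly degenerate when $c_j=0$, so an elementary $j$-wave is a contact discontinuity exactly when $c_j=0$ and is either a rarefaction wave or a shock wave exactly when $c_j\neq 0$. Hence it suffices to determine, in each of these cases, the sign of $u^L-u^R$ and of $p^L-p^R$, and then to observe that the resulting sign patterns are pairwise incompatible, so that knowledge of $c_j$ together with the velocities (or the pressures) on the two sides pins the wave type down uniquely.

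First I would dispose of the case $c_j=0$. Here the field is linearly degenerate, the only admissible elementary $j$-wave is a contact discontinuity, and the $j$-Riemann invariants \eqref{eq:Riemann_invariants} are constant across it. Substituting $c_j=0$ gives $R_1=\rho\theta=p$ and $R_2=u$, and (using $\He_3(0)=0$ in \eqref{eq:C_k0}) also $R_3=f_3$; therefore $u^L=u^R$ and $p^L=p^R$, which is the first row of the table, while the jump is carried by the higher moments alone.

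Next, for $c_j\neq 0$ the field is genuinely nonlinear. If the wave is a rarefaction, the entropy condition forces $s_j$ to increase from the left state to the right state; combined with the identity $s_j(\tilde{\bw}_M(\zeta))-s_j(\tilde{\bw}_M^0)=\tfrac{1}{2}(c_j^2+1)\bigl(\tilde{u}(\zeta)-u^0\bigr)$ derived along the integral curve \eqref{eq:int_curve_pv}, this yields $u^L<u^R$, and the relation $\tilde{p}(\zeta)=p^0\exp(c_j^2\zeta)$ along the entropy-admissible branch of that curve then gives $p^L<p^R$ when $c_j>0$ and $p^L>p^R$ when $c_j<0$. If instead the wave is a shock, Lemma~\ref{eq:shockwave_u} directly supplies $u^L>u^R$ together with $p^L>p^R$ for $c_j>0$ and $p^L<p^R$ for $c_j<0$. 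This exhausts every entry of the table.

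The point requiring genuine care --- and the step I expect to be the main obstacle --- is the \emph{determination} claim itself rather than any single inequality: one must verify that the velocity patterns $u^L=u^R$, $u^L<u^R$, $u^L>u^R$ (and likewise the pressure patterns) are mutually exclusive and that, for a fixed $j$, an elementary wave cannot exhibit a pattern belonging to another wave type. This is precisely where the dichotomy $c_j=0$ versus $c_j\neq 0$ does the work: a genuinely nonlinear elementary wave is always either a rarefaction or a shock, and these carry \emph{strict} velocity and pressure inequalities of opposite signs, while the equality case is forced by $c_j=0$. A secondary subtlety is choosing, along the rarefaction integral curve, the half that satisfies the entropy condition --- which is resolved by the earlier observation that $s_j(\tilde{\bw}_M(\zeta))\gtrless s_j(\tilde{\bw}_M^0)$ if and only if $c_j\zeta\gtrless 0$ --- so that $\tilde{p}(\zeta)=p^0\exp(c_j^2\zeta)$ is traversed in the direction consistent with the sign of $c_j$.
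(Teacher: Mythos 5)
Your proposal is correct and follows essentially the same route as the paper, which states this theorem explicitly as a summary of the preceding analysis: the dichotomy of Corollary~\ref{cor:gn_or_ld} ($c_j=0$ linearly degenerate, $c_j\neq 0$ genuinely nonlinear), the contact-discontinuity invariance of $u$, $p$ (and $f_3$) from the Riemann invariants at $c_j=0$, the rarefaction inequalities from $s_j(\tilde{\bw}_M(\zeta))-s_j(\tilde{\bw}_M^0)=\tfrac{1}{2}(c_j^2+1)(\tilde{u}(\zeta)-u^0)$ and $\tilde{p}(\zeta)=p^0\exp(c_j^2\zeta)$, and the shock inequalities from Lemma~\ref{eq:shockwave_u}. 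Your added remark on the mutual exclusivity of the strict and equality sign patterns is a reasonable explicit articulation of why the table ``determines'' the wave type, and is consistent with what the paper leaves implicit.
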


\begin{remark}
  It is not difficult to find that Euler equations are a special case
  of the proposed hyperbolic moment equations. In the case of $M=2$,
  we have $f_1 = f_2 = 0$ thus the regularization vanishes. In other
  words, just like Grad's moment sytem, the hyperbolic system can be
  viewed as an extension of Euler equations. Actually, all the
  discussions in this section, including the eigenvalues and
  eigenvectors, Riemann invariants, and the entropy condition, are
  valid for the 1D Euler equations with adiabatic index $\gamma = 3$,
  while Grad's moment system is not able to preserve these criterions.
  In this respect, comparing with Grad's moment system, this
  regularized moment system is likely to be a more natural extension
  of Euler equations.
\end{remark}


\section{The case with collision terms} \label{sec:col}
In this section, we will give a short discussion on the moment
system with collision terms. For simplicity, the BGK collision
operator \cite{BGK} is considered. In this case, the Boltzmann
equation \eqref{eq:Boltzmann} becomes
\begin{equation}
\frac{\partial f}{\partial t} +
  \xi \frac{\partial f}{\partial x} = \frac{1}{\tau} (f_M - f),
\end{equation}
where $\tau$ is the relaxation time, and $f_M$ is the Maxwellian:
\begin{equation}
f_M = \frac{\rho}{\sqrt{2\pi \theta}} \exp \left(
  - \frac{|\xi - u|^2}{2 \theta}
\right).
\end{equation}
This equation leads to a very simple form of the collision term in the
moment system as
\begin{equation} \label{eq:collisional} 
  \frac{\partial \bw_M}{\partial t} + {\bf A}_M \frac{\partial \bw_M}
  {\partial x} - \delta_H \RM \be_{M+1} = -\frac{1}{\tau} {\bf P} \bw_M,
\end{equation}
where $\bf P$ is a diagonal matrix
\begin{equation} \label{eq:P}
{\bf P} = \mathrm{diag} \{ 0,0,0,1,\cdots,1 \},
\end{equation}
and $\delta_H = 0$ corresponds to Grad's moment system, while
$\delta_H = 1$ corresponds to the regularized moment system. Note that
when considering the weak solution of \eqref{eq:collisional}, one
still needs to rewrite \eqref{eq:collisional} as equations of $\bq$:
\begin{equation}
\begin{gathered}
\frac{\partial q_j}{\partial t} +
  (j+1) \frac{\partial q_{j+1}}{\partial x} =
  -\frac{1}{\tau} \mathcal{P}_j(q_0, \cdots, q_j),
  \quad j = 0,\cdots,M-1, \\
\frac{\partial q_M}{\partial t} +
  \frac{\partial F(\bq)}{\partial x}
  - \RM = -\frac{1}{\tau} \mathcal{P}_M(q_0, \cdots, q_M),
\end{gathered}
\end{equation}
where $\mathcal{P}_j$, $j=0,\cdots,M$ are the corresponding production
terms. Then the first order derivative part of the last equation will
still be treated using the DLM theory.

An important index that exhibits the quality of a collisional moment
system is its order of accuracy in term of $\tau$. The conception of
``order of accuracy'' is based on the assumption that $\tau$ is a
small quantity, and its precise definition can be found in
\cite{Struchtrup, Li}. In \cite{NRxx_new}, the order of magnitude for
each moment has been deduced as
\begin{equation} \label{eq:order}
f_k \sim O(\tau^{\lceil k/3 \rceil}), \quad k \geqslant 3
\end{equation}
for the infinite moment system, which is obtained by the technique of
Maxwellian iteration. It is easy to find that \eqref{eq:order} remains
correct for the regularized moment system (equation
\eqref{eq:collisional} with $\delta_H = 1$), since the order of $\RM$
never exceeds the leading order term of $f_M / \tau$. However, when $M
= 3m + 1$, $m \geqslant 1$, the order of accuracy of the moment system
is actually reduced by $2$ with presentation of the regularization
terms. This fact is not difficult to obtain and will be reported
elsewhere. In general, the order of accuracy still goes to infinity as
$M$ increases.

Another issue is the choice of the path function $\bPhi$, which was
introduced in \eqref{eq:RHcondition}. Let us restrict our discussion
of its role in solving a Riemann problem of
\eqref{eq:collisional}. First, we need to get some knowledge about the
general behavior of the solution, referring to the careful study of
the Riemann problem of 13-moment system in \cite{Torrilhon2000}.
Roughly speaking, the solution shows a number of waves initially, then
these waves are damping gradually, and eventually the solution tends
to a smooth curve which is similar as the solution of Euler equations.
The initial waves have no physical meanings due to the strong
non-equilibrium which cannot be described by the moment system, while
the solution gets close to the Boltzmann equation's solution only when
the waves are fully dissipated. Later, this behavior is verified
numerically for large number moment equations in \cite{Weiss}, where
the authors show that the speed of dissipation increases when the
number of moments gets larger. It is expected that this also describes
the evolution of regularized moment system. Based on
\cite{Torrilhon2000, Weiss}, we have the following assertions for the
regularized moment system:
\begin{enumerate}
\item If subshocks appear in the solution, the choice of $\bPhi$
  indeed makes sense. In this situation, the system is inadequate for
  the description of the physical process, saying $M$ needs to be
  increased.
\item $\bPhi$ affects the solution when the time $t$ is very
  small. However, such solution has no physical significance, either.
  Only when the solution gets close enough to a smooth function, the
  moment system starts to show its ability to describe physics. Note
  that the smooth solution is independent of $\bPhi$; therefore,
  $\bPhi$ only affects the way in which the waves are damped, but does
  not affect the intrinsic constituent of the solution.
\end{enumerate}
These two assertions indicate that the choice of $\bPhi$ is not
crucial in solving a Riemann problem. We can simply use a linear
function to connect any two states such that the numerical schemes can
be constructed easily.


\section{Numerical experimentation for a shock tube problem}
\label{sec:num}
In this section, a shock tube problem is studied numerically to show
the behavior of the hyperbolic moment systems. We consider the
following Riemann problem:
\begin{equation}
\begin{split}
& \frac{\partial \bw_M}{\partial t}
  + \hat{\bf A}_M \frac{\partial \bw_M}{\partial x}
  = -\frac{1}{\tau} {\bf P} \bw_M, \\
& \bw_M(0,x) = \left\{ \begin{array}{ll}
  \bw_M^L, & x < 0, \\ [2mm]
  \bw_M^R, & x > 0, \\
\end{array} \right.
\end{split}
\end{equation}
where $\bf P$ is defined in \eqref{eq:P} and the initial left and
right states are
\begin{equation}
\bw_M^L = (7,0,1,0,\cdots,0)^T, \quad
\bw_M^R = (1,0,1,0,\cdots,0)^T.
\end{equation}
The relaxation time is chosen as $\tau = \Kn / \rho$. Here two
different cases $\Kn = 0.05$ and $\Kn = 0.5$ are considered. A
nonconservative version of the HLL scheme \cite{Rhebergen} is employed
to discretize the moment system.

The numerical results for $\Kn = 0.05$ with $M$ ranging from $2$ to
$10$ are listed in Figure \ref{fig:Kn=0.05}, in which the thin black
lines are the numerical results of the hyperbolic moment equations
(HME), and the thick gray lines are the results of Mieussens' discrete
velocity model (DVM) \cite{Mieussens}, provided as reference
solutions. The profiles of $\rho$, $u$ and $p$ are drawn. It is clear
that the solutions of hyperbolic moment systems converge to the
solution of the Boltzmann equation when $M$ increases. Note that when
$M=2$, the hyperbolic moment system is equivalent to the Euler
equations, and the contact discontinuities and the shocks are obvious.
When $M = 3$, a shock can still be found near $x = 0.75$. When $M$ is
greater than $5$, the discontinuities are fully damped. This agrees
with Torillhon's theory \cite{Weiss} that the discontinuities are
damped faster when $M$ is larger.
\begin{figure}[!ht]
\psfrag{rho, CDVM}{\scalebox{.5}{$\rho$, DVM}}
\psfrag{rho, HME}{\scalebox{.5}{$\rho$, HME}}
\psfrag{u, CDVM}{\scalebox{.5}{$u$, DVM}}
\psfrag{u, HME}{\scalebox{.5}{$u$, HME}}
\psfrag{p, CDVM}{\scalebox{.5}{$p$, DVM}}
\psfrag{p, HME}{\scalebox{.5}{$p$, HME}}
\subfigure[$M=2$ (Euler)]{
\includegraphics[width=.31\textwidth,clip]{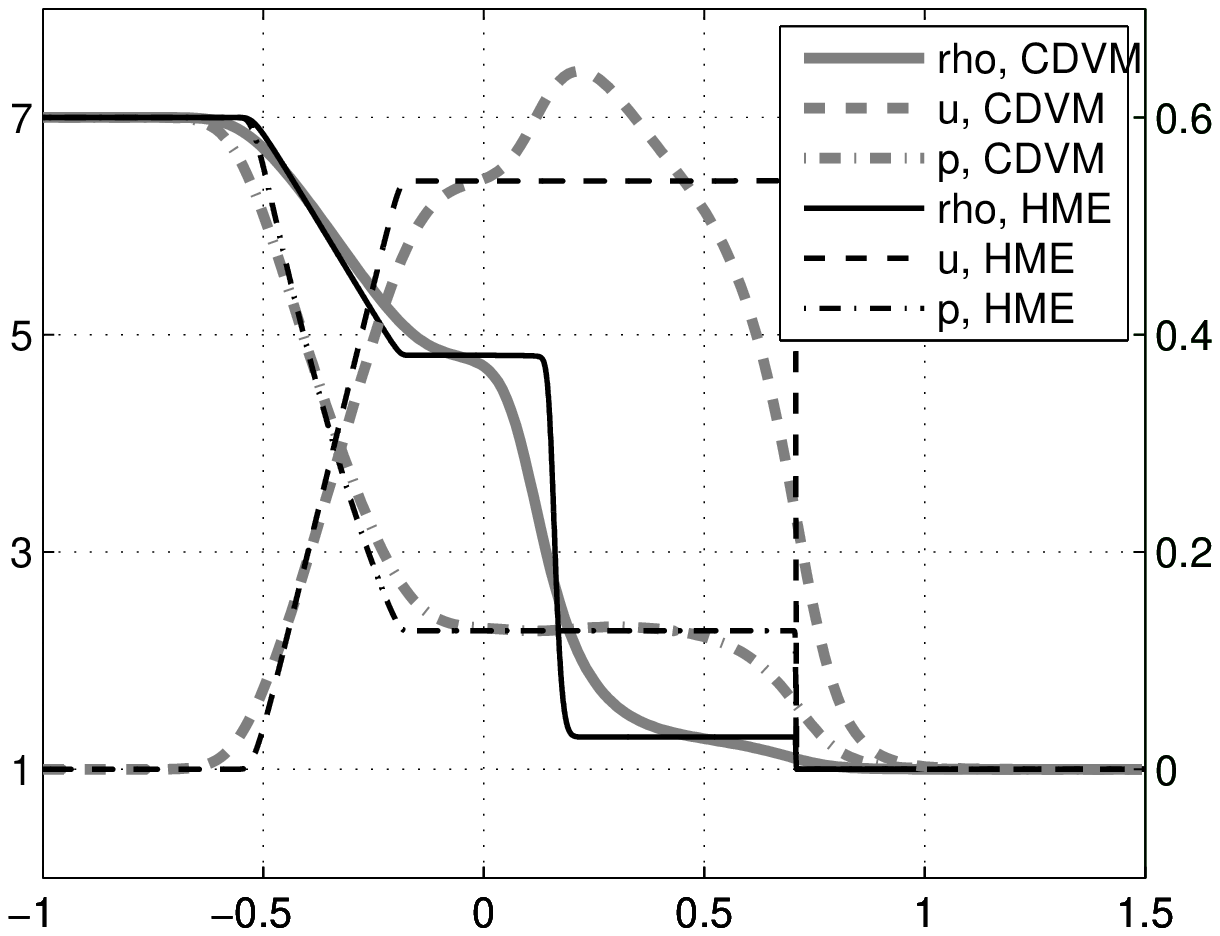}
}
\subfigure[$M=3$]{
\includegraphics[width=.31\textwidth,clip]{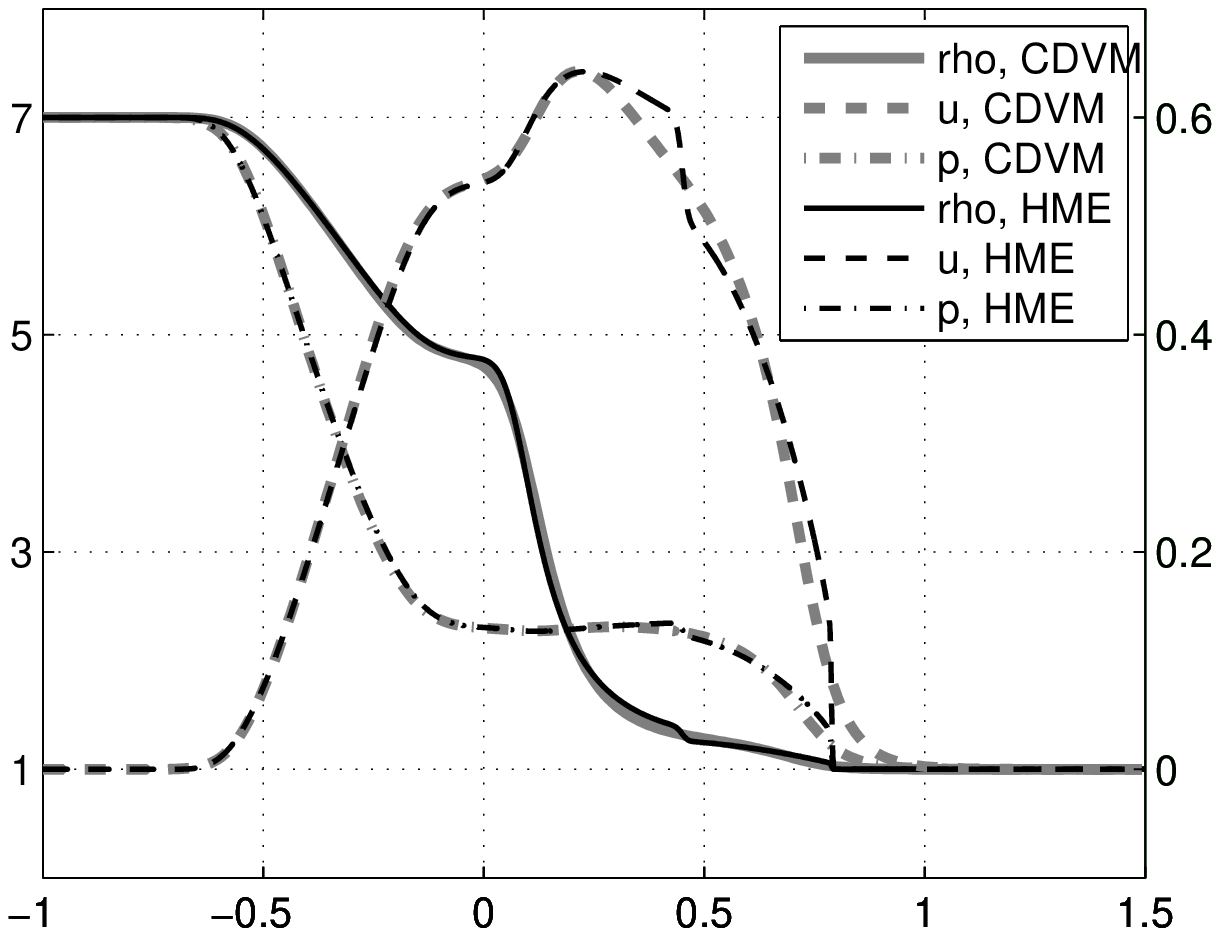}
}
\subfigure[$M=4$]{
\includegraphics[width=.31\textwidth,clip]{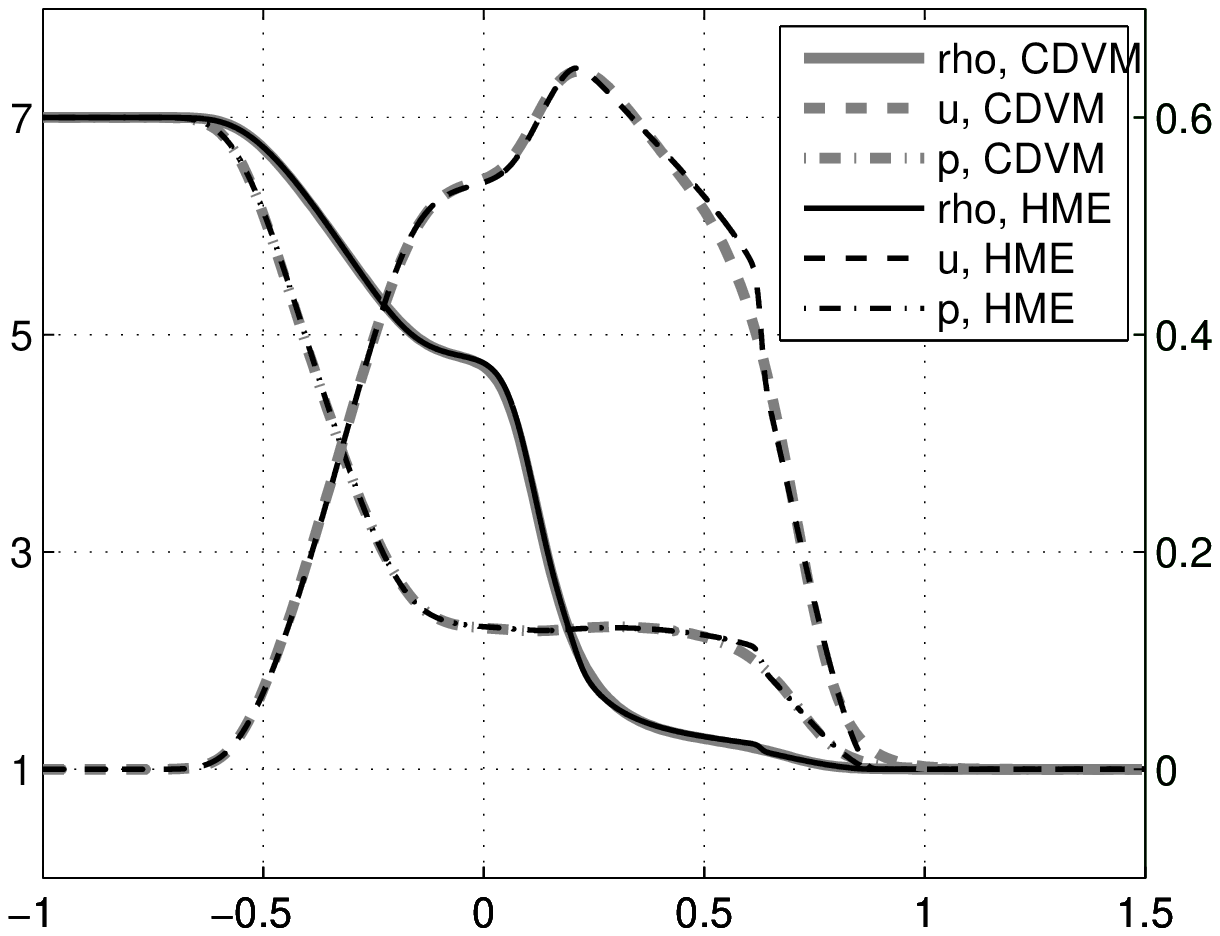}
}
\subfigure[$M=5$]{
\includegraphics[width=.31\textwidth,clip]{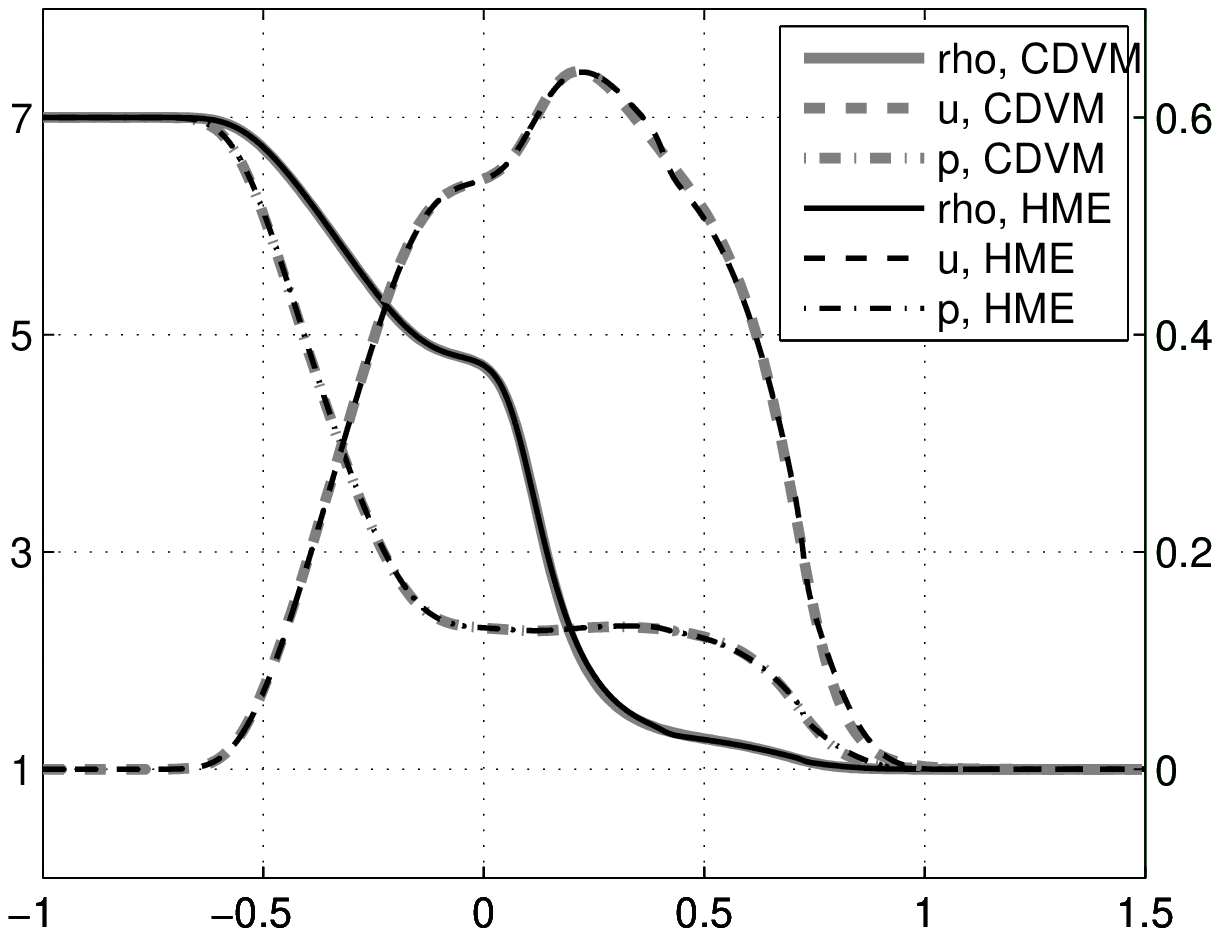}
}
\subfigure[$M=6$]{
\includegraphics[width=.31\textwidth,clip]{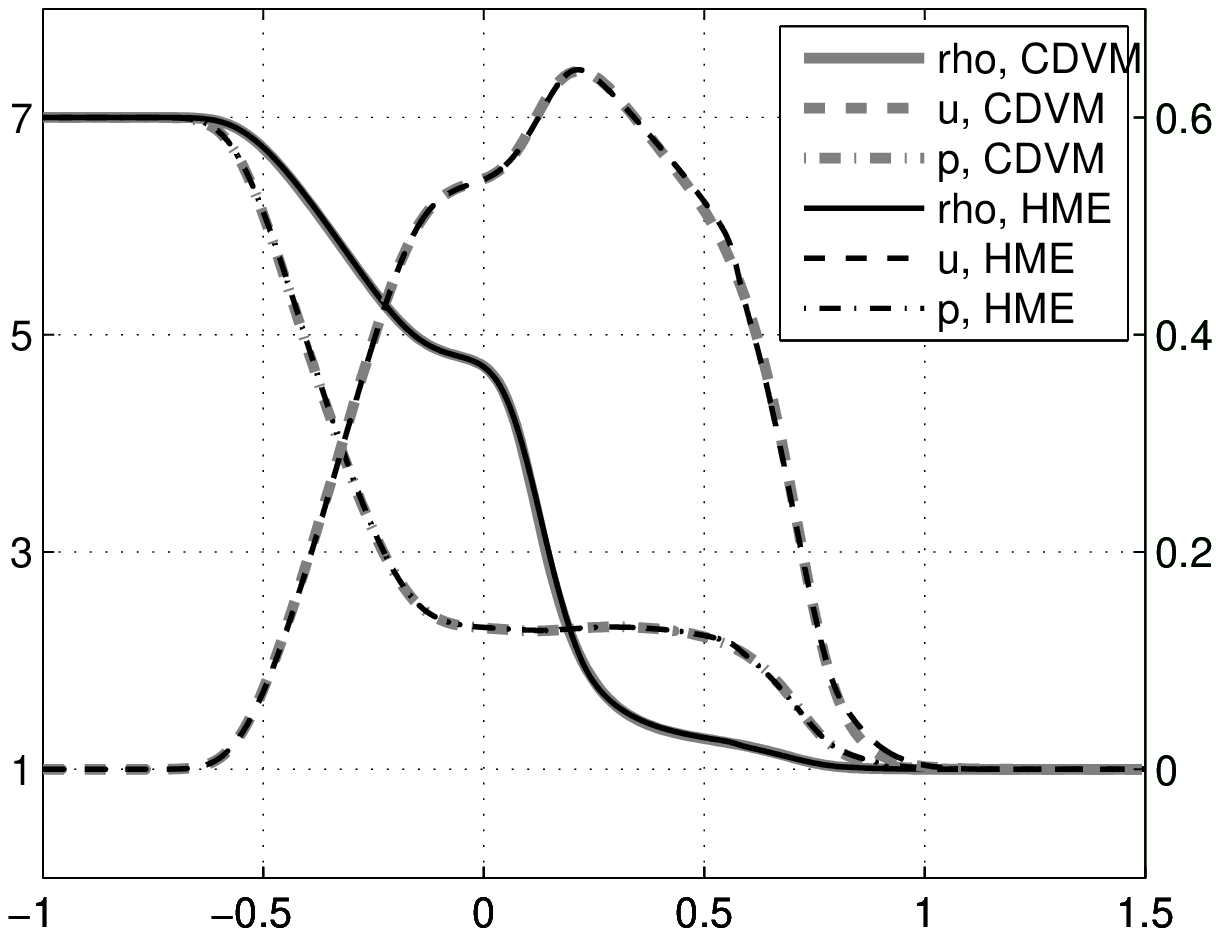}
}
\subfigure[$M=7$]{
\includegraphics[width=.31\textwidth,clip]{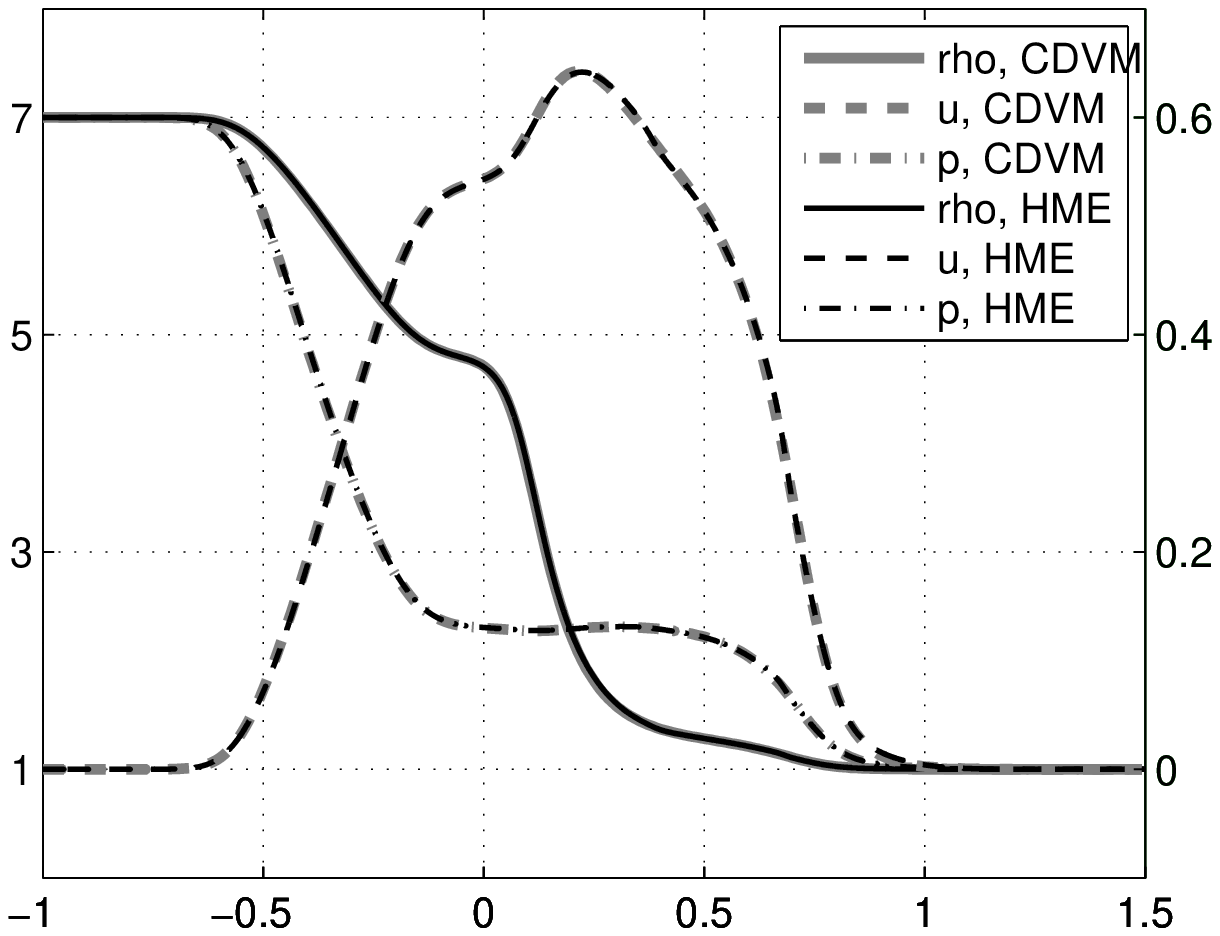}
}
\subfigure[$M=8$]{
\includegraphics[width=.31\textwidth,clip]{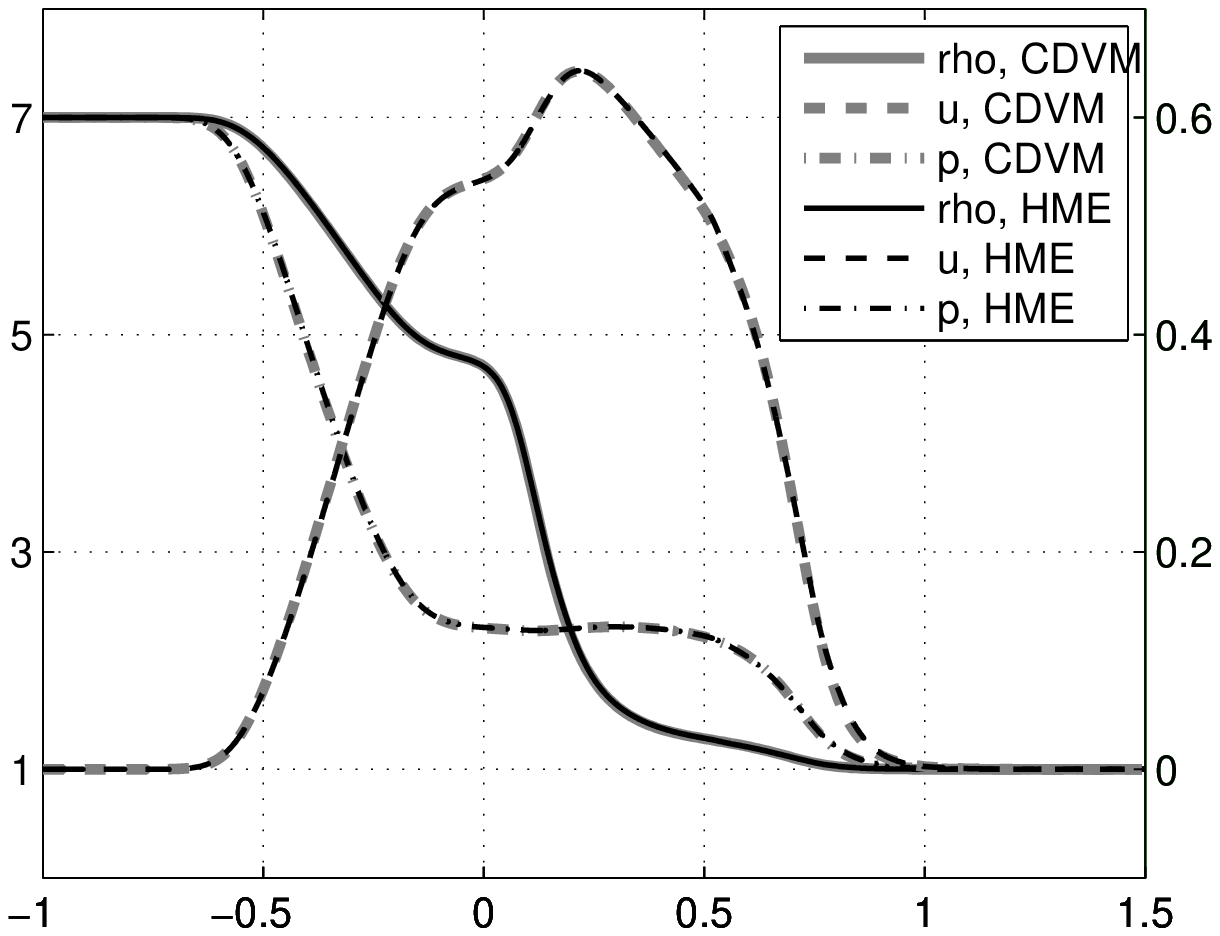}
}
\subfigure[$M=9$]{
\includegraphics[width=.31\textwidth,clip]{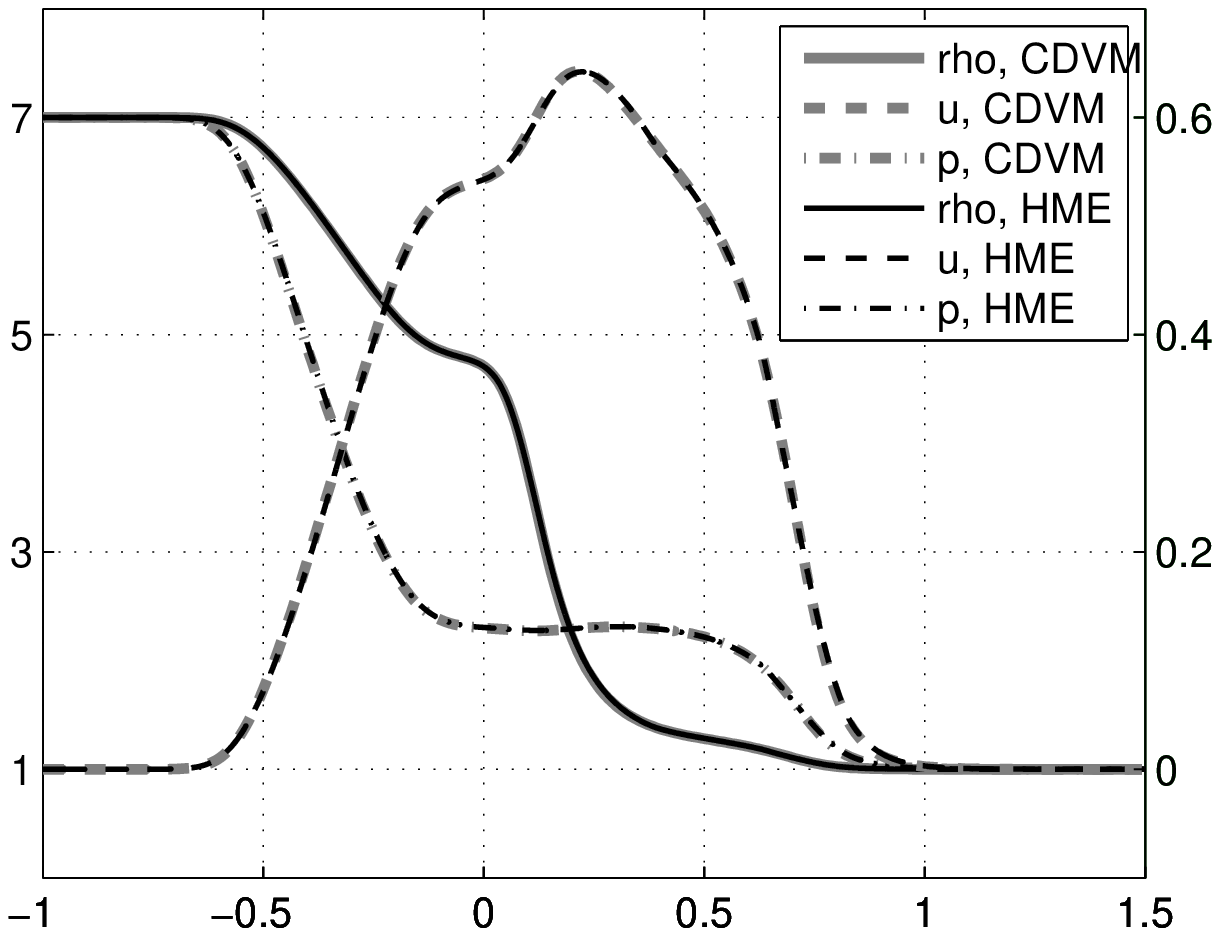}
}
\subfigure[$M=10$]{
\includegraphics[width=.31\textwidth,clip]{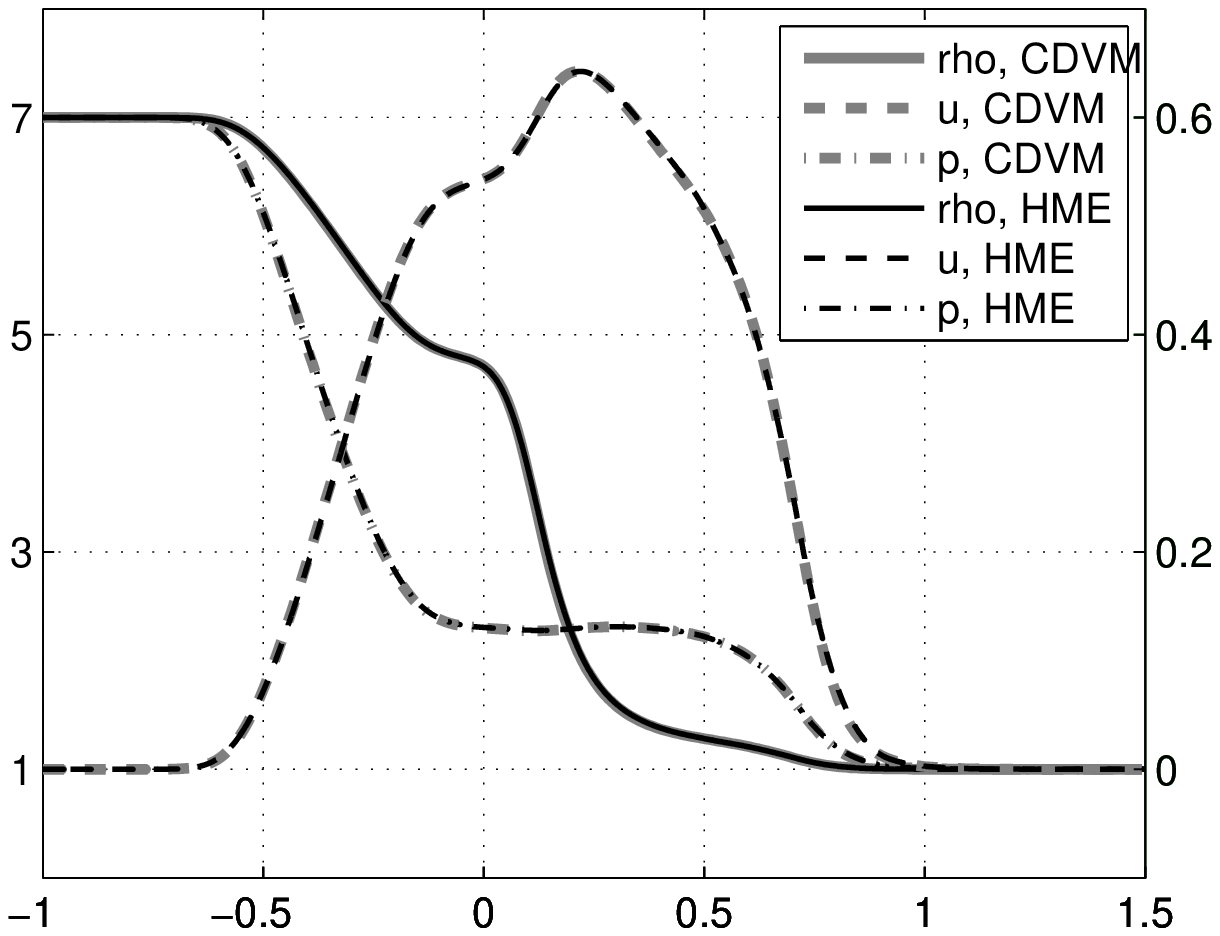}
}
\caption{Numerical results of the shock tube problem for $\Kn = 0.05$.
The left $y$-axis is for $\rho$ and $p$, and the right $y$-axis is for
$u$.}
\label{fig:Kn=0.05}
\end{figure}

For a larger Knudsen number $\Kn = 0.5$, the results are shown in
Figure \ref{fig:Kn=0.5}. These results can also be considered as the
solutions at $t = 0.03$ in the case of $\Kn = 0.05$ (with proper
scaling in the $x$ direction). Thus these actually show the start-up
phases of a shock tube by moment approximation. The discontinuities
are clear for all choices of $M$, and the convergence can also be
observed by eyes.
\begin{figure}[!ht]
\psfrag{rho, CDVM}{\scalebox{.5}{$\rho$, DVM}}
\psfrag{rho, HME}{\scalebox{.5}{$\rho$, HME}}
\psfrag{u, CDVM}{\scalebox{.5}{$u$, DVM}}
\psfrag{u, HME}{\scalebox{.5}{$u$, HME}}
\psfrag{p, CDVM}{\scalebox{.5}{$p$, DVM}}
\psfrag{p, HME}{\scalebox{.5}{$p$, HME}}
\subfigure[$M=2$ (Euler)]{
\includegraphics[width=.31\textwidth,clip]{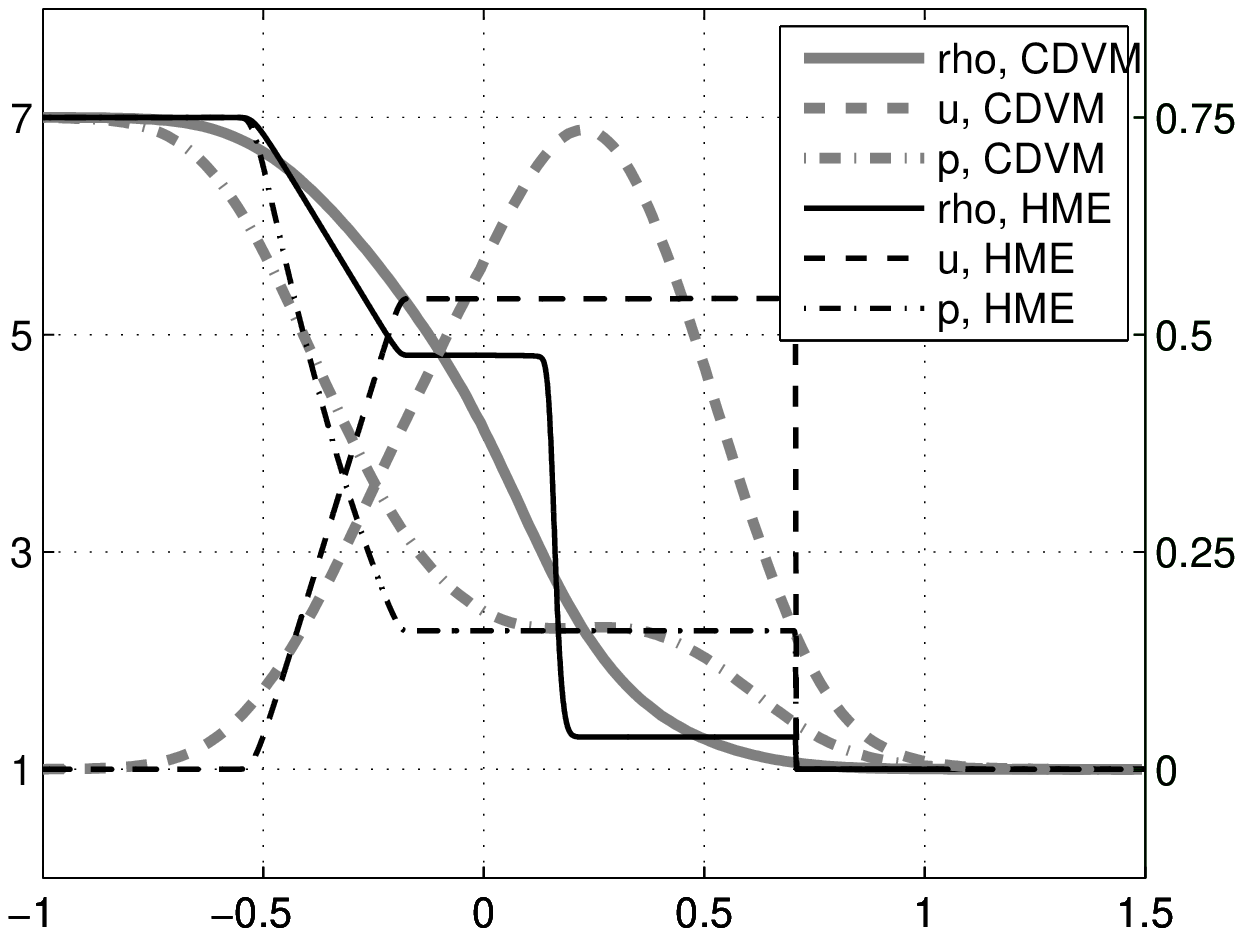}
}
\subfigure[$M=3$]{
\includegraphics[width=.31\textwidth,clip]{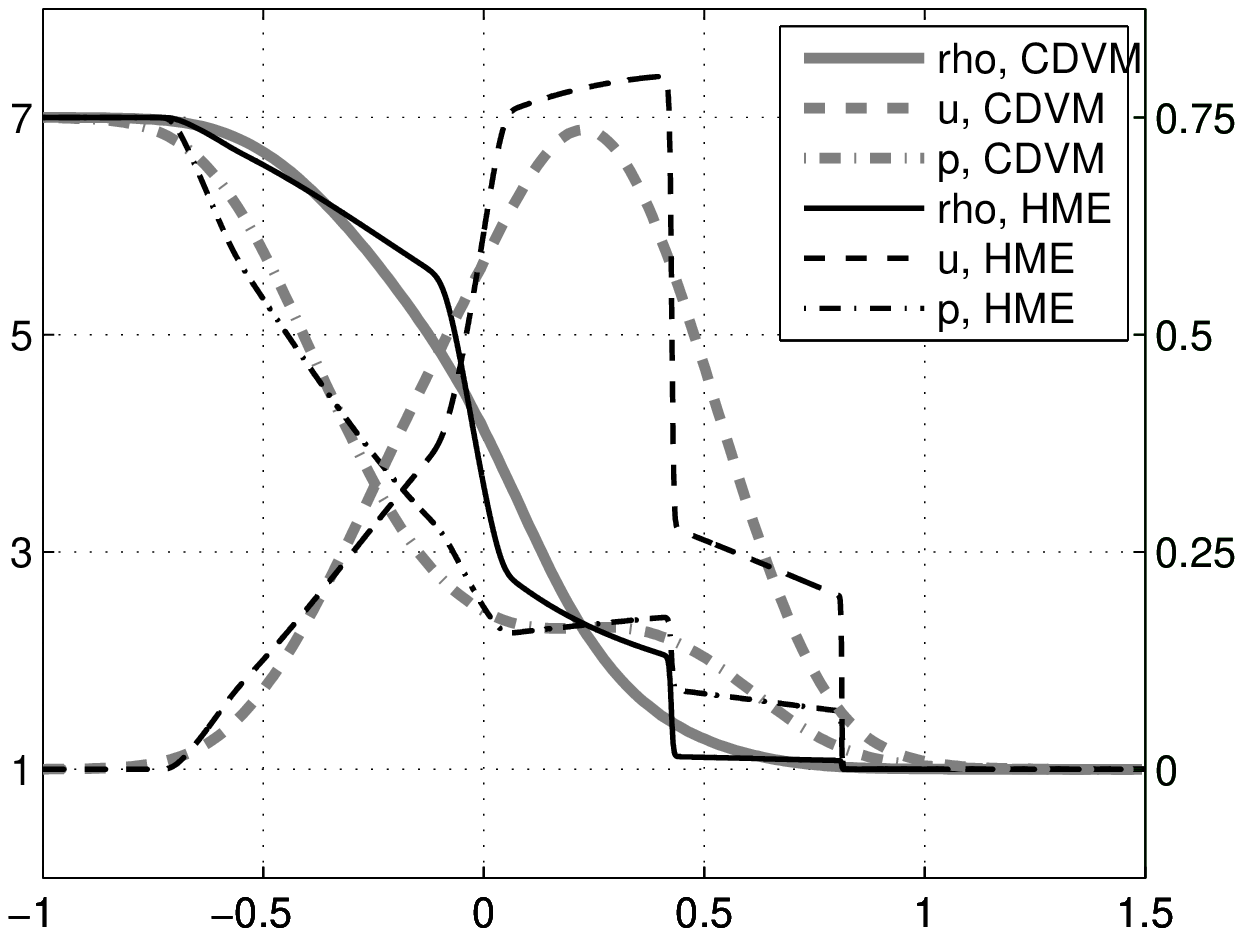}
}
\subfigure[$M=4$]{
\includegraphics[width=.31\textwidth,clip]{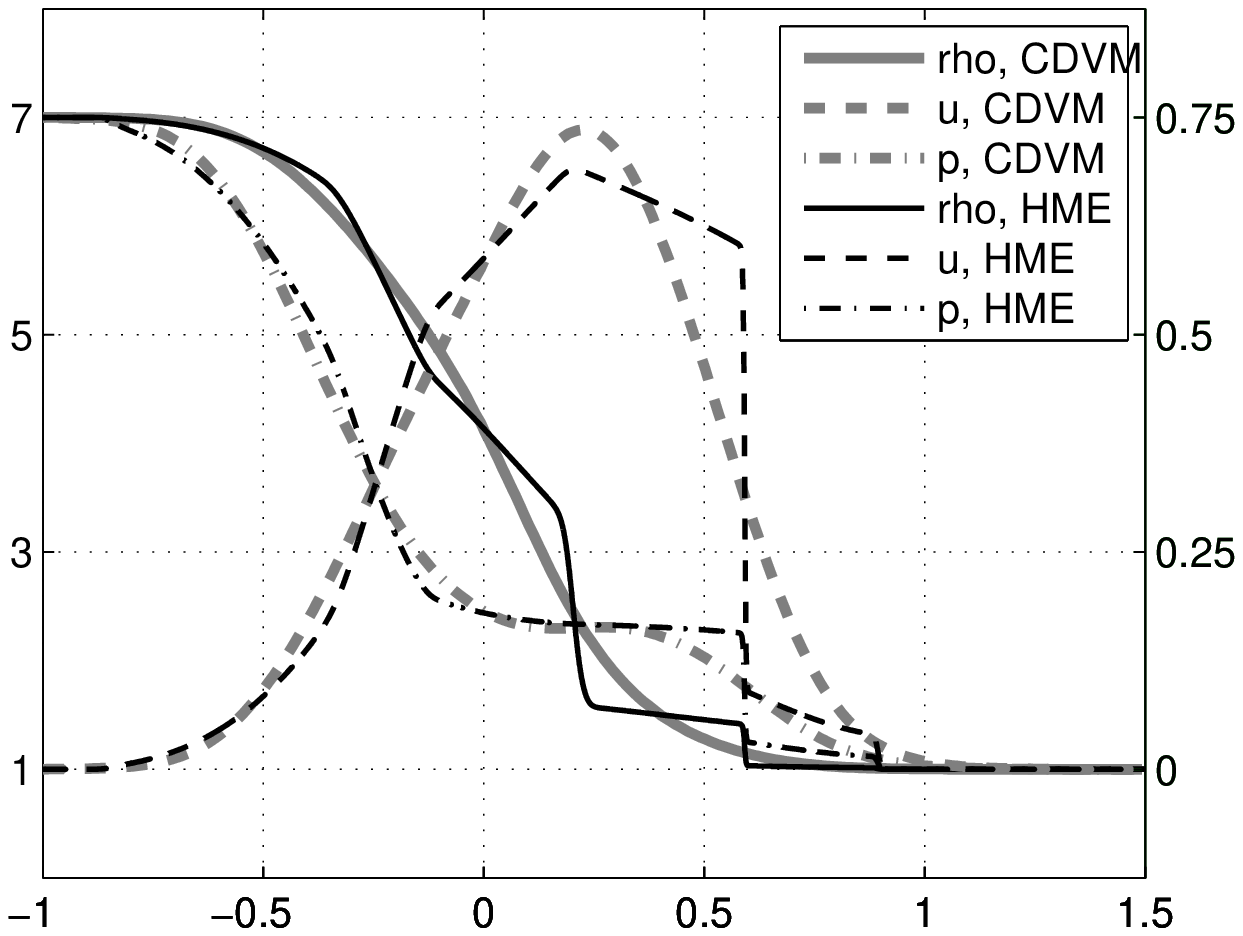}
}
\subfigure[$M=5$]{
\includegraphics[width=.31\textwidth,clip]{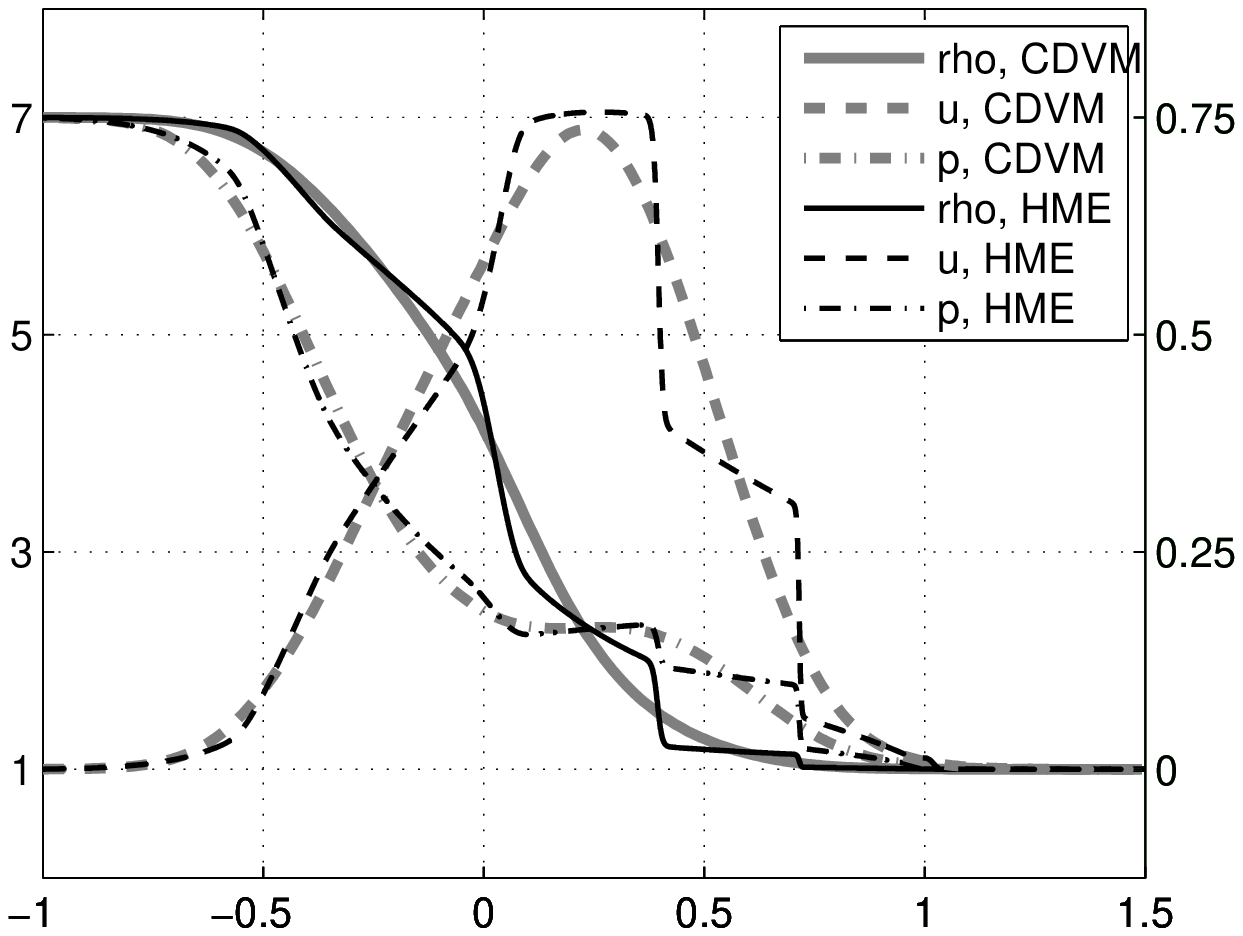}
}
\subfigure[$M=6$]{
\includegraphics[width=.31\textwidth,clip]{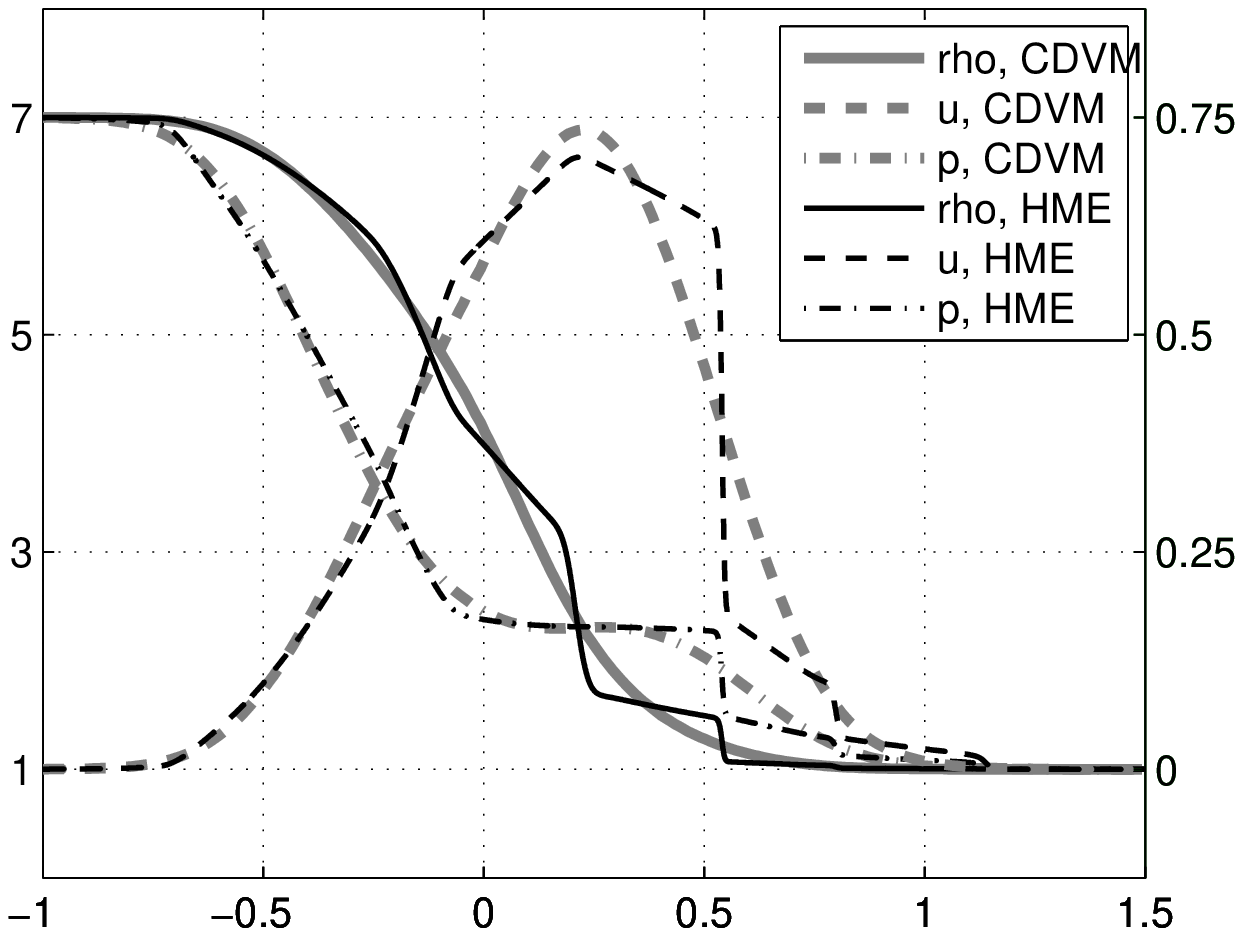}
}
\subfigure[$M=7$]{
\includegraphics[width=.31\textwidth,clip]{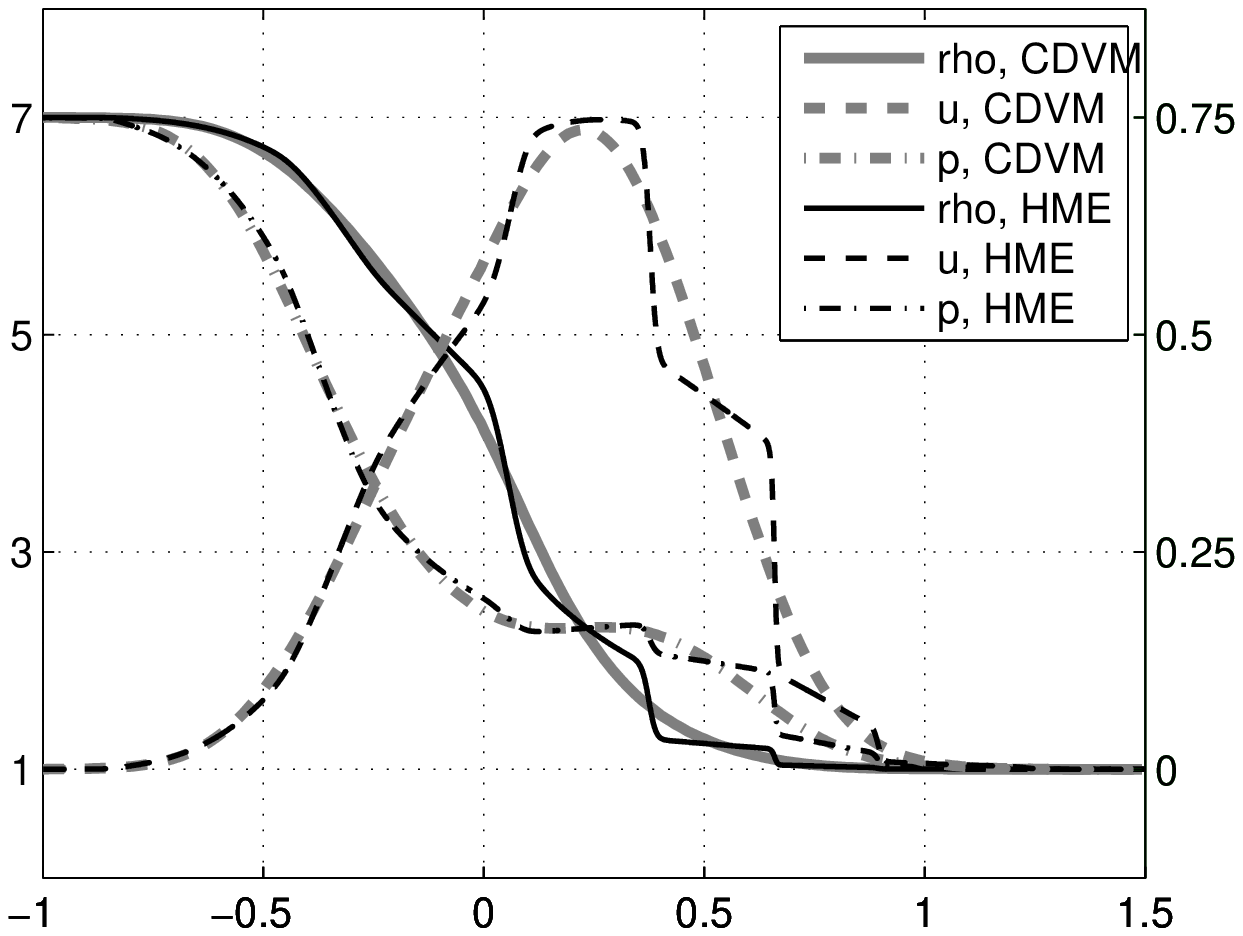}
}
\subfigure[$M=8$]{
\includegraphics[width=.31\textwidth,clip]{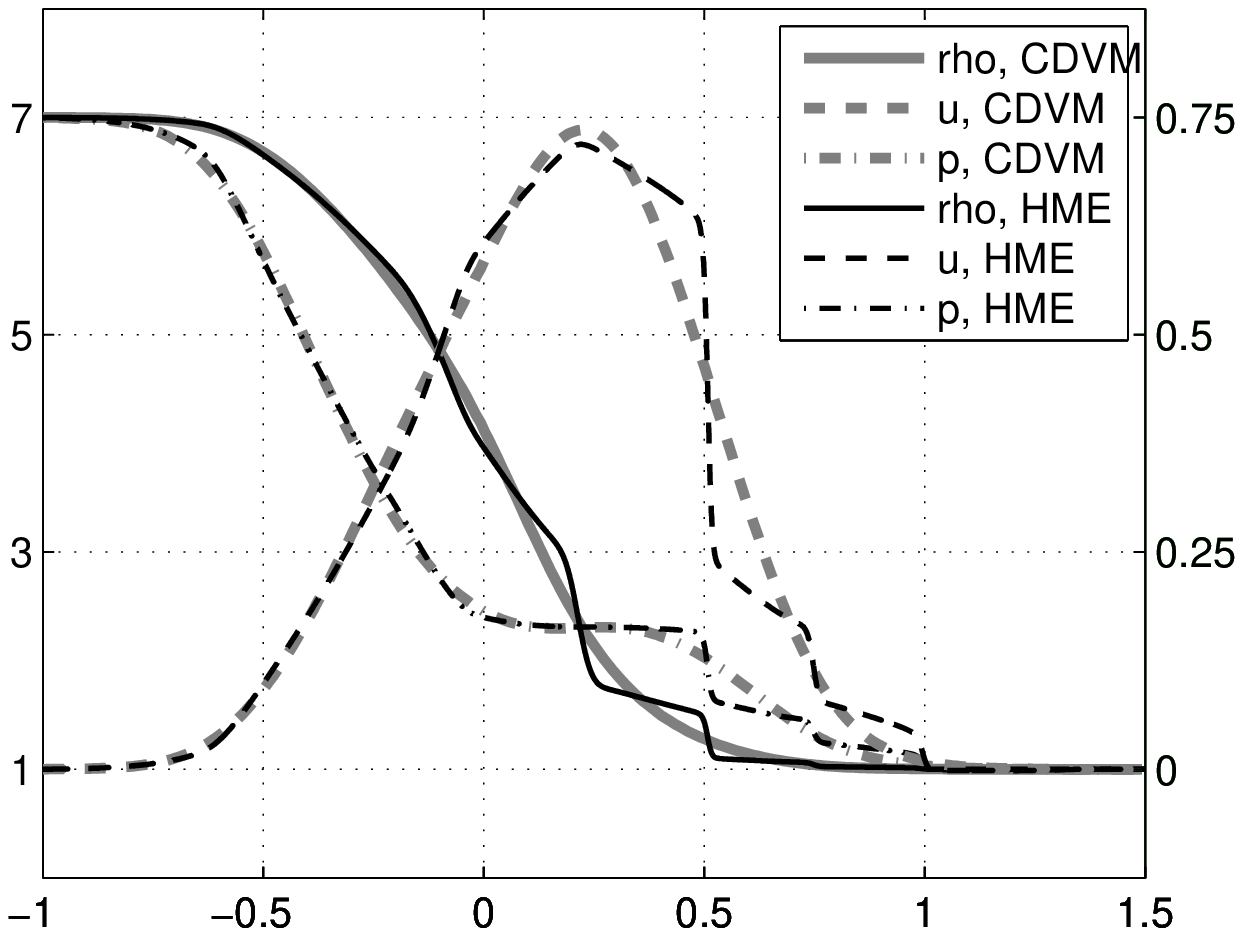}
}
\subfigure[$M=9$]{
\includegraphics[width=.31\textwidth,clip]{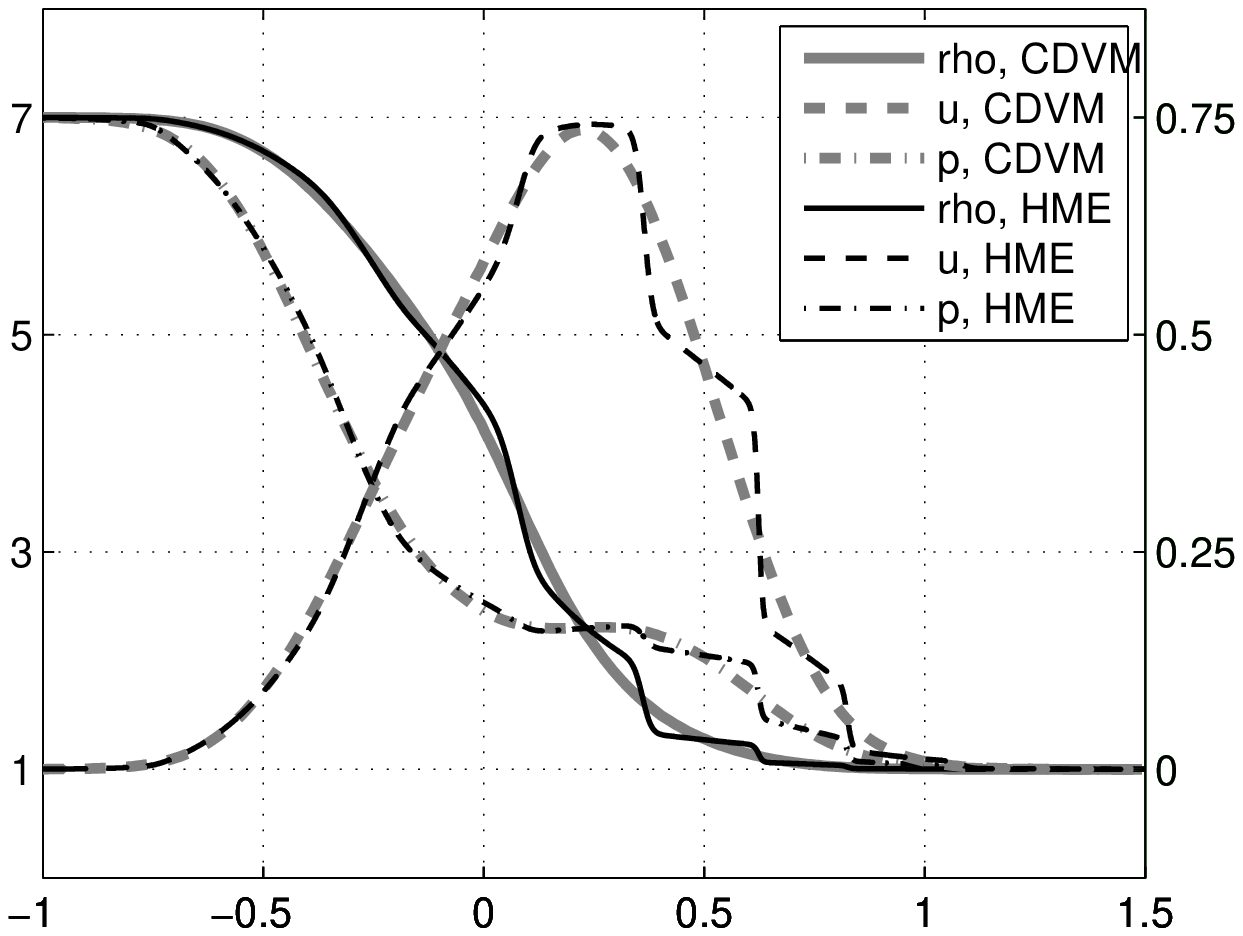}
}
\subfigure[$M=10$]{
\includegraphics[width=.31\textwidth,clip]{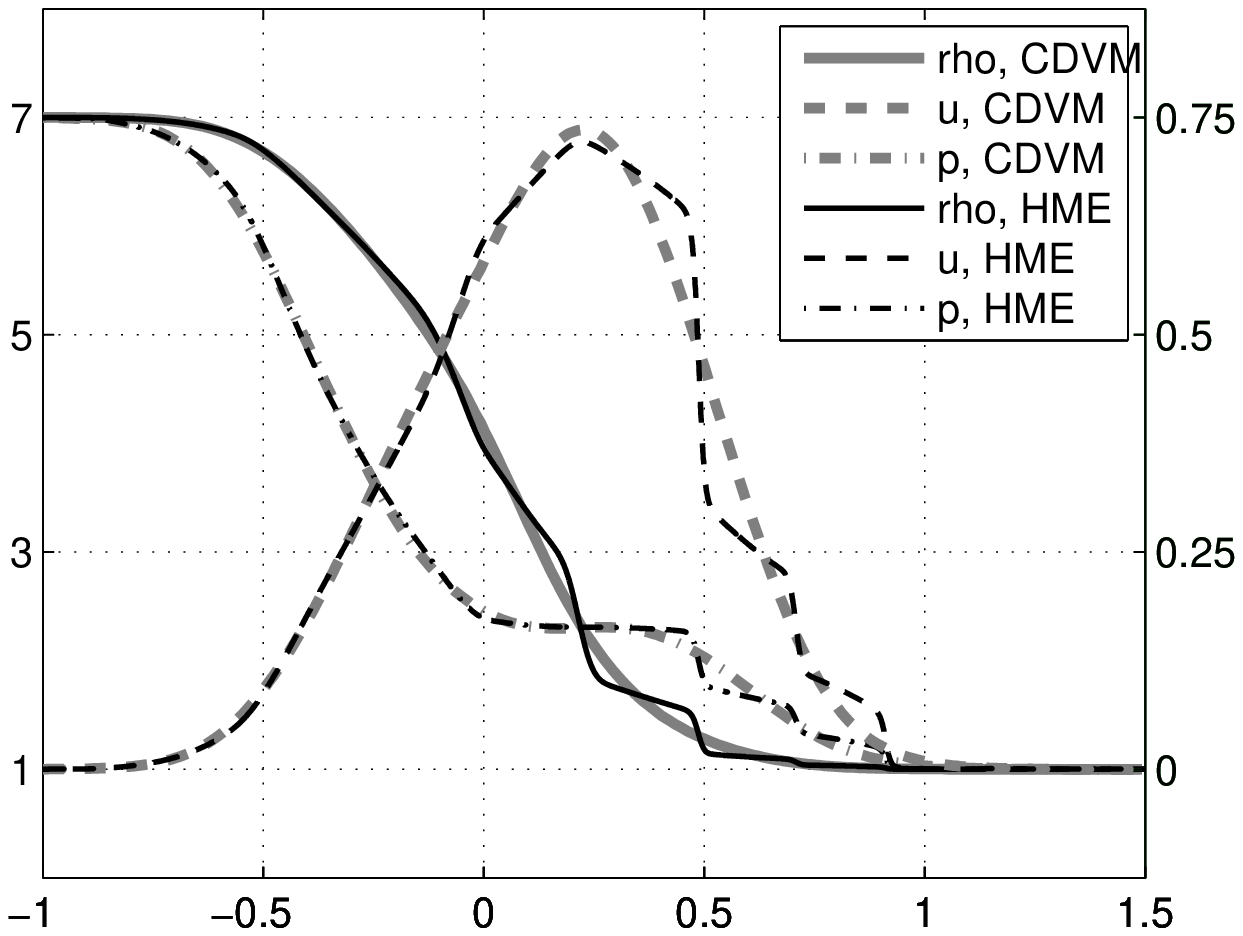}
}
\caption{Numerical results of the shock tube problem for $\Kn = 0.5$.
The left $y$-axis is for $\rho$ and $p$, and the right $y$-axis is for
$u$.}
\label{fig:Kn=0.5}
\end{figure}

\section{Concluding remarks} \label{sec:conclusion} 

We regularize the 1D Grad's moment system to achieve global
hyperbolicity for arbitary order expansion. Fully investigations to
the characteristic waves show that this set of equations may be a
natural extension of Euler equations. Actually, the approach in this
paper has been extended to two or three dimensional Grad's moment
system, and the result is reported in a following paper.

\section*{Acknowledgements}
This research was supported in part by the National Basic Research
Program of China (2011CB309704), the National Science Foundation of
China under grant 10731060 and NCET in China.


\bibliographystyle{plain}
\bibliography{../article}

\begin{thebibliography}{10}

\bibitem{Weiss}
J.~D. Au, M.~Torrilhon, and W.~Weiss.
\newblock The shock tube study in extended thermodynamics.
\newblock {\em Phys. Fluids}, 13(8):2423--2432, 2001.

\bibitem{BGK}
P.~L. Bhatnagar, E.~P. Gross, and M.~Krook.
\newblock A model for collision processes in gases. {I}. small amplitude
  processes in charged and neutral one-component systems.
\newblock {\em Phys. Rev.}, 94(3):511--525, 1954.

\bibitem{Bobylev}
A.~V. Bobylev.
\newblock The {C}hapman-{E}nskog and {G}rad methods for solving the {B}oltzmann
  equation.
\newblock {\em Sov. Phys. Dokl.}, 27(1):29--31, 1982.

\bibitem{Brini}
F.~Brini.
\newblock Hyperbolicity region in extended thermodynamics with 14 moments.
\newblock {\em Continuum Mech. Thermodyn.}, 13(1):1--8, 2001.

\bibitem{Burnett}
D.~Burnett.
\newblock The distribution of velocities in a slightly non-uniform gas.
\newblock {\em Proc. London Math. Soc.}, 39(1):385--430, 1935.

\bibitem{NRxx}
Z.~Cai and R.~Li.
\newblock Numerical regularized moment method of arbitrary order for
  {B}oltzmann-{BGK} equation.
\newblock {\em SIAM J. Sci. Comput.}, 32(5):2875--2907, 2010.

\bibitem{Li}
Z.~Cai, R.~Li, and Z.~Qiao.
\newblock {\NRxx} simulation of microflows with {S}hakhov model.
\newblock {\em SIAM J. Sci. Comput.}, 34(1):A339--A369, 2012.

\bibitem{NRxx_new}
Z.~Cai, R.~Li, and Y.~Wang.
\newblock Numerical regularized moment method for high {M}ach number flow.
\newblock {\em Commun. Comput. Phys.}, 11(5):1415--1438, 2012.

\bibitem{Grad}
H.~Grad.
\newblock On the kinetic theory of rarefied gases.
\newblock {\em Comm. Pure Appl. Math.}, 2(4):331--407, 1949.

\bibitem{Jin}
S.~Jin and M.~Slemrod.
\newblock Regularization of the {B}urnett equations via relaxation.
\newblock {\em J. Stat. Phys}, 103(5--6):1009--1033, 2001.

\bibitem{Junk}
M.~Junk.
\newblock Domain of definition of {L}evermore's five-moment system.
\newblock {\em J. Stat. Phys.}, 93(5):1143--1167, 1998.

\bibitem{Levermore}
C.~D. Levermore.
\newblock Moment closure hierarchies for kinetic theories.
\newblock {\em J. Stat. Phys.}, 83(5--6):1021--1065, 1996.

\bibitem{Maso}
G.~Dal Maso, P.~G. LeFloch, and F.~Murat.
\newblock Definition and weak stability of nonconservative products.
\newblock {\em J. Math. Pures Appl.}, 74(6):483--548, 1995.

\bibitem{Mieussens}
L.~Mieussens.
\newblock Discrete velocity model and implicit scheme for the {BGK} equation of
  rarefied gas dynamics.
\newblock {\em Math. Models Methods Appl. Sci.}, 10(8):1121--1149, 2000.

\bibitem{Muller}
I.~M{\"u}ller and T.~Ruggeri.
\newblock {\em Rational Extended Thermodynamics, Second Edition}, volume~37 of
  {\em Springer tracts in natural philosophy}.
\newblock Springer-Verlag, New York, 1998.

\bibitem{Rhebergen}
S.~Rhebergen, O.~Bokhove, and J.~J.~W. van~der Vegt.
\newblock Discontinuous {G}alerkin finite element methods for hyperbolic
  nonconservative partial differential equations.
\newblock {\em J. Comput. Phys.}, 227(3):1887--1922, 2008.

\bibitem{Shen}
J.~Shen and T.~Tang.
\newblock {\em Spectral and High-Order Methods with Applications}, volume~3 of
  {\em Mathematics Monograph Series}.
\newblock Science Press, Beijing, P. R. China, 2006.

\bibitem{Struchtrup}
H.~Struchtrup.
\newblock {\em Macroscopic Transport Equations for Rarefied Gas Flows:
  Approximation Methods in Kinetic Theory}.
\newblock Springer, 2005.

\bibitem{Struchtrup2003}
H.~Struchtrup and M.~Torrilhon.
\newblock Regularization of {G}rad's 13 moment equations: Derivation and linear
  analysis.
\newblock {\em Phys. Fluids}, 15(9):2668--2680, 2003.

\bibitem{Tallec}
P.~L. Tallec and J.~P. Perlat.
\newblock Numerical analysis of {L}evermore's moment system.
\newblock Rapport de recherche 3124, INRIA Rocquencourt, March 1997.

\bibitem{Torrilhon2000}
M.~Torrilhon.
\newblock Characteristic waves and dissipation in the 13-moment-case.
\newblock {\em Continuum Mech. Thermodyn.}, 12(5):289--301, 2000.

\bibitem{Torrilhon2006}
M.~Torrilhon.
\newblock Two dimensional bulk microflow simulations based on regularized
  {G}rad's 13-moment equations.
\newblock {\em SIAM Multiscale Model. Simul.}, 5(3):695--728, 2006.

\bibitem{Torrilhon2010}
M.~Torrilhon.
\newblock Hyperbolic moment equations in kinetic gas theory based on
  multi-variate {P}earson-{IV}-distributions.
\newblock {\em Commun. Comput. Phys.}, 7(4):639--673, 2010.

\bibitem{Au}
M.~Torrilhon, J.~Au, D.~Reitebuch, and W.~Weiss.
\newblock The {R}iemann-problem in extended thermodynamics.
\newblock In H.~Freistu{\"u}hler and G.~Warnecke, editors, {\em Hyperbolic
  Problems: Theory, Numerics, Applications, Vols {I} and {II}}, volume 140 of
  {\em International series of numerical mathematics}, pages 79--88.
  Birkh{\"a}user, 2001.

\end{thebibliography}
\end{document}